\documentclass[sigplan, 10pt]{acmart}
\settopmatter{authorsperrow=3}
\usepackage{graphicx}
\usepackage{tabularx}
\usepackage{balance}  
\usepackage[misc,geometry]{ifsym}

\usepackage{subfig}
\usepackage{wrapfig}
\usepackage{float}
\usepackage{multirow, graphicx}
\usepackage{algorithm}
\usepackage{algpseudocode} 
\usepackage{amsmath}

\setlength{\belowcaptionskip}{2pt}
\setlength{\abovecaptionskip}{2pt}
\setlength{\floatsep}{0.25\floatsep}
\setlength{\dblfloatsep}{0.25\dblfloatsep}
\setlength{\textfloatsep}{0.25 \textfloatsep}
\setlength{\dbltextfloatsep}{0.25\dbltextfloatsep}
\setlength{\intextsep}{0.25\intextsep}





\usepackage{xspace}

\usepackage{url}

\usepackage{breakurl}

\usepackage{sty/widetext} 

\usepackage{stackrel}


\thispagestyle{plain}
\pagestyle{plain} 

\ifdefined\TTSTYLETARGETreport
\usepackage[
    backend=bibtex,
    maxbibnames=100,
    firstinits=true,
    bibstyle=numeric,
    citestyle=numeric
]{biblatex}
\addbibresource{ads.bib}
\addbibresource{lsm.bib}
\addbibresource{bkc.bib}
\addbibresource{odb.bib}
\addbibresource{cacheattacks.bib}
\addbibresource{sc.bib}
\addbibresource{crypto.bib}
\addbibresource{sgx.bib}
\addbibresource{diffpriv.bib}
\addbibresource{txtbk.bib}
\addbibresource{distrkvs.bib}
\addbibresource{vc.bib}
\addbibresource{yuzhetang.bib}
\fi

\usepackage{graphicx, multirow}

\usepackage{tabularx, multirow, rotating} 
\usepackage{subfig} 
\usepackage{algorithm} 
\usepackage{algpseudocode} 
\usepackage{ulem} 
\usepackage{color} 

\usepackage{listings} 
\lstdefinestyle{Oracle}{basicstyle=\ttfamily,
                        keywordstyle=\lstuppercase,
                        emphstyle=\itshape,
                        showstringspaces=true,
                        }
\makeatletter
\newcommand{\lstuppercase}{\uppercase\expandafter{\expandafter\lst@token
                           \expandafter{\the\lst@token}}}
\newcommand{\lstlowercase}{\lowercase\expandafter{\expandafter\lst@token
                           \expandafter{\the\lst@token}}}
\makeatother

\usepackage{tikz}

\newcommand{\ballnumber}[1]{\tikz[baseline=(myanchor.base)] \node[circle,fill=.,inner sep=1pt] (myanchor) {\color{-.}\bfseries\scriptsize #1};}

\newif\ifboldnumber

\algrenewcommand\alglinenumber[1]{%
  \footnotesize\ifboldnumber\bfseries\fi\global\boldnumberfalse#1:}


\sloppy

\newtheorem{theorem}{Theorem}[section]


\usepackage{todonotes}
\AtBeginDocument{%
  \providecommand\BibTeX{{%
    \normalfont B\kern-0.5em{\scshape i\kern-0.25em b}\kern-0.8em\TeX}}}

%



\settopmatter{printacmref=false}
\setcopyright{none}
\settopmatter{printfolios=true}

\begin{document}
\setcounter{page}{1}

\title{
Cost-Effective Data Feeds to Blockchains via Workload-Adaptive Data Replication
}

\author{Kai Li}
\email{kli111@syr.edu}
\affiliation{%
  \institution{Syracuse University}
  \city{Syracuse}
  \state{NY}
  \postcode{13244}
}

\author{Yuzhe Tang} 
\authornote{\Letter\ \xspace{}Corresponding author}
\email{ytang100@syr.edu}
\affiliation{%
  \institution{Syracuse University}
  \city{Syracuse}
  \state{NY}
  \postcode{13244}
}

\author{Jiaqi Chen}
\email{jchen217@syr.edu}
\affiliation{%
  \institution{Syracuse University}
  \city{Syracuse}
  \state{NY}
  \postcode{13244}
}

\author{Zhehu Yuan}
\authornote{Work is done when the author is an undergraduate student at Syracuse University.}
\email{zy2262@nyu.edu}
\affiliation{%
  \institution{New York University}
  \city{New York}
  \state{NY}
  \postcode{10012}
}

\author{Cheng Xu}
\email{chengxu@comp.hkbu.edu.hk}
\affiliation{%
  \institution{Hong Kong Baptist University}
  \city{Kowloon}
  \state{Hong Kong}
  \postcode{SAR}
}

\author{Jianliang Xu}
\email{xujl@hkbu.edu.hk}
\affiliation{%
  \institution{Hong Kong Baptist University}
  \city{Kowloon}
  \state{Hong Kong}
  \postcode{SAR}
}



\begin{abstract}
Feeding external data to a blockchain, a.k.a. data feed, is an essential task to enable blockchain interoperability and support emerging cross-domain applications, notably stablecoins. Given the data-intensive feeds in real life (e.g., high-frequency price updates) and the high cost in using blockchain, namely Gas, it is imperative to reduce the Gas cost of data feeds. Motivated by the constant-changing workloads in finance and other applications, this work focuses on designing a {\it dynamic, workload-aware} approach for cost effectiveness in Gas. This design space is understudied in the existing blockchain research which has so far focused on static data placement.

This work presents GRuB, a cost-effective data feed that dynamically replicates data between the blockchain and an off-chain cloud storage. GRuB's data replication is workload-adaptive by monitoring the current workload and making online decisions w.r.t. data replication. A series of online algorithms are proposed that achieve the bounded worst-case cost in blockchain's Gas. GRuB runs the decision-making components on the untrusted cloud off-chain for lower Gas costs, and employs a security protocol to authenticate the data transferred between the blockchain and cloud. The overall GRuB system can autonomously achieve low Gas costs with changing workloads. 

We built a GRuB prototype functional with Ethereum and Google LevelDB, and supported real applications in stablecoins. Under real workloads collected from the Ethereum contract-call history and mixed workloads of YCSB, we systematically evaluate GRuB's cost which shows a saving of Gas by $10\%\sim{}74\%$, with comparison to the baselines of static data-placement. 
\end{abstract}

\maketitle


\newcommand{\tremark}[1]{\footnote{\textcolor{red}{(Ting's comment: #1)}}}
\newcommand{\xremark}[1]{\footnote{\textcolor{red}{(Xin's comment: #1)}}}
\newcommand{\jj}[1]{\footnote{\textcolor{blue}{(Jiyong: #1)}}}
\newcommand{\yz}[1]{\footnote{\textcolor{red}{(Yuzhe: #1)}}}

\definecolor{mygreen}{rgb}{0,0.6,0}
\lstset{ %
  backgroundcolor=\color{white},   
  basicstyle=\scriptsize\ttfamily,        
  breakatwhitespace=false,         
  breaklines=true,                 
  captionpos=b,                    
  commentstyle=\color{mygreen},    
  deletekeywords={...},            
  escapeinside={\%*}{*)},          
  extendedchars=true,              
  keepspaces=true,                 
  keywordstyle=\color{blue},       
  language=Java,                 
  morekeywords={*,...},            
  numbers=left,                    
  numbersep=5pt,                   
  numberstyle=\scriptsize\color{black}, 
  rulecolor=\color{black},         
  showspaces=false,                
  showstringspaces=false,          
  showtabs=false,                  
  stepnumber=1,                    
  stringstyle=\color{mymauve},     
  tabsize=2,                       
  title=\lstname,                  
  moredelim=[is][\bf]{*}{*},
}

\providecommand{\lkvPut}{\textsc{Put}\xspace}
\providecommand{\lkvGet}{\textsc{Get}\xspace}
\providecommand{\lkvScan}{\textsc{ScanQ}\xspace}

\providecommand{\lkvPutTwo}{\textsc{bPut2}\xspace}
\providecommand{\lkvGetTwo}{\textsc{qGet2}\xspace}
\providecommand{\lkvScanTwo}{\textsc{qScan2}\xspace}

\section{Introduction}
A smart contract is a user program that runs on a blockchain, such as Ethereum~\cite{me:eth} and EOS.IO~\cite{me:eosio}. It holds the promises to expand the blockchain's functionalities from the basic cryptocurrency payments to broader applications in decentralized finance, supply chains, online gaming, et al. Feeding external data onto the blockchain, a.k.a. data feed, is an essential task to enable these blockchain applications. Today, data feeds are widely adopted, notably in decentralized finance. 
For instance, stablecoins, a cryptocurrency with stable price that sees an explosion of interest (as in Facebook's Libra~\cite{me:libra}) and deployment (as in the popular DAI~\cite{me:dai:etherscan} and Tether~\cite{me:tether:etherscan} tokens on Ethereum) since 2019, require feeding real-world asset prices to the blockchain, for instance the {\it Ether-price feed} used in DAI~\cite{me:dai:etherscan}. For another instance, to enable asset exchange across different blockchains, say allowing a Bitcoin owner to transact with an Ethereum token owner, it entails a {\it ``side-chain'' feed} such as BtcRelay~\cite{me:btcrelay,me:btcrelay:github,DBLP:conf/sp/ZamyatinHLPGK19} to send the recently found Bitcoin blocks onto Ethereum for verifying Bitcoin deposit.
There are many other blockchain applications that have been or can be enabled by data feeds, including decentralized insurance~\cite{DBLP:conf/ccs/ZhangCCJS16}, tracing supply-chains~\cite{Feng_Tian_2016,me:ibmmaersk}, healthcare~\cite{DBLP:conf/healthcom/Mettler16}, transparency logging~\cite{me:keybase,DBLP:conf/sp/TomescuD17,DBLP:conf/usenix/AliNSF16}, trustless information-security~\cite{me:keybase:bkc}, et al.

%
Operating today's data feeds can be an expensive business. Specifically, many real-world data feeds generate an intensive stream of data updates at a high frequency (e.g., the price updates in seconds and microseconds). Under these data-intensive streams, data feeds, if improperly designed, could cause a heavy use of blockchain and lead to high monetary cost, known as Gas~\cite{wood2014ethereum}. 
{\color{black}
The expense burdens not only data-feed operators (e.g., ChainLink and MakerDAO) but also the financial applications running on top of the data feeds (e.g., decentralized exchanges such as AmpleForth and Syntheix~\cite{DBLP:journals/corr/abs-2005-04377}), eventually transferring to high fees for end users (e.g., users of decentralized exchanges)}. It is thus imperative to design cost-effective data feeds for scaling blockchain applications to support real-world data-intensive scenarios.

The goal of this work is to explore how a {\it dynamic, workload-aware} design of data feed can effectively save Gas. The design goal is motivated by 1) the observation that real-world financial applications exhibit highly dynamic workload patterns, which present opportunities to reduce costs --- Intuitively, if one can dynamically adjust the location of the data feeds (w.r.t. the blockchain) according to the current data supply-demand, the Gas cost caused by the repeated use of blockchains could be avoided. See the next two paragraphs for a detailed justification. 2) 
Furthermore, the design space of a workload-aware approach has not been studied in the existing blockchain-systems research. While there is a large body of research works on reducing blockchain costs, notably the layer-two protocols exemplified by payment channels~\cite{me:lightning,poon2016bitcoin,DBLP:journals/corr/MillerBKM17,DBLP:journals/corr/abs-1804-05141} that aim to place application logic off the blockchain, all existing approaches are based on static data placement. That is, the placement of data and computation w.r.t. blockchains stays fixed once the system starts running, and it does not reflect the constant change in the workloads. The design space of a dynamic, workload-aware approach to optimize the smart-contract cost (in Gas) for data feeds and blockchain applications beyond is an uncharted territory. 

This work presents GRuB, a workload-adaptive data replication framework for cost-effective data feeding. 
The system model is a data pipeline involving three actors: As illustrated in the left part of Figure~\ref{fig:systemmodel:v4}, an off-chain {\it data producer (DO)} feeds a stream of data updates to multiple {\it data-consumer smart contracts (DUs)} on the blockchain. The data flow is coordinated by an intermediary{\it key-value (KV) store} between the DO and DUs. 
A conventional design of data feed is to realize the KV store in a smart contract that accepts DO's data updates in transactions and DU's queries in contract internal calls. An alternative design is to statically place the KV store off the blockchain (e.g., the static off-chain feed, TownCrier~\cite{DBLP:conf/ccs/ZhangCCJS16}). By contrast, GRuB is a KV store built on {\it hybrid} storage media: By default, the data updates are persisted on an off-chain cloud storage provider (SP) such as Amazon S3~\cite{me:amazons3} and upon DU's queries, are brought to the blockchain, buffered in a smart-contract memory. Optionally, the buffered data can be persisted to the smart-contract storage, as a data replica, to benefit future read queries. GRuB's system model is illustrated by the right part of Figure~\ref{fig:systemmodel:v4}.
{\color{black} Note that our system model assuming any cloud provider is untrusted should be differentiated from the multi-cloud model adopted in existing works~\cite{DBLP:journals/tos/BessaniCQAS13,DBLP:conf/cloud/Abu-LibdehPW10} which trusts at least one cloud provider.}

\begin{figure}[!bhtp]
\centering
\includegraphics[width=0.5\textwidth]{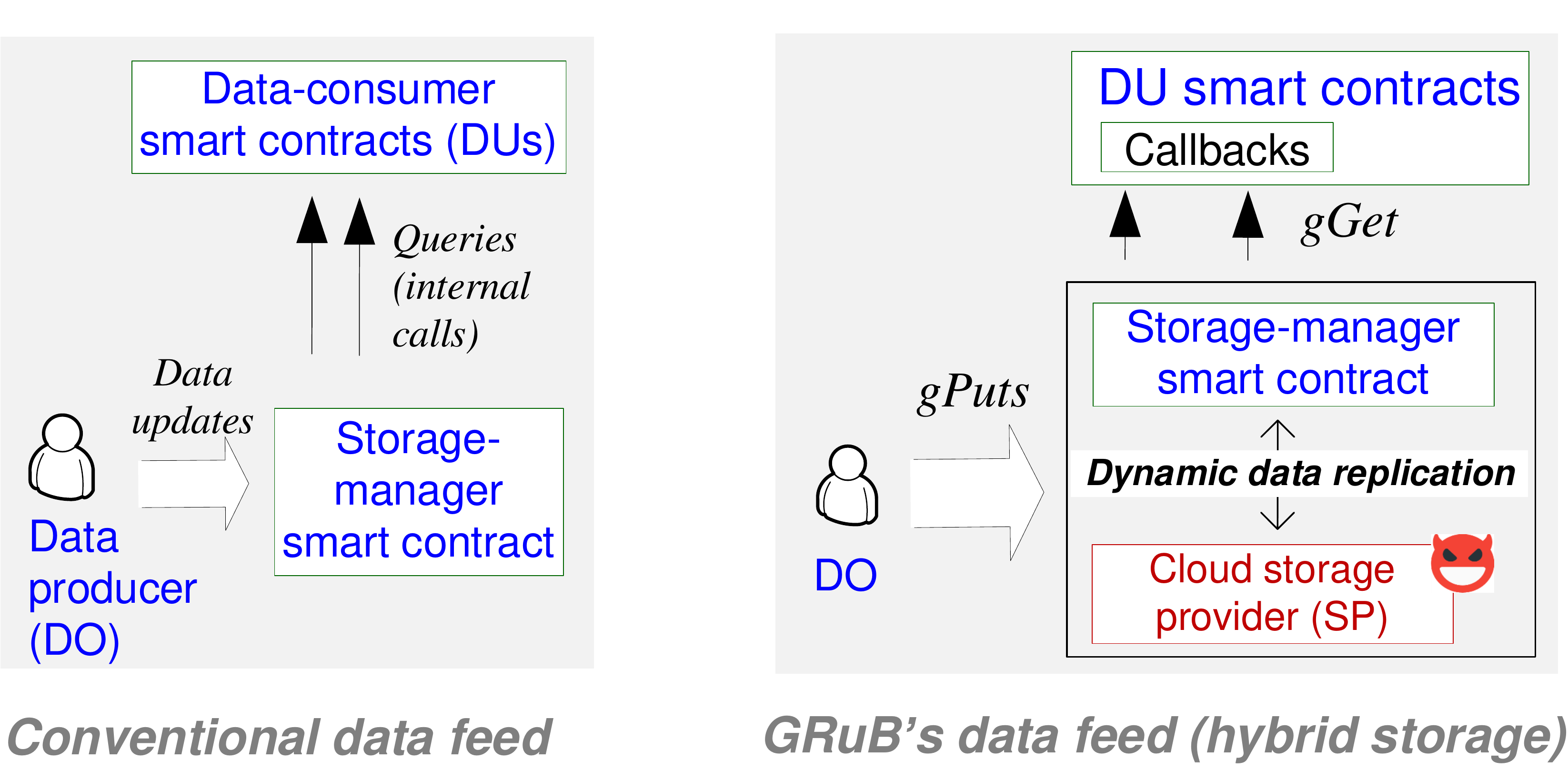}
\caption{
System model: The conventional data feed (left part) shows an on-chain KV store in a smart contract that mediate between an off-chain data producer (DO) and multiple on-chain data-consumer smart contracts (DUs). 
GRuB (the right part) introduces an off-chain cloud storage provider (SP) in the system model, which, together with a smart contract, provides a hybrid KV store that exposes a remote-procedure call interface (i.e., \texttt{gPuts}) to the DO and an internal-call interface (i.e., \texttt{gGet} with callbacks) to the DU smart contracts. With the default storage on SP, the fed data is dynamically replicated onto the blockchain by demand (in blue). 
Green in this figure illustrates smart contracts running on a blockchain and red is the SP who is the primary adversary in our trust model.}
\label{fig:systemmodel:v4}
\end{figure}

The key decision to make in GRuB is whether and when a data record in the feed should be replicated onto the smart-contract storage on a blockchain. 
{\it Always} storing a replica of the data being read, on the one hand, can benefit future data reads by avoiding loading data onto the blockchain repeatedly. On the other hand, if there are no future reads, such a data replica would be wasted. 
Thus, GRuB chooses to replicate data in a {\it workload-adaptive} manner: If the current workload is dominated by the reads from DUs, the GRuB would decide to store a data replica on the blockchain. Otherwise, if the current workload is dominated by the updates from the DO, the GRuB would decide to avoid replicating data on chain.
This design systematically avoids the two most expensive operations in Gas. That is, replicating data on chain under read-intensive workloads can avoid the expensive transactions otherwise needed to bring data onto the blockchain, and evicting data replicas under write-intensive workloads can prevent the expensive storage writes in smart contracts. 
See Section~\ref{sec:costmodel} for details on Ethereum's Gas-based cost model and Section~\ref{sec:design} for a basic measurement study that corroborates our insight here.

Dynamic decision making w.r.t. data replication has been a well-studied research topic in conventional distributed systems. Briefly, a common approach~\cite{DBLP:conf/icde/HuangW93} is to model the target system by multiple ``sites'', and run workload monitoring and decision making distributedly on each site. These solutions lay an important foundation for designing dynamic data-replication in GRuB.
However, simply using them as they are in GRuB is insufficient. 
Notably, existing dynamic replication frameworks are not designed with blockchain's Gas cost model in mind or do not reflect the GRuB's cost to enforce data security (e.g., on untrusted SP off-chain). If used improperly in GRuB, they may lead to excessive costs; for instance, the Gas model charges higher unit cost (e.g., per word) for ``local'' operations in smart contract (e.g., on-chain storage updates) than for data movement over the network (by transactions). Such a unique cost characteristic may invalidate the existing design that collocates the decision making with data replicas.  

To fill the gap, GRuB presents a Gas-aware data-replication system which places workload monitoring and decision making off the blockchain. We propose a security protocol to guarantee the integrity of workload trace and replication decisions that are transferred from untrusted off-chain SP to the blockchain. The decisions in GRuB are made by a Gas-aware online algorithm that achieves the bounded ``Gas competitiveness'' -- Specifically, the worst-case Gas caused by the data replication following the decision made by this online algorithm is bounded by a small-constant multiplicative factor (e.g., 2) to that caused by an optimal offline algorithm. 
{\color{black} 
This work emphasizes building a data-replication {\it mechanism} supporting sample policies to bound competitiveness. A comprehensive study of policies to configure the mechanism is out of the scope.} 
Overall, GRuB can autonomously run in the hybrid data feeds with changing workloads, while keeping the Gas low.

GRuB's system is generic: To support applications, GRuB exposes an extensible KV store interface (API) that supports Puts from the DO and Gets with callbacks to process queries in a DU contract. GRuB can be built relying on generic interfaces of the underlying systems (similar to an ABI); that is, any blockchain supporting smart contracts and any off-chain storage services supporting KV storage. We have built a GRuB prototype functional with Ethereum~\cite{me:eth} and Google LevelDB~\cite{me:leveldb}, and used it to enable a financial application: A stablecoin collateralized with an Ether-price feed built on GRuB.
Based on the real-world workloads collected from Ethereum, we evaluate GRuB's Gas cost, which shows that GRuB can save up to $67\%$ Gas compared to the static-data-placement baselines. For more extensive evaluation, we build a benchmark by mixing the YCSB workloads. The evaluation under YCSB benchmark shows that compared to the baselines, GRuB can save Gas by $10\%\sim{}74\%$ depending on specific record sizes and read-write ratios.

The contributions of this paper are outline as following: 

1. Propose a dynamic, workload-adaptive approach by mixing on-chain and off-chain data storage to optimize the smart-contract costs. To the best of our knowledge, this identifies an unexplored design space in the existing blockchain research. 

2. Present GRuB, a Gas-efficient data feed by dynamically replicating data between the hybrid data storage on and off the blockchain. GRuB employs new techniques, a Gas-aware online algorithm for replication decision-making and a security-centric protocol for running the decision components off-chain at a low cost. 

3. Validate the applicability of GRuB and evaluate its cost in Gas extensively, by systematically studying real-world applications, building a benchmark suite from real-world traces, and evaluating the costs. The result shows that GRuB can achieve a Gas saving by $10\%\sim{}74\%$ when compared to static data-placement baselines. 

\section{Design Motivations}

\subsection{Preliminary on Motivating Applications}
\label{sec:motivateapps}

Data feeding enables a blockchain to be able to interoperate with external worlds (i.e., the blockchain interoperability), which further enables a good number of deployed blockchain applications in cross-domain scenarios. Here, we describe two such applications in detail, as an effort to motivate our work.

{\it Stablecoins (on price feeds)}: Unlike Bitcoin, Ether and other ``native'' cryptocurrencies, a stablecoin is a cryptocurrency with stable prices. 
Price stability is the key requirement for real-world adoption of today's cryptocurrencies in realistic applications (e.g., loans, derivatives, and prediction markets). Recently, there is an explosion of stablecoins proposed (e.g., Facebook Libra) and deployed (e.g., DAI~\cite{me:dai:etherscan}, Tether~\cite{me:tether:etherscan}, and the other 57 stablecoins operational on Ethereum, as of May 2020~\cite{me:stablecoin}). 

There are different approaches to realize price stability~\cite{10.1145/3387945.3388781}: A stablecoin can be either directly backed by a stable asset (e.g., USD or gold) or indirectly backed via yet another cryptocurrency. The latter design, named indirectly-backed stablecoin, has the benefit of not relying on a trusted third-party vault off-chain to keep collateral and is adopted in popular stablecoins such as DAI~\cite{me:dai:etherscan} which is indirectly backed by Ether. To manage the price instability of Ether itself, the DAI runs a smart contract on Ethereum that controls the issuance and redemption of DAI. To make each DAI redeemable with one-USD worth of Ether, the DAI smart contract needs to be aware of the current price of Ether (or Ether-USD exchange rate). This is done by a price feed in practice~\cite{me:pricefeed:makerdao}, which upload the stream of price updates from a trusted source off-chain, such as Coinbase.\footnote{The off-chain party trusted by an indirectly-backed stablecoin performs a much simpler task than that by the directly-backed stablecoin. The former is a price feed, while the latter is a full-fledged vault storing the collateralized asset, subject to the public auditing~\cite{10.1145/3387945.3388781,moin2019classification}.} 

{\it Cross-chain swaps (on side-chain feeds)}: Supporting asset swaps across multiple blockchains is an important financial application paradigm, enabling asset liquidity on Blockchains. 
For instance, there are Bitcoin-pegged ERC20 tokens on Ethereum~\cite{me:btcpegged:etherscan} which allow a Bitcoin owner to transact with an asset owner on Ethereum. 
An efficient approach to enable such applications is the {\it side-chain paradigm} where blockchain A feeds its produced blocks to smart contracts running on blockchain B. For instance, BtcRelay~\cite{me:btcrelay,me:btcrelay:github} is such a side-chain feed connecting Bitcoin and Ethereum. BtcRelay style side-chain feeds are widely used in Bitcoin-pegged ERC20 tokens (e.g., tBTC~\cite{me:tbtc:github,me:tbtc:website,me:tbtc:etherscan} and others~\cite{me:btcpegged:etherscan,DBLP:conf/sp/ZamyatinHLPGK19}), 
Ethereum lottery games~\cite{me:ethlottery:reddit,me:ethlottery:etherscan}, et al. 

Other than the above two classes of data feeds, there are many other uses of data feeds, either deployed or envisioned. For instance, running flight insurances on Ethereum requires data feeds to provide flight cancel/deploy information.  Running stock exchanges may require an off-chain order book to feed stock/order prices. In other domains, blockchains are envisioned to support the auditing of transparency logs~\cite{me:keybase,DBLP:conf/sp/TomescuD17}, where the smart contracts running auditing logic need data feeds of log updates from off-chain servers. 

\begin{figure}
\centering
  \includegraphics[width=0.325\textwidth]{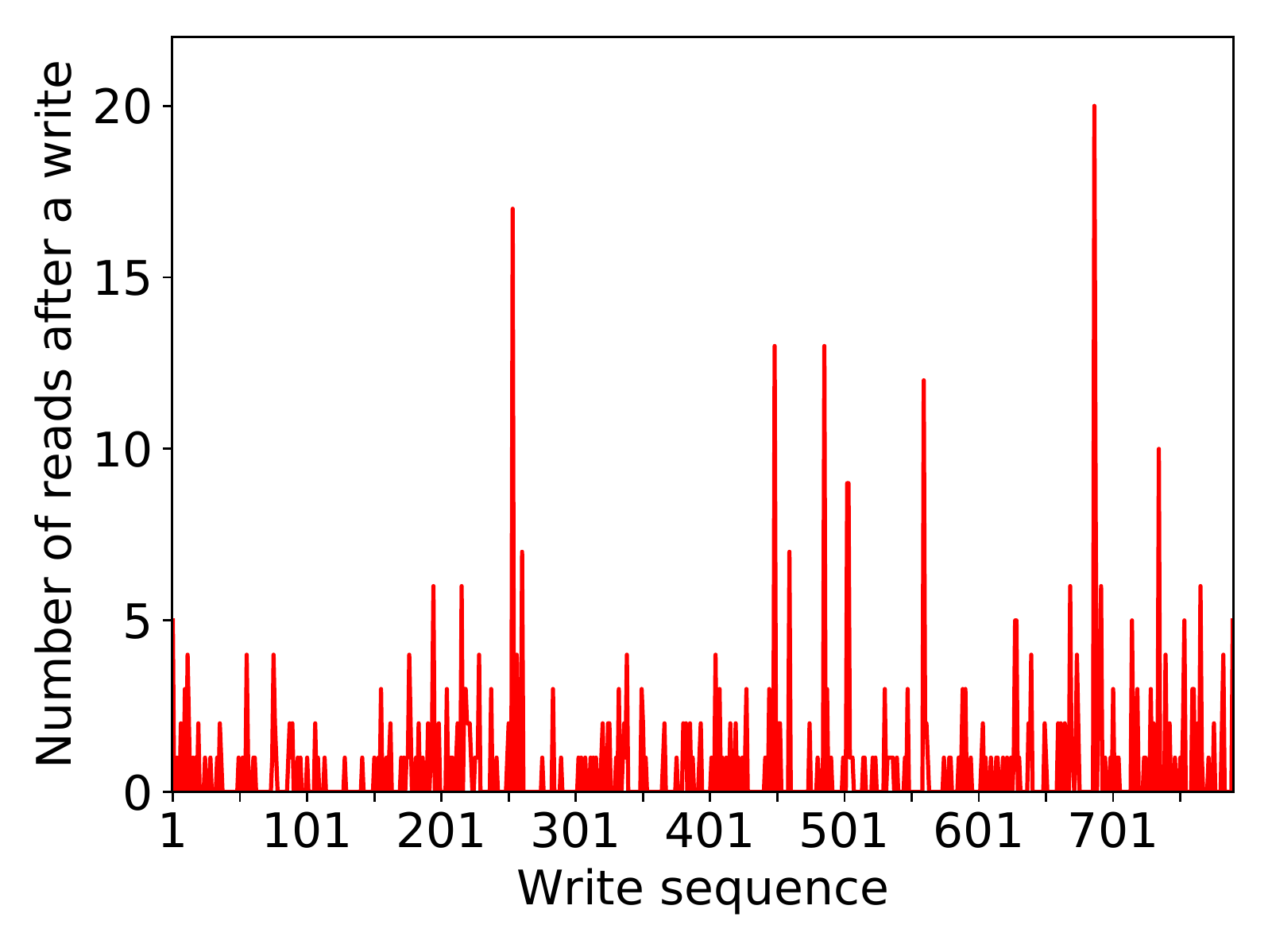}%
\caption{The workloads of ethPriceOracle~\cite{me:ethpriceoracle} that feed the MakerDAO stablecoin platform~\cite{me:makerdao} on Ethereum}
  \label{fig:workload:ethpriceoracle:1}%
\label{fig:workload:ethpriceoracle}
\end{figure} 

\begin{table}[!htbp] 
\caption{Distribution of writes by the number of reads followed in the ethPriceOracle trace (\#r represents the number of reads per write)}
\label{tab:ethpricefeed:distribution}\centering{\small
\begin{tabularx}{0.4\textwidth}{ |X|c|c|c|c|c| }
  \hline
\#r & Percentage & \#r & Percentage & \#r & Percentage \\ \hline
0 & $70.4\%$ & 5 & $0.76\%$ & 10 & $0.13\%$\\ \hline
1 & $16.0\%$ & 6 & $0.63\%$ & 12 & $0.13\%$\\ \hline
2 & $6.46\%$ & 7 & $0.25\%$ & 13 & $0.25\%$\\ \hline
3 & $2.91\%$ & 8 & $0.13\%$ & 17 & $0.13\%$ \\ \hline
4 & $1.52\%$ & 9 & $0.25\%$ & 20 & $0.13\%$ \\ \hline
\end{tabularx}
}
\end{table}

{\bf Workloads}: In these applications, the workload a data-feed serves consists of data reads from the consumer smart-contracts and the updates from the data producer. One of the motivating observations of this work is that many real-world workloads in data feeds fluctuate widely in the read-write ratio. Here, we present a measurement result as an example. 
EthPriceOracle~\cite{me:ethpriceoracle} is a price feed operational in the Ethereum mainnet and in use to support indirectly-backed stablecoin DAI, as part of the MakerDAO platform~\cite{me:makerdao}. EthPriceOracle allows 14 off-chain accounts to update the price feed and is implemented as a smart contract supporting a price-update function (i.e., \texttt{poke()}) and a price-read function (i.e., \texttt{peek()}).
We collected a call trace of \texttt{poke()} and \texttt{peek()} between April 25th, 2018 to April 30th, 2018; the collection is done in two means, by running an Ethereum full node and by querying a public Ethereum dataset hosted on Google BigQuery~\cite{me:ethtrace:bigquery}. Figure~\ref{fig:workload:ethpriceoracle} plots the 5-day trace where each X tick is a data-feed update (i.e., a \texttt{poke()} call) and the Y value associated with a X value is the number of data-feed reads (i.e., \texttt{peek()} calls) immediately following the write in the call trace. 
The workload distribution is also summarized in Table~\ref{tab:ethpricefeed:distribution}.
It can be seen that the number of reads following a write fluctuate; half of Y values are 0 and 1, but occasionally it also reaches as high as 20 reads after a write.

While this is the case of one particular application, the data-feed workloads being fluctuating commonly apply. Because in a typical data feed, the updates are produced continuously  at a regular rate, while the reads from the data consumer smart contract are by demand, which typically come and go in an ad-hoc fashion. 

\subsection{System Model and Trust Model}
\label{sec:systemmodel}

In this subsection, we formally describe the system model introduced before.
Recall Figure~\ref{fig:systemmodel:v4} that our system model includes three parties: A data producers (DO), a key-value (KV) store (i.e., the GRuB) and a number of data-consumer smart contracts (DUs). The off-chain DO sends data updates to the KV store, by invoking its function, \texttt{gPuts}. A DU smart contract queries the data feed stored in the KV store by issuing a function call to \texttt{gGet}.
The two functions exposed by the KV store are described by Listing~\ref{lst:grubapi}.

\definecolor{mygreen}{rgb}{0,0.6,0}
\lstset{ %
  backgroundcolor=\color{white},   
  basicstyle=\small\ttfamily,        
  breakatwhitespace=false,         
  breaklines=true,                 
  captionpos=b,                    
  commentstyle=\color{mygreen},    
  deletekeywords={...},            
  escapeinside={\%*}{*)},          
  extendedchars=true,              
  keepspaces=true,                 
  keywordstyle=\color{blue},       
  language=Java,                 
  morekeywords={*,...},            
  numbers=none,                    
  numbersep=5pt,                   
  numberstyle=\small\color{black}, 
  rulecolor=\color{black},         
  showspaces=false,                
  showstringspaces=false,          
  showtabs=false,                  
  stepnumber=1,                    
  stringstyle=\color{mymauve},     
  tabsize=2,                       
  caption = {GRuB APIs},
  label = {lst:grubapi},
  moredelim=[is][\bf]{*}{*},
}
\begin{lstlisting}
//external call by off-chain DO
bool gPuts(KV[] kvs); 
//internal call by smart contract (DU)
KV[] gGet(Key k1, Callback cb);
\end{lstlisting}

Specifically, a single \texttt{gPuts} call by the data producer batches multiple KV records in an epoch (e.g., every 1 min.) to update the KV store.
A \texttt{gGet} call issued by a DU smart contract retrieves KV records by a specified data key and returns its control to an optional callback function in the caller smart contract. The callback function often executes query-processing logic based on the retrieved KV records. Here, note that the caller of \texttt{gPuts} is the off-chain data producer and it can be implemented as a remote-procedure call, for instance, in Python.
The caller of \texttt{gGet} is a smart contract and it can be implemented as a Solidity function.

GRuB is a KV store based on ``hybrid'' storage media both on and off the blockchain. On the blockchain, it runs a storage-manager smart contract. Off the blockchain, it runs a KV store instance on an untrusted cloud storage provider (SP), such as Amazon S3.

GRuB can be used as a base to support different domain applications. To do so, an application developer writes a DU smart contract encoding the application logic and embedding a query-processor function to be called by \texttt{gGet}. GRuB can enable a price feed: Recall Section~\ref{sec:motivateapps} that a price feed supports a price-update function \texttt{poke()} and a price-read function \texttt{peek()}. These two functions can be mapped to GRuB's \texttt{gPuts} and \texttt{gGet}, respectively, by modeling the price of each collateral asset as a KV record (e.g., $\langle{}\text{Ether},150\text{USD}\rangle{}$). Section~\ref{sec:cases} presents two end-to-end applications built on GRuB.

\label{sec:trustmodel}
{\bf Trust model}: In our system, the primary adversary is the untrusted cloud storage provider who can forge, replay, omit and fork the data sent to the blockchain, in order to break the data integrity. The ``data'' includes the KV records, proofs and various protocol-specific metadata including collected trace of workloads and replication decisions.
We assume high availability among all participating parties and exclude denial-of-service attacks from the scope of this paper. 
All smart contracts including the application smart contracts and GRuB's storage-manager contracts are trusted in terms of program security (no exploitable security bugs), execution non-stoppability, etc. We also make standard assumption on blockchain security that the blockchain is immutable, fork-consistent and Sybil-secure. The underlying security assumption is that a deployed blockchain system runs among a large number of peers where majority of them are honest peers and compromising the majority is hard.

\begin{table}[!htbp] 
\caption{Ethereum's Gas cost w.r.t. different operations~\cite{wood2014ethereum}: Operations related to data movement (transactions) and storage updates are the most expensive in Gas.}
\label{tab:costmodel}\centering{\scriptsize
\begin{tabularx}{0.45\textwidth}{ |X|l| }
  \hline
 Operation & Gas cost ($X$ is the number of 32-byte words) \\ \hline
 Transaction & $C_{tx}(X) = 21000+2176X$ ($X<1000$) \\ \hline
 Storage write (insert) & $C_{insert}(X)=20000X$ \\ \hline
 Storage write (update) & $C_{update}(X)=5000X$ \\ \hline
 Storage read & $C_{read}(X) = 200X$ \\ \hline
 Hash computation & $C_{hash}(X) = 30+6X$ \\ \hline
\end{tabularx}
}
\end{table}

\label{sec:costmodel}
{\bf Cost model}: The primary cost considered in this work is the cost in using blockchains and executing smart contracts. This paper considers the use of Ethereum. Table~\ref{tab:costmodel} presents the Ethereum cost model in Gas (the cost unit in Ethereum). It can be seen the most expensive operations in Gas per word are transactions and storage writes/updates. 
In our system model, the use of cloud service (SP) may also lead to expenses, which however is much cheaper than that of blockchains: Consider storing one gigabytes in today's cloud storage, which falls under the free tier for all major providers (i.e., Amazon S3, Dropbox, et al), leading to zero-dollar spending, whereas doing the same on Ethereum costs more than $\$231$ million USD (with the Ether price as of Nov. 2019). Because of this, the cloud-service fee in our target application is negligible compared with the Gas cost from blockchains.

Also reducing the Gas of a blockchain application implies improving the throughput of this application, because 1) the transaction throughput of a blockchain is bounded by the total Gas a block can take, such as 10 million gas per Ethereum block; reducing the Gas per operation implies the application can submit more operations in a given time. 2) We assume blockchain is the bottleneck of a target application, which currently takes tens of transactions per second and is much lower than that of conventional computer systems, even for a single machine. Thus, the main goal of this work is to reduce the Gas cost of a blockchain application. 

\subsection{Motivating Cost Observation}
\label{sec:design}

\begin{figure}[!ht]
  \begin{center}
  \includegraphics[width=0.35\textwidth]{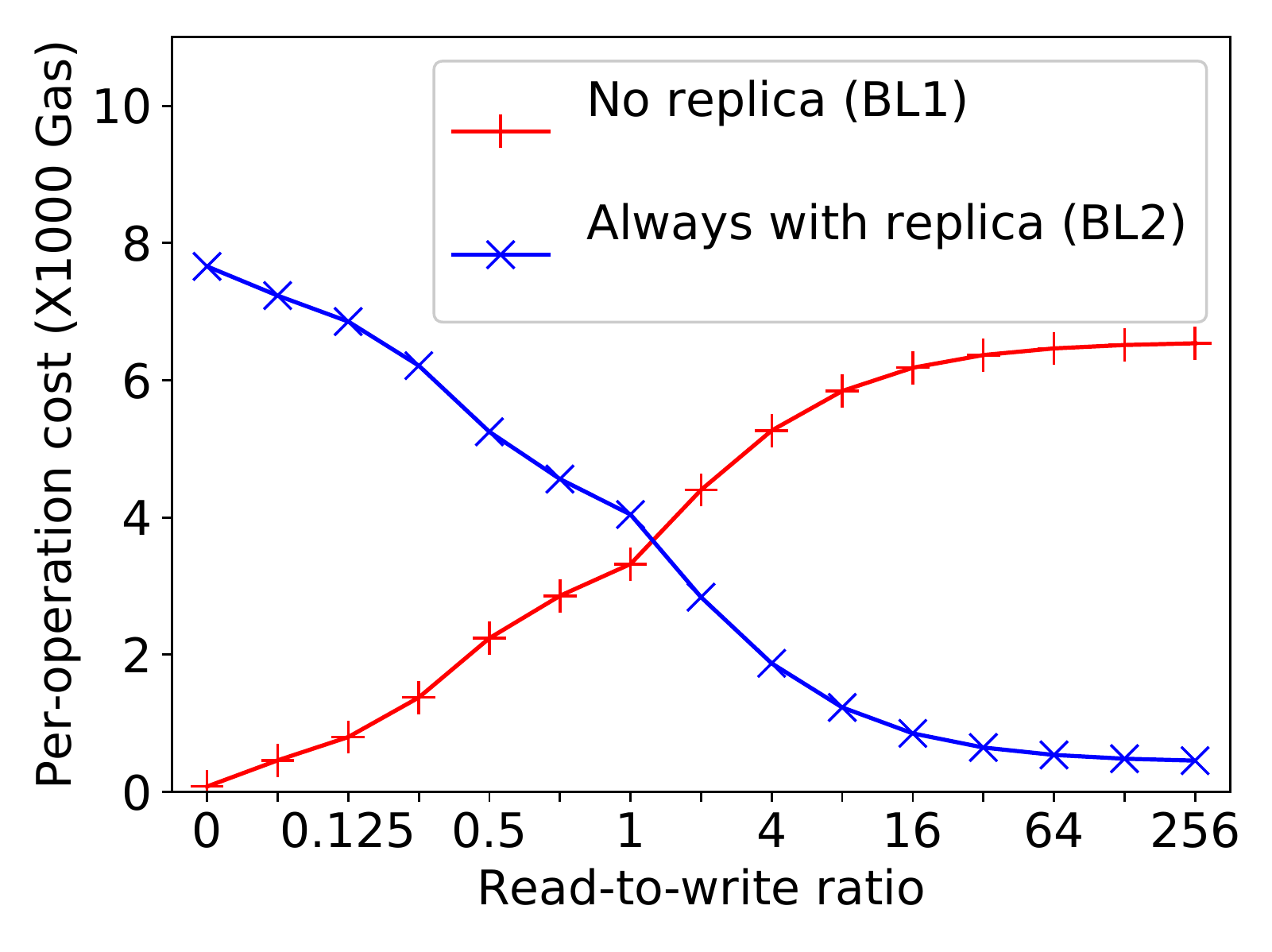}%
  \end{center}
\caption{Preliminary Gas measurements of static baselines}
  \label{fig:prel:mhtdataset}%
\end{figure}

{\bf 
Design Space}: This work addresses the design of hybridized data storage over blockchain and SP. We consider the two design baselines: 1) data is only stored on the off-chain SP and is brought into the smart-contract memory when serving \texttt{gGet}. This baseline is named BL1. Alternatively, 2) data is stored both on the off-chain SP and on blockchain. The baseline is named BL2. Note that our cost model only considers blockchain-induced cost, Gas, and excludes the off-chain costs including cloud service fee (on SP). Thus, BL2's cost is equivalent  to the design of placing data storage only on the blockchain. Note that these two baselines are based on {\it static} decisions regarding data replication.

{\bf Measurement observation}: To motivate dynamic data replication of this work, we conduct a rapid measurement study: In this study, we consider the simplest data model involving a single KV record. We implement a simple smart contract on the Ethereum testnet that processes the single KV record with optional on-chain storage. We use an off-chain machine running Ethereum client \texttt{geth}, to represent the SP. The two static baselines, BL1 and BL2, are implemented. We use a series of workloads with varying read-write ratios. Each workload is a repeated sequence of $X1$ writes followed by $X2$ reads (all of which are under the single data key). On the one end, we use a write-only sequence , that is, $\frac{X2}{X1}={0}$. On the other end, we use a read-intensive sequence with each write followed by $256$ reads $\frac{X2}{X1}={256}$. After driving each workload to our system, we measure the average Gas per operation on BL1 and BL2. We vary the read-to-write ratio ($\frac{X2}{X1}$) and report BL1 and BL2's Gas per operation in Figure~\ref{fig:prel:mhtdataset}.

It is clear that as the workload changes from the write-only sequence to read-intensive ones, there is a tradeoff between the two static baselines. When the workload is write-only, BL1 achieves lower Gas per operation than BL2, with cost saving more than $100\times$. When the workload becomes about every $1.5$ read per write (i.e., $\frac{X2}{X1}={1.5}$), the two approaches cost equal Gas. When the workload is more read intensive, such as $\frac{X2}{X1}={256}$, BL2's Gas per operation is $\frac{1}{7}$ of BL1's. 

While having a data replica on the blockchain is expected to shift the cost distributions between reads and writes, the striking cost difference it makes ($100\times$ and $7\times$) was surprising to us. This can be attributed to Ethereum's unique cost model: When the workload is write-only, the always-replicate baseline (BL2) incurs expensive operations to update smart-contract storage, which costs $5,000\sim{}20,000$ Gas per word; recall Table~\ref{tab:costmodel}. When the workload is read intensive, the never-replicate baseline (BL1) incurs expensive transactions to move the latest value of KV record to the blockchain, while BL2 avoids the expense by reading storage data on chain; recall Table~\ref{tab:costmodel} that a read from smart-contract storage costs $200$ Gas per word while a transaction costs a much higher $2176$ per word; let alone the initial cost of $21,000$ of an empty transaction.

\section{GRuB: System Design and Impl.}
\label{sec:grubsys}

\begin{figure*}[!htbp]
  \begin{center}
    \subfloat[{\small GRuB runs middleware across a smart-contract supported blockchain, SP and DO. The core system components are depicted by shaded boxes in the figure. In blue are data-plane components responsible for data movement and storage, running authenticated data structures (ADS) and managing replicas. In pink are control-plane components that monitor workloads, make replication decisions, and execute the decisions on the data plane.}]{%
     \includegraphics[width=0.675\textwidth]{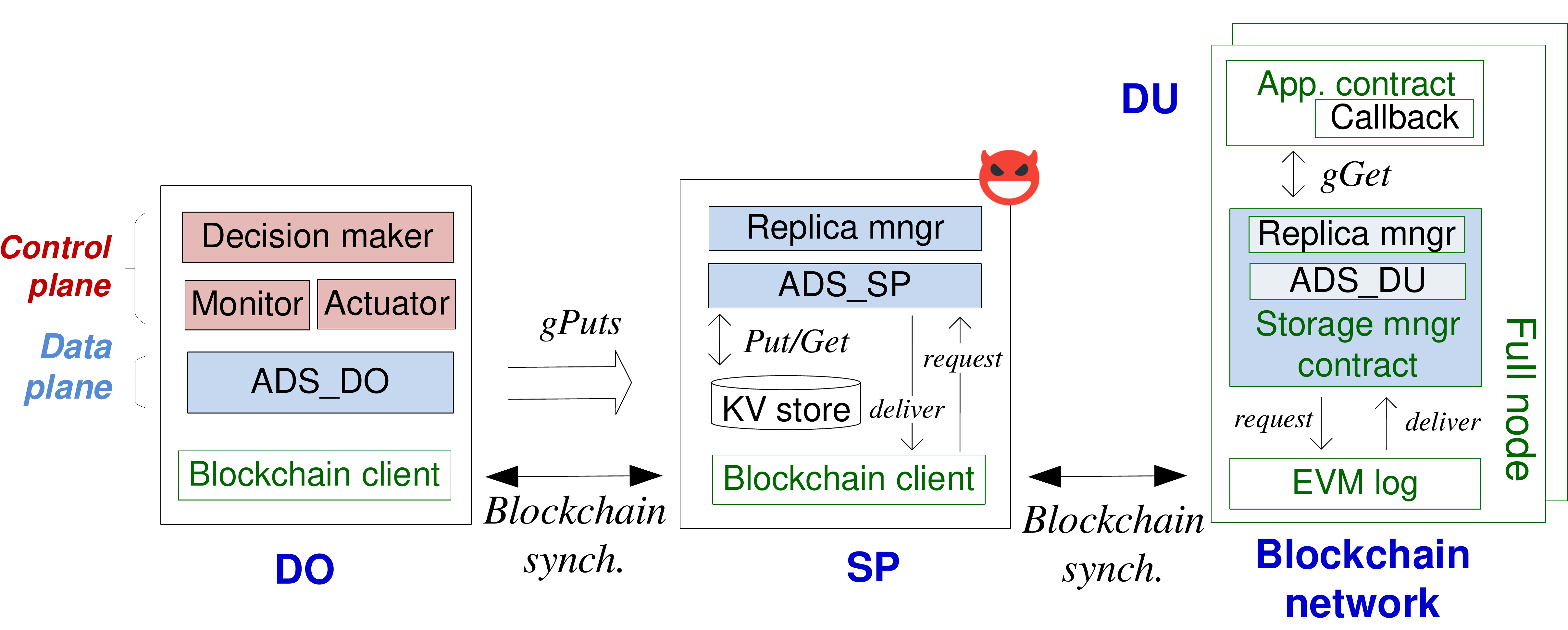}
    }
    \hspace{0.5cm}%
    \subfloat[Merkle tree in ADS\_SP]{%
      \includegraphics[width=0.19\textwidth]{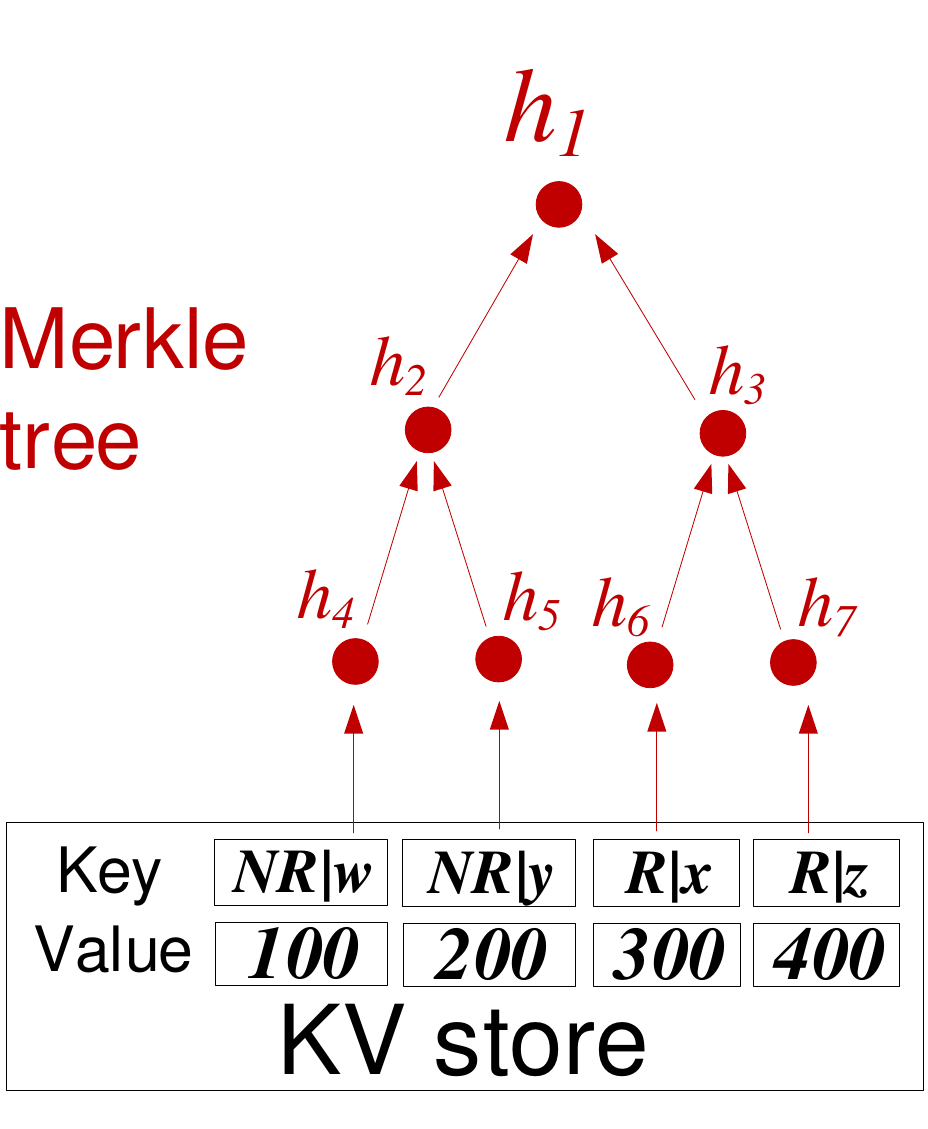}%
\label{fig:protocols:merkle0}
\label{fig:protocols:merkle}
    }%
    \end{center}
  \caption{The overview of GRuB distributed system}
\label{fig:systemoverview}
\end{figure*}

{\bf 
GRuB overview}: Recall the system model in Section~\ref{sec:systemmodel} that a trusted DO feeds data updates to the GRuB KV store, which is queried by DU smart contracts. The internal system of the GRuB consists of two ``planes'', as depicted in Figure~\ref{fig:systemoverview}: 1) A secure-data plane where the DO securely updates the KV store on GRuB by associating data updates with proofs, and a DU smart contract querying the GRuB retrieves query proofs to authenticate (non-replicated) KV records stored on the untrusted cloud provider. The data plane runs a security protocol known as authenticated data structures (ADS; which will be introduced and described in Section~\ref{sec:dataplane}) across the DO, the SP and the blockchain. 2) A control plane which monitors the workloads (data updates and reads), makes replication decisions w.r.t. individual KV records, and stores the decisions as auxiliary states in each KV record, which instructs the data plane to materialize the decisions. The control plane runs on the trusted DO and federates the traces of data reads (from the blockchain's natively logged contract-call history) and data updates. The key component of GRuB is a series of online decision-making algorithms running in the control plane. 

In this section, we describe the algorithm design of online decision maker, which is the core of GRuB's control plane (Section~\ref{sec:decisionalgorithm}), the control-plane system design (Section~\ref{sec:controlplane}), the data-plane system design (Section~\ref{sec:dataplane}), overall system properties (Section~\ref{sec:properties}) and implementation notes (Section~\ref{sec:impl}).

\subsection{Online Decision-Making: Algorithm Design \& Analysis}
\label{sec:decisionalgorithm}

In this subsection, we describe the online decision-making algorithm: Given a sequence of \texttt{gPuts} and \texttt{gGet} calls, GRuB's decision-making algorithm produces the replication decisions on affected KV records. The replication decision will be actuated as described in the next subsection. The design goal of such algorithms is to reduce the Gas cost of future data reads and writes based on the assumption that the read/write history will repeat. Intuitively, the algorithm needs to predict the future reads/writes on the KV record, estimate the cost of the two alternative decisions ($R$ or $NR$) based on the prediction, and pick the one with lower costs as the output. 
Using the existing online algorithms~\cite{DBLP:conf/icde/HuangW93} is insufficient as they are designed without awareness to GRuB's cost in Gas and the cost caused by security proofs. We propose algorithm designs and configurations that are tailored to GRuB's unique costs and that can autonomously achieve bounded worst-case Gas cost. 
In the following, we present the design and analysis of two algorithms: a ``memoryless'' online algorithm that resets its state/memory about past reads/writes upon each run, and a ``memorizing'' online algorithm that remembers the operation history across runs. 

\paragraph{Memoryless Algorithm}
\label{sec:algo:memoryless}
The memoryless algorithm for replication decision making is described in Algorithm~\ref{alg:memoryless}. 
The algorithm internally maintains a list of counters, each for a $NR$ record. The counter counts the number of consecutive reads on the data record that are received since the last write. The algorithm iterates through the read/write trace. Upon a write on a record, say $\langle{}k,v\rangle{}$, the algorithm resets the counter of record $\langle{}k,v\rangle$ back to zero and updates the record's $NR$. Upon a read on a $NR$ record, it increments its counter. When the counter reaches a preset parameter, $K$, the algorithm changes the record's state from $NR$ to $R$ and removes the data record from the list of counters.

\begin{algorithm}[h]
\caption{MemorylessRepl($ops$, $count$, $states$)}\small 
\label{alg:memoryless}
\begin{flushleft}
\hspace*{\algorithmicindent} \textbf{Input}: read/write operations $ops$, read count $count$, and the replication states $states$
 \\
\hspace*{\algorithmicindent} \textbf{Output}: updated replication states $states$ 
\end{flushleft}

\begin{algorithmic}[1]
\ForAll{$o \in ops$} 
  \If{$o.isWrite()$}
      \State $count[o.key] = 0$; $states[o.key].set(NR)$;
  \Else
      \If{$count[o.key] < K$} \State $count[o.key]++$; 
      \EndIf
      \If{count[o.key] $\geq{}K$} \State $states[o.key].set(R)$;
      \Else \State $states[o.key].set(NR)$;
      \EndIf
  \EndIf
\EndFor
\end{algorithmic}
\end{algorithm}

{\bf Algorithm analysis}: The memoryless algorithm in Algorithm~\ref{alg:memoryless} has competitiveness (w.r.t. the worst-case Gas) bounded by $1+K\frac{C_{read\_off}}{C_{update}}$. Here, $C_{update}$ is the Gas to update a byte on the blockchain storage, and $C_{read\_off}$ is the unit Gas to send one byte data from off-chain to the blockchain. The algorithmic competitiveness is analyzed in Appendix~\ref{sec:algo:analysis}.

{\bf Parameter configuration}: Parameter $K$ decides the performance of memoryless algorithm. To bound the worst-case Gas, we can set $K$ to be the following to make the algorithm $2$-competitive:

\begin{equation}
\label{eqn:memoryless:parameter}
K=C_{update}/C_{read\_off}
\end{equation}


{\color{black}
Note that Equation~\ref{eqn:memoryless:parameter} implies a static value for $K$. In a dynamic replication scheme, using static $K$, while seemingly counterintuitive, has the benefit of bounded competitiveness and can also result in actual workload-adaptive cost behavior (as will be evaluated in Section~\ref{sec:eval} and particularly in Figure~\ref{fig:macro:mixedycsb}). There can be other policies to configure $K$, including setting $K$ dynamic and adaptive to the workload for lower Gas. A comprehensive study of $K$ configuration policies is out of scope of this work, the main goal of which is providing a {\it mechanism} evaluated by selected policies.}

\paragraph{Memorizing Algorithm}
\label{sec:algo:memorizing}

\begin{algorithm}[h]
\caption{MemorizingRepl($ops$, $rCount$, $wCount$, $states$)}\small 
\label{alg:memorizing}
\begin{flushleft}
\hspace*{\algorithmicindent} \textbf{Input}: read/write operations $ops$, read counts $rCount$, write counts $wCount$ and the replication states $states$
 \\
 \hspace*{\algorithmicindent} \textbf{Output}: updated replication states $states$ 
\end{flushleft}
\begin{algorithmic}[1]
\ForAll{$o \in ops$} 
  \If{$o.isWrite()$} $wCount[o.key]++$;
  \Else{} $rCount[o.key]++$;
  \EndIf
  \If{$wCount[o.key]*K'+D <= rCount[o.key]$} 
      \State $states[o.key].set(R)$;
  \EndIf
  \If{$wCount[o]*Y-K' > rCount[o.key]$}
      \State $states[o.key].set(NR)$;
  \EndIf
\EndFor
\end{algorithmic}
\end{algorithm}

In practice, workloads exhibit temporal locality and can have repeated sequences of read/write operations. 
The memoryless algorithm does not capture the temporal locality in the workload by forgetting the past operation history. We propose a memorizing algorithm that exploits the temporal locality in workloads by memorizing the decisions made for similar operations in the past. The memorizing algorithm takes as input the trace of reads and writes. Note that unlike the memoryless algorithm, the memorizing algorithm takes in the trace of on-chain data reads.

The algorithm, described in Algorithm~\ref{alg:memorizing}, maintains two counters for each data record, $rCount$ and $wCount$. $rCount$ ($wCount$) increments when the algorithm, iterating through the read/write trace, encounters a read (write) operation. The algorithm checks two conditions upon each read/write operation: If the condition holds, $wCount*K'+D<=rCount$, the record's state is updated from $NR$ to $R$. Here, $D$ is a time window in the past the algorithm looks into to characterize the current workload and to predict the future one. It also resets $wCount$ to zero and reduces the value of $rCount$ to $D$. If the condition holds, $wCount*K'-D>=rCount$, the record's state is updated from $R$ to $NR$. It also resets $rCount$ to zero and reduces the value of $wCount$ to $D/K'$. 

{\bf Parameter configuration}: Similar to the memoryless algorithm, parameter $K'$ is set to the ratio of on-chain write cost to off-chain read cost. $K'=C_{write}/C_{read\_off}$. The other parameter $D$ determines how sensitive the replication state is to the workload. A small $D$ leads to frequent changes of replication state, while a large $D$ leads to a stable setting of replication state.

{\bf Algorithm analysis}: As analyzed in 
Appendix~\ref{sec:algo:analysis}, 
the memorizing Algorithm~\ref{alg:memorizing} is $\frac{4D+2}{K'}$-competitive.

\subsection{System Control Plane}
\label{sec:controlplane}

The previous subsection describes the online decision-making algorithms and their analysis. This subsection describes how to {\it execute} the algorithm in the control plane of GRuB. The control plane runs on the DO and is depicted in Figure~\ref{fig:systemoverview}. It consists of three essential components: a workload monitor that collects the trace of data reads and writes, the algorithm executor that executes the online algorithm with the monitored trace, and a decision actuator that stores the decisions along with the records in the KV store. 

Concretely, the workload monitor running on the DO federates the trace of data updates that occur locally and the trace of data reads from the blockchain history. Here, we consider that the blockchain has a builtin support to log smart-contract invocations, as is the case in Ethereum. The DO runs a blockchain client in full synchronization with other blockchain nodes; the client stores the contract-invocation history, from which the call sequence of \texttt{gGet}'es can be accessed. In practice, the DO can run a light blockchain client such as Simplified Payment Verification (SPV) client without downloading the transaction history.

The algorithm output, namely replication decisions, is stored as an auxiliary state in each data record in the KV store. Given a KV record, say $\langle{}k,v\rangle{}$, its key is prefixed with an extra bit that indicates whether the record has a replica on the blockchain (i.e., state $R$) or not (i.e., state $NR$). The state bit will instruct the data-plane of the system to execute the replication decisions, accordingly (See Section~\ref{sec:dataplane}).

This design assumes a trusted blockchain client whose synchronization with a remote blockchain network is secured by external mechanisms; the client can increase the number of neighbor peers to guarantee the integrity of information synchronized (including blocks and transactions) in the case of compromised blockchain nodes. We dismissed the alternative design by receiving the trace of \texttt{gGet} from the untrusted SP which is incentivized to forge the trace and mislead the DO to make a $NR$ decision. Specifically, a $NR$ decision implies more use of the cloud service and the SP can charge higher service fee.

\subsection{System Data Plane}
\label{sec:dataplane}

This subsection describes the system data plane, in terms of write and read paths. That is, how GRuB handles batched data updates and replication-state transitions from the DO (write path) and data reads from a DU under the current replication states (read path). To guarantee the data authenticity against an adversarial SP, a security protocol, ADS, is adopted in the data plane of GRuB. We begin with a background introduction to ADS.

{\bf Preliminary on ADS}: An ADS protocol, or authenticated data structure, is a security protocol running among a data owner (ADS\_DO), an untrusted service provider (ADS\_SP) and multiple data users (ADS\_DU). In its most basic form, the ADS\_SP accepts data updates (individual KV records) from the ADS\_DO and serves exact-match queries (by data keys) issued by ADS\_DU. The security properties an ADS guarantee is the authenticity of KV records including record integrity, query completeness and freshness against an adversarial ADS\_SP who can forge, replay, omit and fork~\cite{DBLP:conf/osdi/LiKMS04} records in a query. 

An ADS protocol can be constructed in different ways~\cite{DBLP:conf/esa/Tamassia03,DBLP:journals/algorithmica/MartelNDGKS04,DBLP:conf/ccs/PapamanthouTT08,DBLP:journals/algorithmica/PapamanthouTT16,DBLP:conf/ccs/ZhangKP15,DBLP:journals/algorithmica/MartelNDGKS04,DBLP:conf/sigmod/LiHKR06} and GRuB can be easily adapted to these constructions. In our current prototype implementation, we use the common construction based on a Merkle tree. That is, the ADS\_SP constructs a Merkle tree on the dataset, each leaf storing the hash of a data record, sorted by their keys. When the ADS\_DO wants to update the dataset, she first retrieves the authentication proof of the data key to be updated from the ADS\_SP, verifies the data integrity, computes the new leaf hash, and then computes the new root hash based on the proof. The ADS\_DO can then safely send the updated data record to the ADS\_SP. For data freshness, the ADS\_DO can periodically publish the signed root hash to the SP.
When a data user, ADS\_DU, queries the dataset by a queried key, SP can serve the query by returning the matched KV record and its associated proof. The proof including the latest signed root hash from the trusted ADS\_DO can be used to verify the integrity, completeness, and freshness of the query result. 

In GRuB, the KV records are sorted by their data keys on SP. Recall that each GRuB record's data key is extended with a prefix of replication state ($R$ or $NR$). The Merkle tree on ADS\_SP is constructed on the key-sorted data layout of records. 
An example Merkle tree in GRuB is depicted in Figure~\ref{fig:protocols:merkle0} where the four KV records are first ordered by their $NR$/$R$ states and then by their actual data keys.

\noindent{\bf Write path}: 
Given a stream of data updates, DO sends a \texttt{gPuts} call every epoch. To prepare the call, DO locally batches the data updates and include them in the single \texttt{gPuts} call to be sent by the end of the epoch. Internally, the \texttt{gPuts} first notifies the control plane on DO of the latest data updates. Then, for each data update, DO runs the ADS protocol with the SP to securely update the matching KV records. 

If all KV records in this batch are in non-replicated state ($NR$) and there is no update on the replication state, the DO sends only the digest of this batch to call the \texttt{update()} function in the storage-manager smart contract. 
Note that the blockchain node on DO receiving the \texttt{update()} call would propagate it to other blockchain nodes.
If there are any KV records with replicated state ($R$), they are included in the \texttt{update()} call. If there is any state transition, either from $R$ to $NR$ or from $NR$ to $R$, such transitions are included in the \texttt{update()} call. Receiving the call, the storage-manager contract would insert a new replica to on-chain storage if there is a transition from $NR$ to $R$. It would evict an existing replica if there is a transition from $R$ to $NR$.

\noindent{\bf Read path}: 
Given a \texttt{gGet} call from a DU smart contract, all blockchain nodes would execute the storage-manager contract to handle the call. If the requested data key can be matched to a $R$ KV record replicated on the blockchain, the storage manager simply returns the record into the callback function. 
Otherwise, it emits an event recorded in the Ethereum log via calling our \texttt{request} function. The event can be captured externally by a watchdog service on SP. Specifically, the \texttt{request} event is recorded on all Ethereum nodes including the client running on SP. The SP runs an external daemon process (watchdog) that spins on the log to wait for a \texttt{request} event. The event triggers the SP to query its local KV store for the requested record before sending it back to the storage-manager contract via calling the \texttt{deliver} function. 
The \texttt{deliver} function verifies the integrity of the KV records from off-chain before invoking the callback with the verified KV record.

\definecolor{mygreen}{rgb}{0,0.6,0}
\lstset{ %
  backgroundcolor=\color{white},   
  basicstyle=\scriptsize\ttfamily,        
  breakatwhitespace=false,         
  breaklines=true,                 
  captionpos=b,                    
  commentstyle=\color{mygreen},    
  deletekeywords={...},            
  escapeinside={\%*}{*)},          
  extendedchars=true,              
  keepspaces=true,                 
  keywordstyle=\color{blue},       
  language=Java,                 
  morekeywords={*,...},            
  numbers=none,                    
  numbersep=5pt,                   
  numberstyle=\small\color{black}, 
  rulecolor=\color{black},         
  showspaces=false,                
  showstringspaces=false,          
  showtabs=false,                  
  stepnumber=1,                    
  stringstyle=\color{mymauve},     
  tabsize=2,                       
  caption={GRuB's storage-manager smart contract},                  
  label={lst:storagemngr},
  moredelim=[is][\bf]{*}{*},
}
\begin{lstlisting}
contract GRuB.StorageManager {
  bytes32 rootHash;
  mapping(uint256=>uint256) KVReplicas;
  function gGet(uint256 key, uint256 callback){
    uint256 value = KVReplicas[key];
    if(value != null) callback(key, value);
    //request() emits an EVM log event
    request(key, deliver, callback);}

  function deliver(uint256 key, uint256 value, bool replicate, uint256 proof, uint256 callback){
    if(!verify(key,value,proof,rootHash)) return false;
    if(replicate) KVReplicas[key] = value;
    callback(key,value);}

  function update(uint256[] keys, uint256[] values, uint256 digest){
    if(msg.sender = DO) rootHash = digest;
    for(int i = 0; i < keys.length; i++){
      if(values[i].replicate) KVReplicas[keys[i]]=values[i];
      else delete KVReplicas[keys[i]]; }}}
\end{lstlisting}

The pseudo code of storage-manager smart contract is described in Listing~\ref{lst:storagemngr}. 
The more detailed data-plane workflow is described in 
Appendix~\ref{appendix:sec:dataplane}.

\newcommand{\tangSide}[1]{\todo[caption={},color=cyan!20!]{{\tiny Note: #1}}}
\setlength{\marginparwidth}{1.5cm}

{\color{black}
\subsection{Protocol Consistency}
\label{sec:properties}
\label{sec:freshness} 

In this section, we present the consistency of GRuB protocol and leave more formal proofs to 
Appendix~\ref{appdx:properties}.

To describe the protocol consistency, we assume a hypothetical global clock synchronized across the DO and all blockchain nodes. Note that this clock is used as a tool for protocol analysis and is not required in the actual implementation of GRuB. 

{\bf
Blockchain \& GRuB model}: In a vanilla blockchain, it takes $Pt$ time units to propagate a transaction to all nodes in the blockchain network. It takes an average of $B$ time units to produce a block. Only after $F$ blocks are produced, a transaction is considered finalized in the blockchain network. For instance, 
in Ethereum, $F$ is $250$ and $B$ is $10\sim{}19$ seconds~\cite{wood2014ethereum}.

In GRuB, an epoch $E$ is the time interval in which the DO waits and batches data updates in a transaction. 

{\bf Consistency between gPut and gGet}:
Suppose at time $t_1$ the DO submits a \texttt{gPut(k,v)} and at time $t_2$ a blockchain node $Ni$ executes \texttt{gGet(k)}. After $t_2+P_t+B\cdot{}F$, assume the execution of \texttt{gGet(k)} is finalized on the blockchain.

Particularly, when the record \texttt{gGet(k)} accesses is not replicated ($NR$), time $t_2$ refers to when the internal call of \texttt{gGet(k)} is being entered and returned by the blockchain node (the synchronous execution finishes instantly). When the record \texttt{gGet(k)} accesses is not replicated ($NR$), \texttt{gGet(k)} is executed asynchronously and is called back by a \texttt{deliver} transaction. In this case $t_2$ refers to when the \texttt{deliver} transaction is executed on node $Ni$.

\begin{theorem}[Non-deterministic ordering of concurrent gPut/gGet]
If $t_1<t_2<t_1+E+P_t+B\cdot{}F$, \texttt{gPut(k,v)} is said to occur concurrently with \texttt{gGet(k)}. With GRuB, the ordering between concurrent \texttt{gPut(k,v)} and \texttt{gGet(k)} is non-deterministic and is the same across all blockchain nodes after $t_2+P_t+B\cdot{}F$.
\end{theorem}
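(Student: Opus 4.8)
The plan is to split the claim into its two assertions---first that the ordering of a concurrent \texttt{gPut(k,v)}/\texttt{gGet(k)} pair is non-deterministic, and second that this ordering becomes uniform across all nodes after time $t_2+P_t+B\cdot F$---and to argue each using the blockchain/GRuB timing model together with the finalization fact established just above (that \texttt{gGet(k)} is finalized by $t_2+P_t+B\cdot F$) and the trust-model assumptions that the chain is immutable and fork-consistent. The central observable I will track is the value returned by \texttt{gGet(k)}: the pair is ordered \texttt{gPut}$\to$\texttt{gGet} exactly when the read reflects $v$, and \texttt{gGet}$\to$\texttt{gPut} otherwise. Reducing ``ordering'' to this observable lets me reason purely about whether the effect of the \texttt{gPut} has reached the point at which the read is resolved, which in turn is governed by the propagation and finalization delays.

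For non-determinism, I would first pin down the latest time the \texttt{gPut} can become visible everywhere. Since the DO batches updates over an epoch, the \texttt{update()} transaction carrying $v$ may be emitted as late as $t_1+E$, propagates in at most $P_t$, and finalizes in $B\cdot F$; hence $t_1+E+P_t+B\cdot F$ is exactly the deadline by which the write is guaranteed finalized at every node, which is precisely the upper end of the concurrency window. I would then exhibit, for a fixed $t_2$ inside the window, two admissible executions with opposite outcomes: one in which the \texttt{update()} transaction has not yet reached the node resolving the read, so the pre-\texttt{gPut} value is returned, and one in which it has, so $v$ is returned. Both cases must be handled for the two meanings of $t_2$: in the replicated ($R$) case the read is the synchronous lookup of the on-chain replica, whose value depends on whether \texttt{update()} has already been applied at that node; in the non-replicated ($NR$) case the read is resolved by the SP's \texttt{deliver} transaction, whose payload depends on whether the DO has already pushed $v$ to the SP via the ADS write path. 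Since both outcomes are consistent with $t_1<t_2<t_1+E+P_t+B\cdot F$, the order is not fixed by the real-time relation alone, establishing non-determinism; note that outside the window ($t_2\ge t_1+E+P_t+B\cdot F$) the write is already finalized everywhere and the order is forced to \texttt{gPut}$\to$\texttt{gGet}, which is why the window is exactly the concurrency region.

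For agreement, I would invoke the finalization fact: by $t_2+P_t+B\cdot F$ the \texttt{gGet(k)} (its internal read in the $R$ case, or its \texttt{deliver} transaction in the $NR$ case) sits in the canonical, finalized chain. By immutability its returned value is thereafter fixed, and by fork-consistency every node observes the same canonical prefix, so every node agrees on which value was read and therefore on the order. I would close the case analysis by observing that agreement does not require the \texttt{gPut} itself to be finalized by the same instant: if the read returned $v$, the \texttt{update()} block is an ancestor of the finalized read and is thus finalized too, giving \texttt{gPut}$\to$\texttt{gGet} uniformly; if it returned the old value, the read precedes \texttt{gPut} in the total order regardless of when \texttt{gPut} later finalizes, giving \texttt{gGet}$\to$\texttt{gPut} uniformly.

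The main obstacle I anticipate is conceptual rather than computational: making ``ordering of \texttt{gPut} and \texttt{gGet}'' rigorous when \texttt{gGet} is not a state-changing operation (especially in the synchronous $R$ case, where no \texttt{gGet} transaction is recorded), and doing so uniformly across the $R$/$NR$ asymmetry. Tying the order to the returned value and to the block in which the read is resolved is the device that bridges the two cases, and justifying the agreement claim from \texttt{gGet}'s finalization alone---via the ancestor argument plus fork-consistency, rather than assuming the \texttt{gPut} is co-finalized---is the step that most needs care.
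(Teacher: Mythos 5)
Your proposal is correct, and its skeleton (split into non-determinism plus eventual agreement, case analysis on $R$ vs.\ $NR$, grounding both in the blockchain's consensus guarantees) matches the paper. But the route differs in how the reduction is made. The paper's justification is deliberately informal and purely reductive: it names, in each case, the two on-chain transactions whose ordering encodes the \texttt{gPut}/\texttt{gGet} ordering (the batched \texttt{update()} transaction versus either the transaction that triggers the DU's internal call, in the $R$ case, or the \texttt{deliver} transaction, in the $NR$ case), and then invokes as a black box the fact stated in its preliminaries --- concurrent transactions are ordered non-deterministically by miners but identically across nodes once one is finalized. You instead make ``ordering'' an observational notion (which value \texttt{gGet} returns), construct explicit witness executions for non-determinism, and derive agreement from immutability and fork-consistency via the ancestor argument (if the read returned $v$, the \texttt{update()} block is an ancestor of the finalized read, hence itself finalized). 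This buys a more self-contained and rigorous argument, and it genuinely resolves the difficulty the paper glosses over --- that in the $R$ case \texttt{gGet} is not itself a transaction --- whereas the paper's version is shorter because it outsources exactly that step to its cited consensus property. Two framing caveats: in the $R$ case, whether the read sees $v$ is determined by the position of the triggering transaction relative to \texttt{update()} in the (eventually canonical) chain, not by whether \texttt{update()} has ``reached the node'' --- pre-finalization fork views are the correct source of divergent outcomes, not raw propagation lag; and in the $NR$ case the paper locates the non-determinism in the on-chain ordering of \texttt{update()} versus \texttt{deliver} (the SP's proof must verify against the currently stored root hash), so the DO-to-SP off-chain timing you cite is at most a secondary source. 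Neither caveat breaks your argument; both are worth fixing if you write it out in full.
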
 

Suppose a \texttt{gGet(k)} issued by a DU smart contract at local time $t$ on blockchain node $Ni$ returns a set of KV records $qs$. Query result $qs$ is fresh, w.r.t. delay $d$, if all KV records matching key $k$ and updated on data owner DO before $t-d$ are included in $qs$. Here, it assumes a global clock synchronized across the DO and any blockchain nodes $Ni$. 
Note that query freshness also implies query completeness here.

\begin{theorem}[Epoch-bounded query freshness between sequential gPut/gGet]
If $t_2>t_1+E+P_t+B\cdot{}F$, \texttt{gPut(k,v)} is said to occur sequentially after \texttt{gGet(k)}. 
Given a \texttt{gGet} sequentially after a \texttt{gPut}, GRuB guarantees the \texttt{gGet} query freshness. Here, the parameters are epoch $E$, block time $B$, propagation delay $Pt$ and the number of blocks needed for finality $F$.
\end{theorem}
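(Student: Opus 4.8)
The plan is to recast the statement as a concrete freshness guarantee: writing $d = E + P_t + B\cdot F$, I would show that any \texttt{gGet(k)} executed at $t_2$ is fresh with respect to delay $d$, i.e.\ it reflects every update to $k$ submitted on the DO before $t_2 - d$. The hypothesis $t_2 > t_1 + E + P_t + B\cdot F$ is exactly $t_1 < t_2 - d$, so the \texttt{gPut(k,v)} of interest is one such update and the freshness conclusion follows once the general delay-$d$ claim is established. The strategy is therefore to trace the lifecycle of a write from its submission at $t_1$ to its appearance in the on-chain state, and then to argue that a read at $t_2$ cannot fail to observe it, in both replication states.

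First I would bound the time for a write to become visible on-chain. By the epoch model, the update submitted at $t_1$ is batched and included in the \texttt{update()} transaction that the DO emits by the end of its epoch, i.e.\ no later than $t_1 + E$. By the blockchain model, this transaction reaches every node within additional propagation time $P_t$ and is finalized after $F$ blocks, i.e.\ within $B\cdot F$ further time units. Hence by time $t_1 + E + P_t + B\cdot F$ the \texttt{update()} call is finalized on every node: if $k$ is replicated ($R$) the contract field \texttt{KVReplicas[k]} holds $v$, and in all cases the on-chain \texttt{rootHash} has been advanced to the digest committing $v$ (and every other update in the same or earlier batches). Since $t_2$ strictly exceeds this time, the on-chain state seen by the read at $t_2$ already reflects the write.

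Next I would split on the replication state at read time. In the $R$ case \texttt{gGet(k)} reads \texttt{KVReplicas[k]} directly from the finalized on-chain replica and returns $v$, so freshness is immediate. In the $NR$ case \texttt{gGet(k)} emits a \texttt{request} event and the answer is supplied by the untrusted SP through a \texttt{deliver} transaction, whose value is checked by \texttt{verify(key,value,proof,rootHash)} against the on-chain \texttt{rootHash}. Because that root hash already commits to $v$ by the previous paragraph, the ADS freshness property recalled in Section~\ref{sec:dataplane} (which bars an adversarial SP from forging, replaying, or omitting records relative to the committed root) forces any accepted \texttt{deliver} to carry a value consistent with a dataset containing $v$; a stale value would fail verification. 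Thus even the asynchronous, untrusted read path returns a fresh result.

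I expect the $NR$ case to be the main obstacle, since it is where adversarial behavior enters: the argument must pin the SP to the \emph{on-chain} \texttt{rootHash} rather than to some root of the SP's choosing, and then invoke the ADS freshness/completeness guarantee to rule out stale or omitted records. Two points would need care. First, the reduction from ``reflects the single \texttt{gPut(k,v)}'' to the general delay-$d$ definition: every update submitted before $t_2 - d$ undergoes the same batching--propagation--finality pipeline, so each is committed into \texttt{rootHash} before $t_2$, and verification against this root enforces all of them simultaneously, with the latest value winning under the key-ordered Merkle layout of Figure~\ref{fig:protocols:merkle}. Second, a monotonicity remark: updates landing between $t_2$ and the \texttt{deliver} execution may advance \texttt{rootHash} further, but this only makes the returned result at least as fresh, which still satisfies the delay-$d$ definition and does not break the bound.
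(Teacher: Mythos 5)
Your proposal is correct and follows essentially the same route as the paper's own proof: bound the write's on-chain visibility by the batching--propagation--finality pipeline ($E + P_t + F\cdot B$), then case-split on the replication state, handling the $NR$ case by pinning the untrusted SP to the on-chain digest so that omission or staleness fails Merkle/ADS verification. Your write-up is if anything slightly more careful than the paper's (which leaves the \texttt{rootHash} anchoring and the monotonicity of later updates implicit), but the decomposition and the key ideas are the same.
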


\label{sec:latency} 
{\bf Supporting delay-sensitive applications}: GRuB incurs a maximum delay of $E$ to feed data to the blockchain.
Recall that in baseline BL2, data updates are sent directly, without batching, to the blockchain. BL2 guarantees the \texttt{gGet} query freshness w.r.t. delay $Pt+F\cdot{}B$.
Applications with the urgent need to feed data can be supported by BL2 where an individual data update is fed to the blockchain immediately after the DO produces it. Note that one can retrofit BL2 to GRuB for supporting these applications where data updates are selected to opt for BL2.
}

\subsection{Implementation Notes}
\label{sec:impl}

We have built a prototype of GRuB with Ethereum and a Google LevelDB~\cite{me:leveldb} instance. Note that GRuB's design is generally applicable to any storage service exposing a KV store interface (e.g., Amazon S3), an IaaS cloud service allowing user-deployed code (e.g., Amazon EC2) and any blockchains supporting smart contracts. 
In the prototype, the storage-manager smart contract is implemented in solidity~\cite{me:sol}. The off-chain code is written in Python. In particular, the replica manager and ADS protocol relies on a Python binding to interact with the underlying LevelDB~\cite{me:leveldb}.
In practice, we use the suggested transaction fee (e.g., $21000$ Gas) and Gas price (i.e., 2 GWei) in the evaluation (Section~\ref{sec:cases} and Section~\ref{sec:eval}), which are sufficient for Ethereum Ropsten testnet~\cite{me:ropstern} to accept our transactions. How to set Gas price under more adversarial settings such as DoS attacks is out of the scope of this paper.

\section{Case Studies}
\label{sec:cases}
We have built two real applications on GRuB. One is an Ether-backed stablecoin based on a price feed by GRuB and the other is a cross-chain token exchange between Ethereum and Bitcoin based on a BtcRelay style side-chain feed. 

\subsection{Stablecoins based on Price Feeds}
\label{sec:cases:stablecoin}

Recall that indirectly-backed stablecoins require feeding the price of the asset that backs the stablecoin. For instance, in stablecoin platform MakerDAO, each currency unit, a DAI, is pegged and redeemable to one-USD worth of Ether. Issuing and redeeming DAI requires Ether price feeds. We build a GRuB-based price feed and use it to support a custom stablecoin SCoin that simulates a simpler DAI.

Specifically, we build a price feed based on GRuB where the KV records store the prices of different assets including Ether.
SCoin is implemented as a custom ERC20 token whose supply (in terms of token issuance and redemption) is controlled by a smart contract we build, listed as SCoinIssuer.
The smart contract issues SCoins upon receiving Ether payments from an external buyer (i.e., \texttt{issue} function), and upon a seller's request to redeem an SCoin, transferring one-USD worth of Ether to the seller before destroying the SCoin (i.e., \texttt{redeem} function). To make sure SCoin is pegged and redeemable to one USD, the smart contract needs to read the Ether price at the time of issuance and redemption, as well as requiring over-collateralization and locking up remaining Ether. This implements a minimalist MakerDAO based on the working example in~\cite{10.1145/3387945.3388781}. 

{\bf Cost evaluation}: For Gas evaluation, we implemented three price feeds, including GRuB and the two static baselines (BL1 and BL2). We used the call trace of a real price feed, ethPriceOracle~\cite{me:ethpriceoracle}. Recall Section~\ref{sec:motivateapps} that this trace records the Ether-price updates and reads from April 25th, 2018 to April 30th, 2018. In our experiment, we set up multiple assets in the price feed: In practice, there are many assets that can be used to back a stablecoin, such as more than 2500 tokens~\cite{me:erc20list} just on Ethereum, fiat currencies (e.g.,USD, Japanese Yen, Euro) and various commodities (gold). We thus set up a KV store of $4096$ records in the price feed, each presenting an asset and its price ($\langle{}asset\_name, price\rangle{}$). In this setup, a \texttt{gPuts} batches price updates of 10 assets, which we use duplicates of the Ether price updates. 
Each \texttt{peek()} call in the trace issues 
a \texttt{gGet} invocation with 
with a callback to SCoinIssuer's \texttt{issue()} or \texttt{refund()}, at the equal chance.
By this means, we drive the call trace into GRuBPriceFeed and SCoinIssuer. 

\begin{figure}[!ht]
  \centering
  \includegraphics[width=0.425\textwidth]{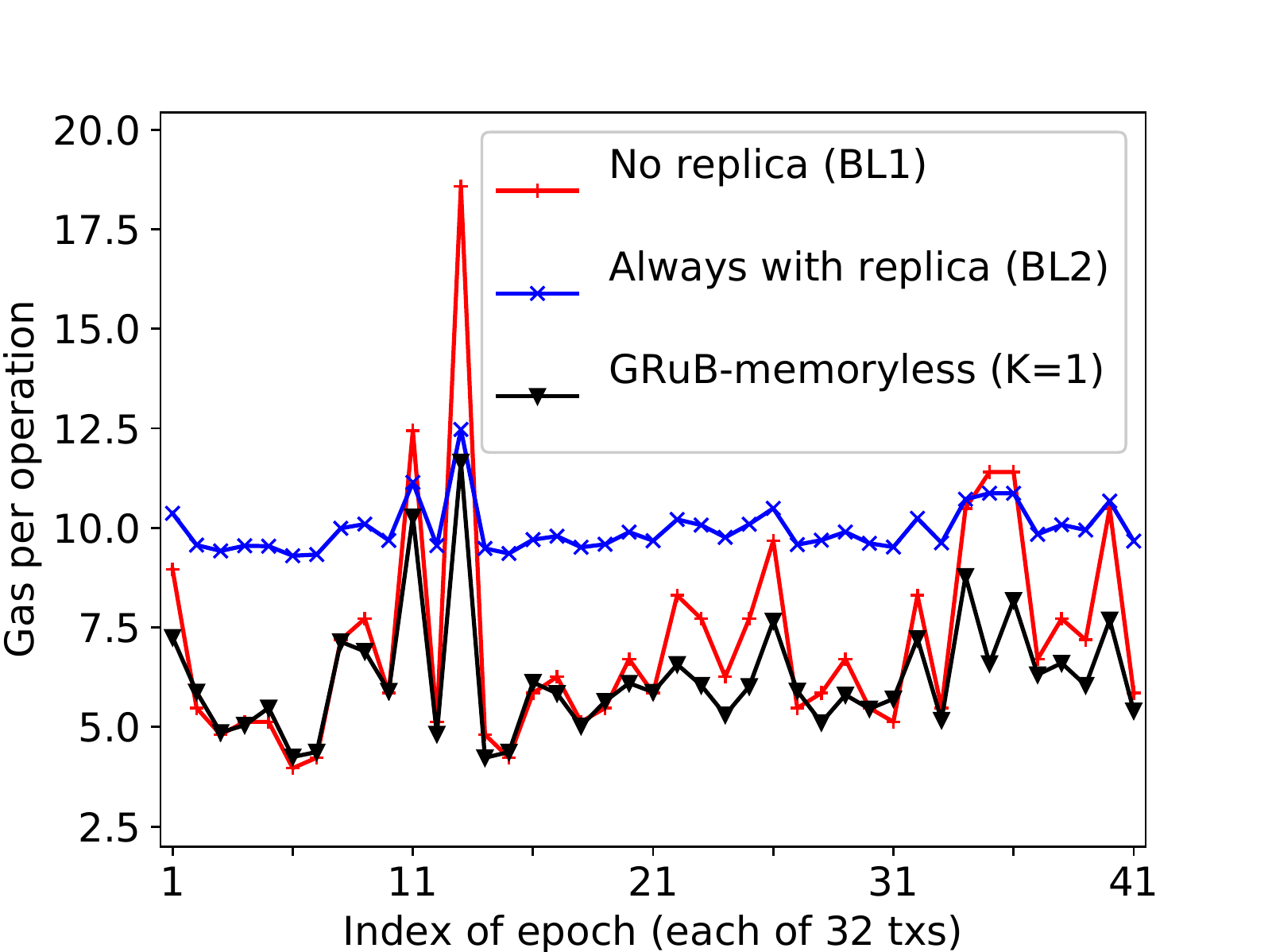}%
  \caption{Gas under the 5-day trace (ethPriceOracle)}
  \label{fig:gas:ethpricefeed}%
\end{figure}

\begin{table}[!htbp] 
\caption{Gas at the data-feed layer and Gas of the end application: $M$ denotes million of Gas.}
\label{tab:gas:ethprice}\centering{\small
\begin{tabularx}{0.35\textwidth}{ |X|c|c| }
  \hline
  & Price feed & SCoinIssuer \\ \hline
  BL1 & $83M$ ($+64\%$) & $86M$ ($+67\%$) \\ \hline
  BL2 & $55M$ ($+11\%$)  & $56M$ ($+8.7\%$) \\ \hline
  GRuB & $50.6M$ & $51.7M$ \\ \hline
\end{tabularx}
}
\end{table}

The result illustrated in Figure~\ref{fig:gas:ethpricefeed} shows that GRuB consistently achieves the lowest Gas per operation among the three. Most of the time, BL1 and GRuB achieve lower Gas than BL2. The exception is around epoch 11 when it involves more data reads that put BL1 at disadvantage. Even in this case, GRuB achieves lower Gas than BL2. 

Table~\ref{tab:gas:ethprice} shows the Gas cost at the data-feed layer and in the end application (SCoinIssuer). It can be seen while SCoinIssuer adds Gas due to application-specific logic, the Gas saving at the data feeding layer is still quite significant.

{\color{black}
\subsection{BtcRelay Side-chains and Pegged Tokens}
\label{appendix:case:sidechains}

BtcRelay feeds Bitcoin blocks to Ethereum and is an important building block for Bitcoin-pegged tokens on Ethereum. We use GRuB to enable BtcRelay by storing the mappings of block hash and Bitcoin block header in the KV store. The DO runs a trusted off-chain Bitcoin client that gets notified every time a Bitcoin block is found. 

Based on this data feed, we build a Bitcoin-pegged ERC20 token as an application. The DU smart contract is a simple ERC20 token that supports the operations of mint and burn that consume Bitcoin blocks from the feed: A token-mint (token-burn) operation requires verifying the inclusion of a Bitcoin-deposit (Bitcoin-redeem) transaction against recent Bitcoin blocks from the feed. 

{\bf Building benchmark: Methodology}: We collected the trace of transactions to mint/burn eight Bitcoin-pegged tokens known from etherscan.io~\cite{me:bitcoinpegged}. The transactions are obtained from Ethereum ETL service on Google BigQuery~\cite{me:ethtrace:bigquery} and create a token-contract workload benchmark using the method detailed below.

\begin{figure}[!ht]
  \begin{center}
     \includegraphics[width=0.35\textwidth]{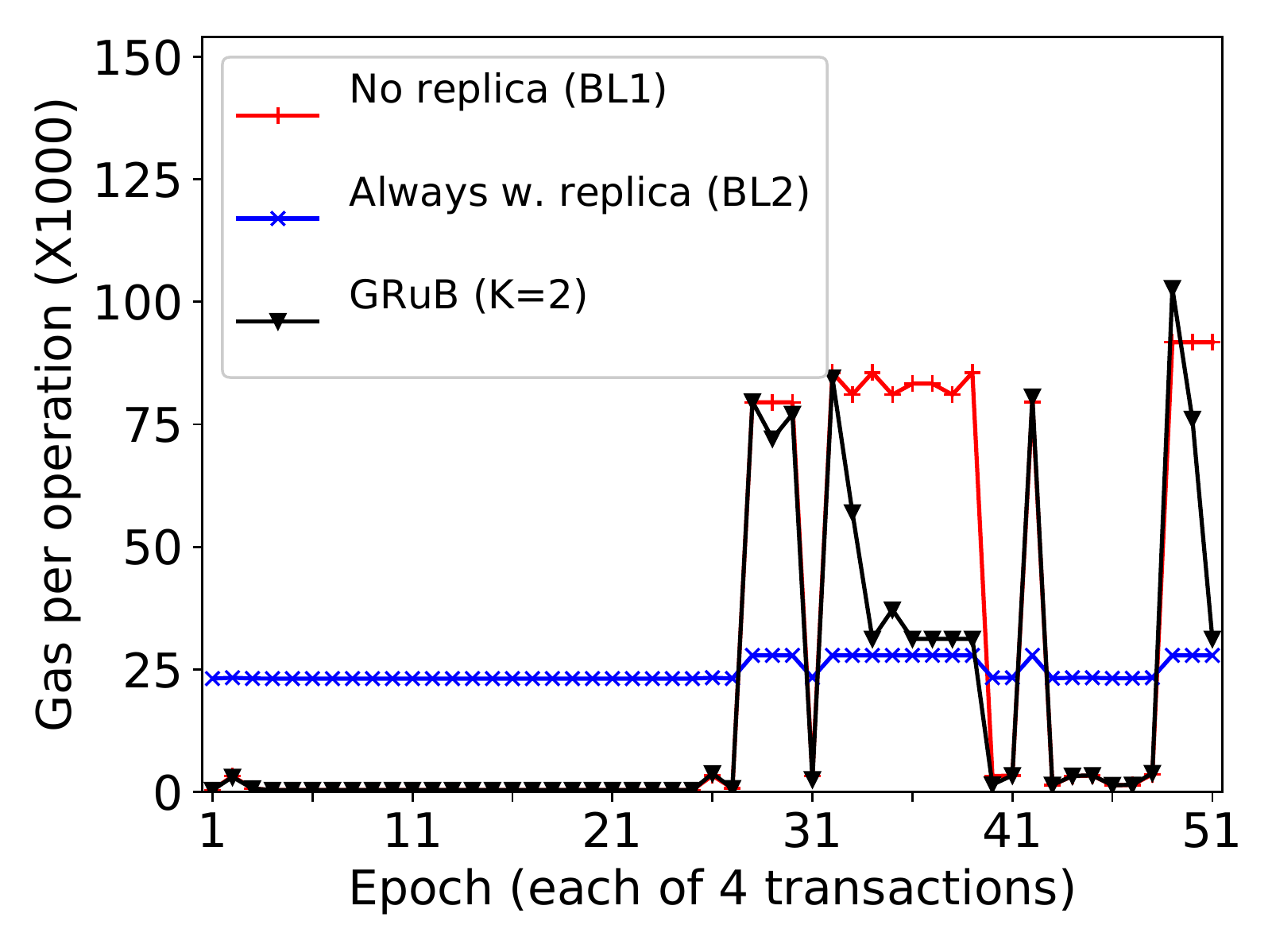}
  \end{center}
  \caption{GRuB under the BtcRelay trace}
  \label{fig:grub:btcrelay}%
\end{figure}

{\bf Experiment results}: 
We measure the Gas cost by GRuB under the workload. We set up an epoch that contains four transactions and drive the established benchmark to our experiments. Particularly, unlike the ethPriceOracle, the BtcRelay workload does not overwrite existing records, but instead appends updates to them. We configure GRuB with reusable storage upon replicating a record. To make the room, previously replicated records are evicted if they are not accessed for a long time.

The result of Gas cost per operation is reported in Figure~\ref{fig:grub:btcrelay}. 
The trace  of the first 25 epochs is write-intensive. In this phase, BL1 outperforms BL2, and GRuB converges to BL1. From epoch 26 to epoch 50, the trace becomes more read-intensive. And BL2 outperforms BL1, and GRuB gradually converges to BL2 (at epoch 34). Overall, GRuB's Gas saving is $56.7\%/14.5\%$ compared with BL1/BL2.
}

\section{Cost Evaluation}
\label{sec:eval}

This section presents the experiments for evaluating the Gas of GRuB. Specifically, our experiments are designed to answer the following questions:

1. Whether and how fast will GRuB converge to changing workloads?

2. How sensitive is GRuB's cost to the various parameters that GRuB exposes? 

We perform experiments under microbenchmarks (Section~\ref{sec:convergence}) and macro-benchmarks with YCSB~\cite{DBLP:conf/cloud/CooperSTRS10} (Section~\ref{sec:macro}).


\subsection{Microbenchmarks: Converged Gas under Repeating Workloads}
\label{sec:convergence}

\begin{figure}[!ht]
\begin{center}
\includegraphics[width=0.425\textwidth]{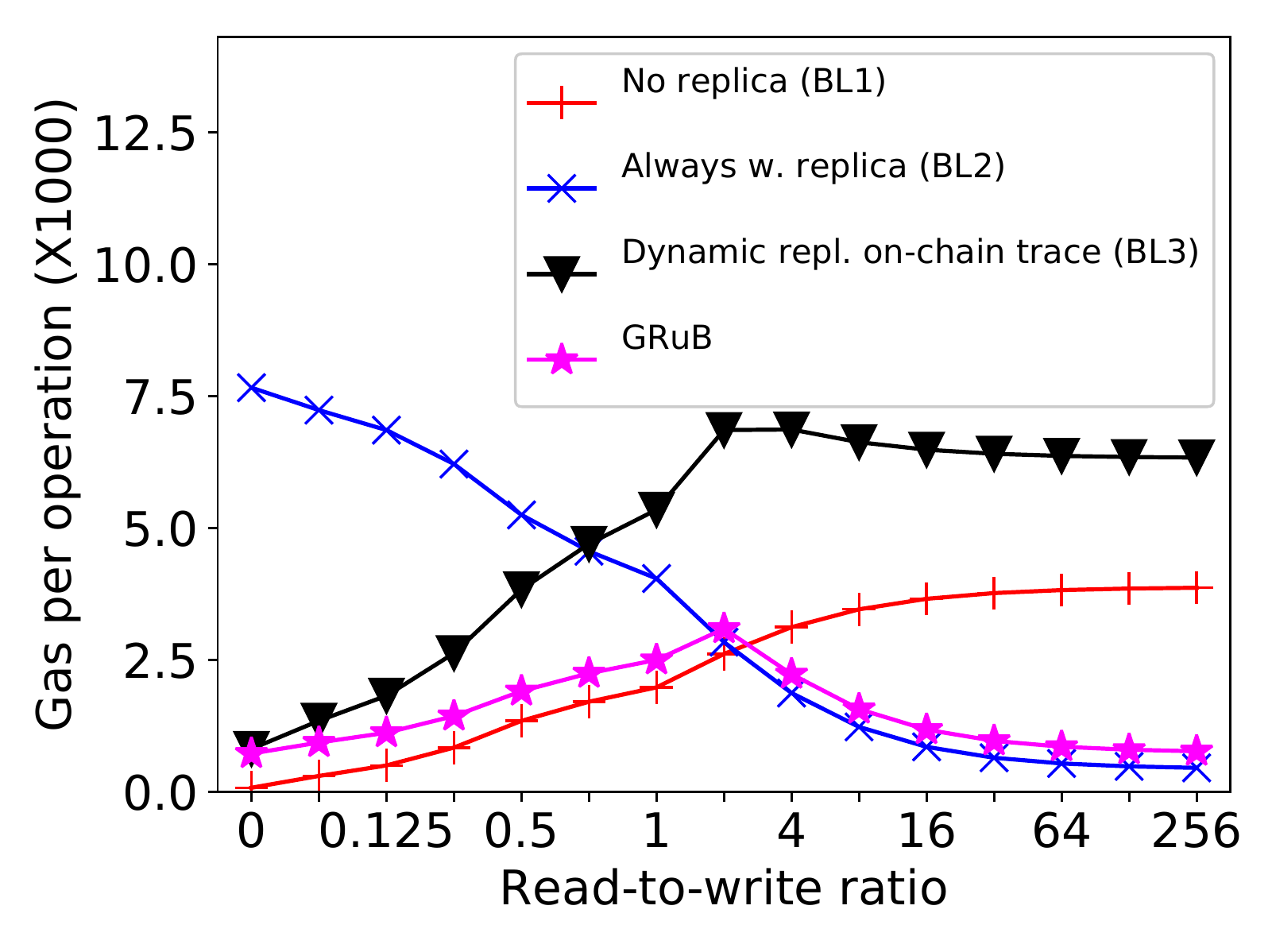}
\end{center}
\caption{GRuB's Gas with varying read-write ratios}
\label{fig:mi:rw}
\vspace{-0.10in}
\end{figure}

In this subsection, we evaluate GRuB's Gas under repeating workloads. We generate the workload that consists of repeated reads and writes under a fixed ratio. Under such workloads, GRuB makes the same decisions, and the Gas becomes converged. Our goal is to evaluate the converged Gas under different factors.  

\label{sec:converged:rw}
\noindent{\bf Read-to-write ratio}:
In this experiment, we evaluate the Gas with different read-to-write ratios. For comparison to GRuB, we consider both baselines of static data replication (i.e., BL1 and BL2). Also, we consider the two baseline designs for dynamic data replication that respectively store on the Blockchain, the trace of reads and writes, and the trace of reads. In each experiment, we drive the synthetic workload of a specific read-to-write ratio to the system and measure the total Gas. We report the average Gas per operation.

In the results reported in Figure~\ref{fig:mi:rw}, baseline BL1 (BL2) has its Gas increased (decreased) as the workload shifts from write-intensive to read-intensive. There is a crossover between BL1 and BL2 when the workload's read-to-write ratio is around 2. GRuB's Gas is slightly higher than BL1 for the read-to-write ratio smaller than 2 and is slightly higher than BL2 for the ratio larger than 2. Note that choosing the one between BL1 and BL2 with lower Gas constitutes an ideal, Gas-optimal dynamic-replication scheme. In this sense, GRuB's (converged) Gas is close to the optimal case. Comparing with the two dynamic-replication baselines, GRuB saves Gas significantly: Especially in read-intensive workloads, GRuB's Gas savings can reach an order of magnitude.
 
\begin{figure}[!ht]
  \begin{center}
    \subfloat[Memoryless and memorizing algorithms under repeating workloads]{%
        \includegraphics[width=0.245\textwidth]{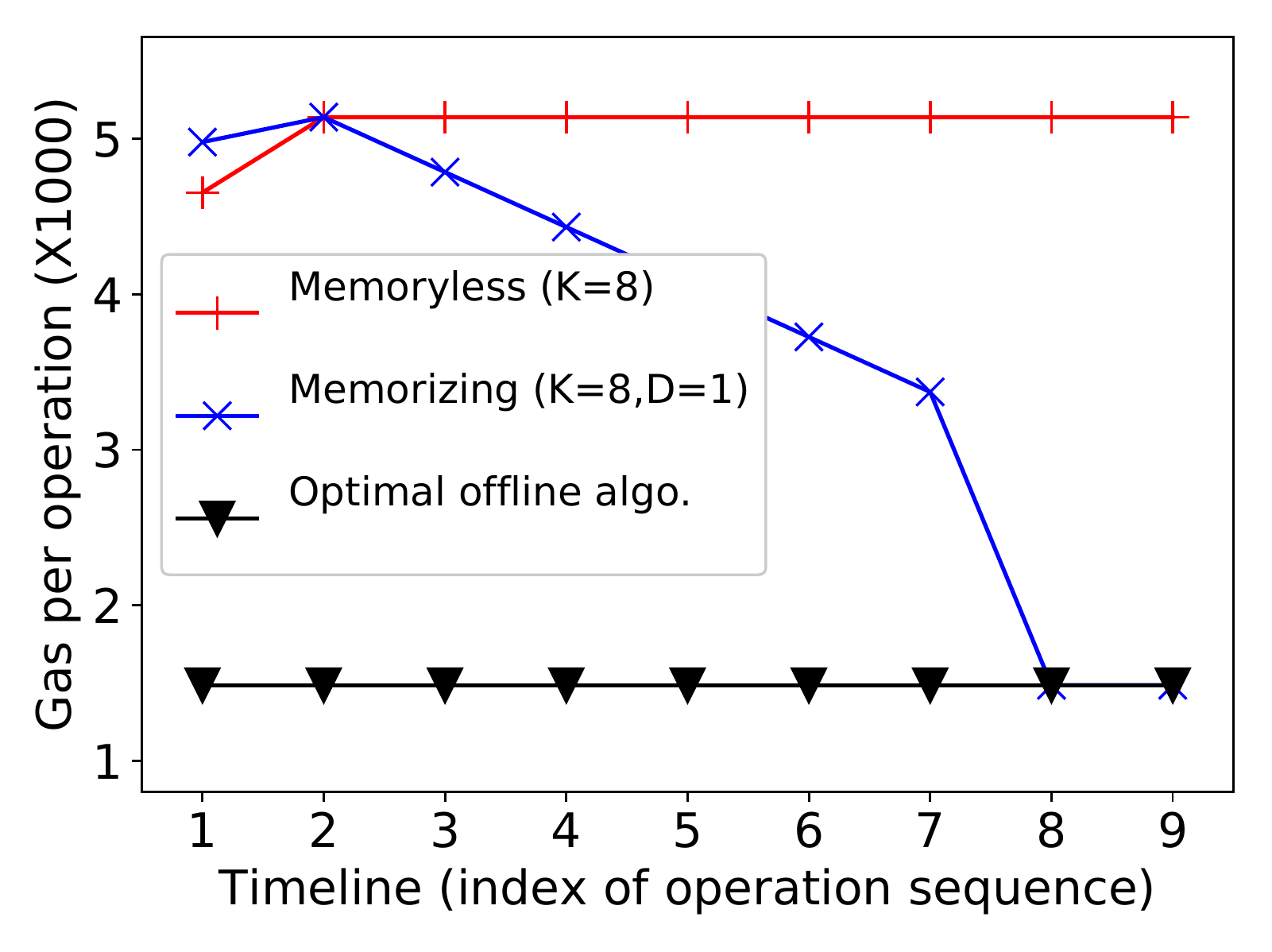}
        \label{fig:mi:vs}
    }
    \subfloat[Varying record size]{
        \includegraphics[width=0.245\textwidth]{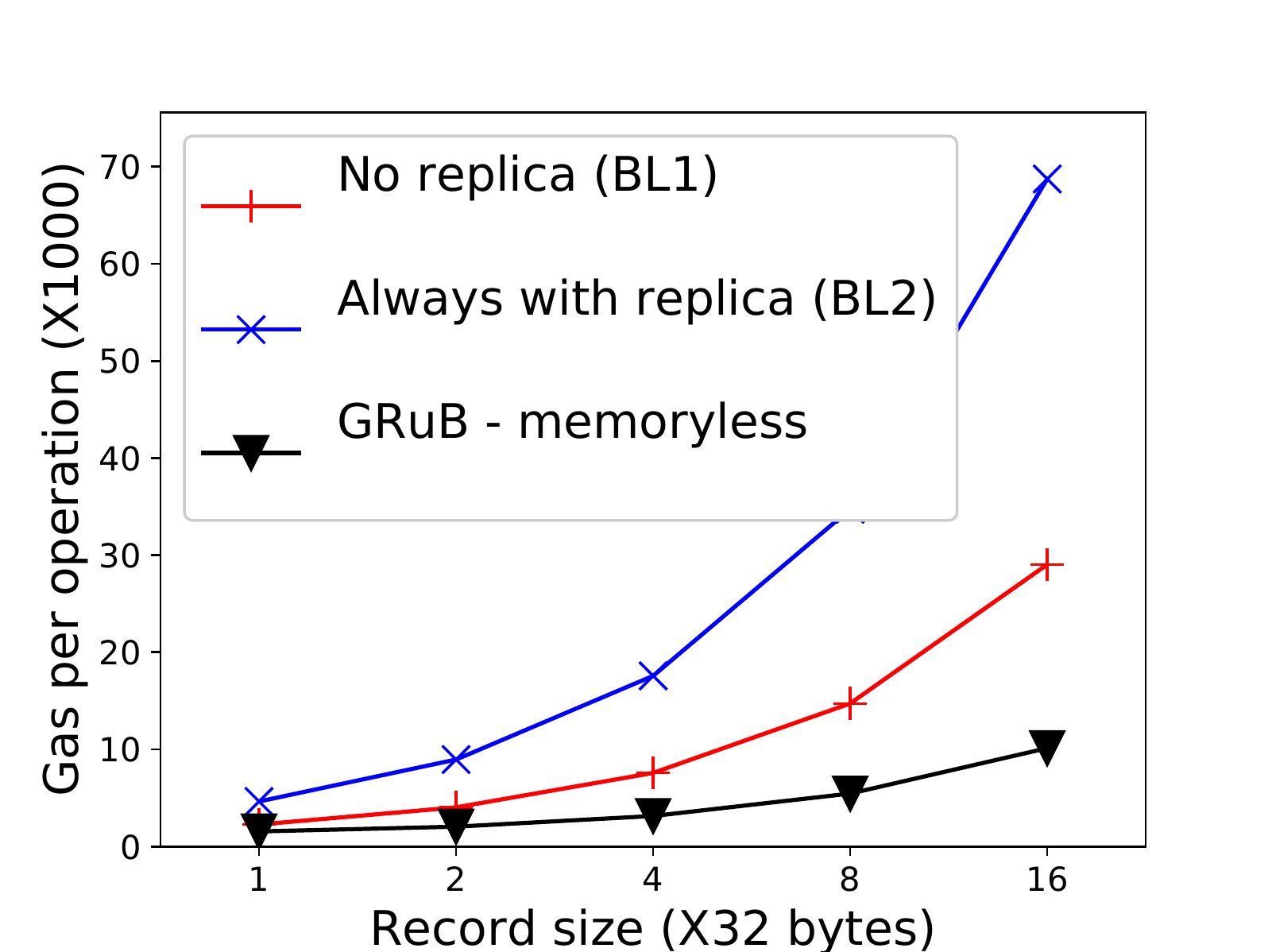}
        \label{fig:mi:recordsize}
    }%
 \end{center}
  \caption{Gas with different algorithms and record sizes}
\end{figure}

\noindent{\bf Choice of the algorithm}:
In this experiment, we evaluate how the choice of algorithms affect GRuB's Gas. Recall that we proposed two decision-making algorithms, and they differ in that the memoryless (memorizing) algorithm decides without (with) remembering the historical operations. To contrast the two algorithms to the maximal degree, we use the following experimental setting: We set parameter $K=K'$ and use the workload of read-to-write ratio $K+1$. We drive the workload to GRuB with the two different algorithms. Figure~\ref{fig:mi:vs} reports the Gas per operation along with the timeline (indexed by transactions, each encoding 32 operations). It can be seen GRuB with the memoryless algorithm incurs constant Gas, which is about 5 times higher than the optimal offline decision-making (whose Gas is calculated in a similar way with the previous experiment in Section~\ref{sec:converged:rw}). GRuB, with a memorizing algorithm, configured with $K'=8, D=1$, initially has a similar level of Gas consumption with memoryless GRuB, and then gradually reduces the Gas close to the optimal algorithm.

\noindent{\bf Record size}:In this experiment, we evaluate how GRuB's Gas is affected by data record size. We vary the record size from one word (32 bytes) to 16 words. The experiment results reported in Figure~\ref{fig:mi:recordsize} show that Gas per operation increases linearly with the record size. GRuB is cheaper in Gas than both BL1 and BL2. When the record is of 16 words, the Gas savings by GRuB reach the max, that is, $7\times$ and $3\times$ compared to BL2 and BL1, respectively.

\subsection{Macro-benchmarks on YCSB}
\label{sec:macro}
\begin{figure}[!ht]
  \begin{center}
\includegraphics[width=0.425\textwidth]{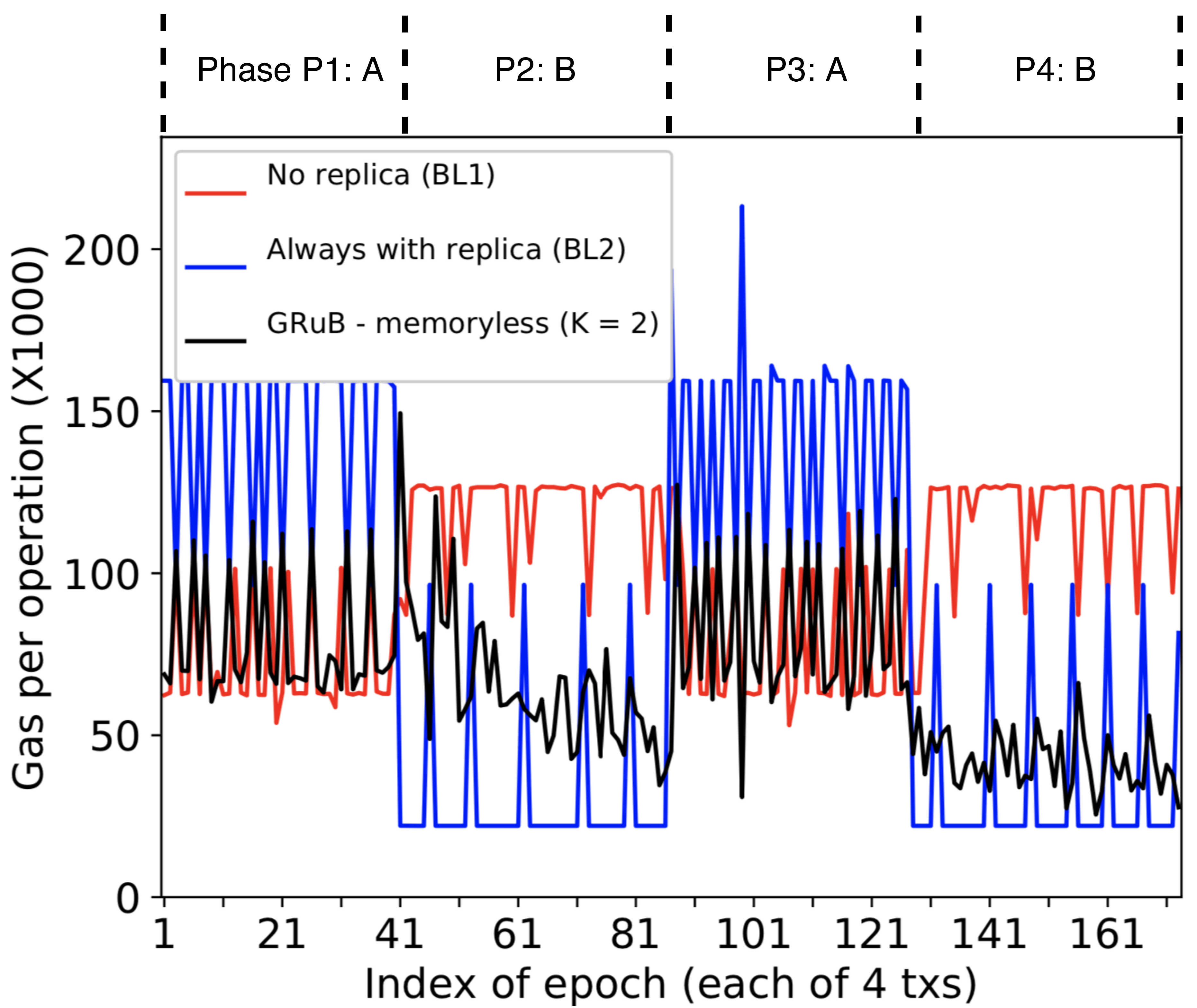}
\end{center}
\caption{GRuB under mixed YCSB workloads (A and B)}
\label{fig:ab}
\label{fig:macro:mixedycsb}
\end{figure}

\begin{table}[!htbp] 
\caption{Aggregated Gas for mixed YCSB workloads}
\label{tab:macro:mixedycsb}\centering{\scriptsize
\begin{tabularx}{0.45\textwidth}{|X|c|c|c|}
\hline
Workload & BL1 & BL2 & GRuB \\ \hline
{A, B} & 1438,130,508 (+31.6\%) & 1588,684,289 (+45.4\%) & 1092,576,982 \\ \hline
{A, E} & 1400,290,302 (+25.7\%) & 1936,114,585 (+73.8\%) & 1114,217,927 \\ \hline
{A, F} & 1746,854,231 (+54.1\%) & 1252,009,322 (+10.4\%) & 1133,858,720 \\ \hline
\end{tabularx}
}
\end{table}

This set of experiments are designed to evaluate the Gas of GRuB under mixed YCSB workloads. YCSB~\cite{DBLP:conf/cloud/CooperSTRS10} is an industrial-strength benchmark providing six KV-store workloads, codenamed from A to F, that model some real workloads in Yahoo cloud services. We use YCSB workloads to evaluate GRuB, because of the following reason: GRuB exposes the same KV-store API with most cloud-storage services and brings trustworthiness to these services. Thus, GRuB can and should be a secure alternative to host cloud workloads, especially for security-sensitive applications. In preparing GRuB macro-benchmarks, we mix multiple YCSB workloads, for instance, Workload A and E. 

In our experiments, we used three combinations: mixing Workload A and B, mixing Workload A and E, and mixing Workload A and F. Workload A/B/E/F respectively feature $50\%$ reads/$95\%$ reads/$95\%$ scans/$75\%$ reads as well as different key-distribution strategies. In each mixed workload, we pre-load $2^{16}$ KV records to the GRuB. In Workloads A,B and Workloads A,E, each KV record is set to be $1024$-bytes long. In Workload A,F, each KV record is $32$-byte long. Each experiment consists of four phases, in each of which one workload generator (i.e., A/B/E/F) is chosen to produce $4096$ operations. We report the average Gas per operation for every four transactions (or an epoch).

We report the experiments results by time-series data in Figure~\ref{fig:macro:mixedycsb} and by aggregate results in Table~\ref{tab:macro:mixedycsb}. It can be seen that in the first phase P1 (Workload A of 50\% reads), the non-replication baseline, BL1, performs better than replication baseline BL2. In Phase P1, GRuB's Gas is close to that of BL1. In the second phase, when the workload switches to Workload B (of 95\% reads), the replication baseline BL2 achieves lower Gas than BL1. In Phase P2, GRuB's Gas is lower than BL1 but higher than BL2: Especially at the beginning of P2, GRuB incurs high Gas because of the decision to replicate a KV record being read. Phase P3 is similar to P1. In Phase P4, GRuB's Gas is much lower than P2 because records being read in this phase may already be replicated in Phase P2. The aggregated result, the Gas per operation averaged overall operations, is reported in Table~\ref{tab:macro:mixedycsb}, where GRuB saves $32\%$ Gas of BL2 and $46\%$ Gas of BL1. {\color{black} Note that Figure~\ref{fig:macro:mixedycsb} shows a lower rate of saving than Figure~\ref{fig:mi:rw}, because the YCSB workloads tested here are with more restrictive read-write rates than the synthetic workloads used in Figure~\ref{fig:mi:rw}.}

\section{Related Work}
In this section, we describe two research bodies most relevant to our work: cost-effective blockchain applications and workload-aware data replication schemes.

{\bf Cost-effective blockchain applications}: It is well known that blockchain has limited throughput in handling transactions~\cite{DBLP:conf/fc/CromanDEGJKMSSS16} and incurs high unit cost to execute smart contracts. To reduce the blockchain costs, general approaches are developed by focusing on a permissioned setting~\cite{DBLP:journals/pvldb/GuptaRHS20,DBLP:conf/icdcn/BuchnikF20,DBLP:conf/srds/RuschBK19,DBLP:conf/dsn/SousaBV18}, or by sharding the blockchain and other layer-one designs~\cite{DBLP:conf/sp/Kokoris-KogiasJ18,DBLP:conf/ccs/LuuNZBGS16,DBLP:conf/srds/RuschBK19}. Unfortunately, these new blockchain designs cannot be integrated with an operational blockchain and are known to be difficult to deploy at scale.

A more practical design paradigm, also more relevant to our work, is the layer-two approaches that aim at reducing the use of blockchain in a domain application, without changing the underlying blockchain mechanisms. Notably, payment channels and networks~\cite{me:lightning,DBLP:journals/corr/MillerBKM17,DBLP:journals/corr/abs-1804-05141} process multiple micro-payments off-chain with issuing two Bitcoin transactions. There are payment networks adopted in practice, such as Lightning Networks in the Bitcoin mainnet~\cite{me:lightning}. 
Teechain~\cite{DBLP:conf/sosp/LindNEKSP19} is a payment network that offloads the detection of participant misbehavior to trusted hardware off-chain, further reducing the involvement of blockchain and improving the application throughput. 
Similarly, Tesseract~\cite{DBLP:conf/ccs/BentovJ0BDJ19} employs off-chain trusted hardware to facilitate the payments and exchanges with lower-level involvement of blockchains, to achieve real-timeness and higher throughput. 
Beyond the simple application of payments, there are authenticated data structures proposed, such as TPAD~\cite{DBLP:conf/fc/TangX0CX18} and $GEM^2$ trees~\cite{DBLP:conf/icde/Zhang0XTC19}, to enable the secure handling of database queries off-chain.
These layer-two protocols and systems have their off-chain component statically fixed and are not aware of the changing workloads. By contrast, GRuB is the first work that dynamically replicates data onto blockchain.

Testimonium~\cite{DBLP:journals/corr/abs-2004-10488} is a Gas-effective blockchain relay (or in our terminology, side-chain feed) which achieves low Gas by lazily validating blocks from a remote blockchain. Our GRuB differs from Testimonium in two senses: First, our focus on data feeding makes GRuB more generally applicable. GRuB supports applications that rely on real-world data feed (e.g., price-feed based stablecoins) that Testimonium cannot support. Second, Testimonium can be thought of as a static data-replication scheme, as it optimistically stores blocks from the remote blockchain without validation. 

TownCrier~\cite{DBLP:conf/ccs/ZhangCCJS16} is a provable-secure data feed service built on off-chain trusted hardware that connects TLS-certified websites to blockchains. The data-feed storage in TownCrier is always off-chain and it does not address the dynamic data replication as in GRuB.

Gasper~\cite{DBLP:conf/wcre/ChenLLZ17} is a compiler-based optimization pass that detects Gas-inefficiency anti-pattern in the generated contract bytecode. This language-based optimization takes into account the information at syntactic level and is not aware of any application semantics. Compared to Gasper, GRuB is modeled based on the data-feed design pattern of blockchain applications and is aware of application semantics. 

{\bf Workload-aware data replication}: In distributed databases, adaptive data replication~\cite{DBLP:journals/tods/WolfsonJH97,DBLP:conf/icde/HuangW93,DBLP:conf/sigmod/HuangSW94} has been studied: A framework has been proposed by dynamically monitoring the workload and making replication decisions based on the current workload. 
Many web applications exhibit skewed data-access patterns. MET~\cite{DBLP:conf/eurosys/CruzMMOPPV13} is a KV store management system that adapts the system configuration and cloud-resource provisioning to the current workload.
In designing P2P-based DNS services, Beehive~\cite{DBLP:conf/nsdi/RamasubramanianS04} is a proactive data replication scheme that is tailored for Zipf query distribution and achieves the constant look-up cost.
GRuB's dynamic replication scheme is inspired by these classic techniques and addresses the technical challenges when combining these classic techniques with blockchains' cost model.

\section{Conclusion}
This work presents GRuB, a dynamic data replication scheme that achieves low Gas under changing data-access workloads. GRuB runs a Gas-aware, security-centric control framework off the Blockchain. 
Evaluation shows GRuB saves Gas by upto 70\% compared with existing approaches.

\appendix

{\small
\section{Appendix: Algorithm Analysis}
\label{sec:algo:analysis}

This section analyzes the competitiveness of online algorithms, that is, the worst-case complexity compared with that of an optimal off-line algorithm. 

\begin{theorem}
\label{thm:memoryless}
Memoryless Algorithm~\ref{alg:memoryless} with parameters configured by Equation~\ref{eqn:memoryless:parameter} is $2$-competitive w.r.t. the Gas cost.
\end{theorem}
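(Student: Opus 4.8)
The plan is to recognize this as an instance of the classical rent-or-buy (ski-rental) problem and to exploit the memoryless structure of Algorithm~\ref{alg:memoryless} to reduce the global competitive analysis to a per-phase one. First I would partition any operation sequence $\sigma$ into \emph{phases}, where each phase begins immediately after a write (or at the start of $\sigma$) and ends at the next write. A write resets both the counter and the replication state to $NR$, so the online algorithm treats each phase identically and independently; crucially, a write also destroys the value of any on-chain replica, so no offline strategy gains by carrying a replica across a phase boundary. This lets me argue that it suffices to bound the cost ratio within a single phase: if $\mathrm{ALG}$ spends at most $2\cdot\mathrm{OPT}$ on every phase, summing over phases yields $\mathrm{ALG}(\sigma)\le 2\cdot\mathrm{OPT}(\sigma)$, which is exactly strict $2$-competitiveness.

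Next I would compute the two costs inside a phase with exactly $n$ reads. ``Renting'' means serving a read on an $NR$ record by shipping it on-chain at unit cost $C_{read\_off}$; ``buying'' means paying the one-time replication cost $C_{update}$, after which reads are served from cheap on-chain storage. The memoryless algorithm rents for the first $K$ reads and then, upon hitting the threshold, buys, so its phase cost is $n\,C_{read\_off}$ when $n<K$ and $K\,C_{read\_off}+C_{update}$ when $n\ge K$. The optimal offline cost for the phase is $\mathrm{OPT}=\min(n\,C_{read\_off},\,C_{update})$, since an omniscient algorithm either never replicates (paying per read) or replicates at the very first read (paying the storage cost once).

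Then I would carry out the short case analysis. For $n<K$ the online and offline costs coincide, $\mathrm{ALG}=\mathrm{OPT}=n\,C_{read\_off}$, giving ratio $1$. For $n\ge K$ the binding case appears: $\mathrm{ALG}=K\,C_{read\_off}+C_{update}$, and the ratio is $1+K\,C_{read\_off}/C_{update}$, matching the general bound stated before Equation~\ref{eqn:memoryless:parameter}. Substituting the configured value $K=C_{update}/C_{read\_off}$ from Equation~\ref{eqn:memoryless:parameter} collapses this to $1+1=2$; moreover, with this $K$ every $n\ge K$ satisfies $n\,C_{read\_off}\ge C_{update}$, so $\mathrm{OPT}=C_{update}$ uniformly and the per-phase ratio is exactly $2$, completing the per-phase bound and hence the theorem.

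The main obstacle I expect is not the ski-rental core but the careful accounting of the terms I am tempted to sweep under the rug: the storage-write cost incurred at a phase boundary when a replicated record is overwritten or evicted, and the residual on-chain read cost $C_{read}$ paid for reads served after buying. I would need to verify that these are charged symmetrically to $\mathrm{ALG}$ and $\mathrm{OPT}$ (so they cancel in the ratio) or are dominated by $C_{update}$, and to settle the off-by-one in exactly which read triggers the transition so that the online phase cost is precisely $K\,C_{read\_off}+C_{update}$. Equally, I must confirm that the phase decomposition is genuinely valid for the offline optimum, i.e.\ that no offline strategy beats the per-phase optimum by coordinating replication decisions across writes; establishing this independence rigorously, rather than the arithmetic, is where the real work lies.
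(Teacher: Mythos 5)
Your proposal is correct and takes essentially the same route as the paper's own proof: both are per-write-phase ski-rental analyses in which the online algorithm's binding case pays $K\,C_{read\_off}+C_{update}$ against an offline optimum of $C_{update}$ (the offline algorithm replicating at the write when at least $K$ reads follow), giving ratio $1+K\,C_{read\_off}/C_{update}=2$ under Equation~\ref{eqn:memoryless:parameter}. The only difference is one of rigor, in your favor: the paper simply exhibits the worst-case sequence (each write followed by exactly $K$ reads) and computes the ratio there, whereas you make explicit the phase decomposition, the $n<K$ versus $n\ge K$ case split, and the boundary/accounting issues (eviction cost, residual $C_{read}$) that the paper's proof silently abstracts away.
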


\begin{proof}
We first set up the stage by considering an ideal offline algorithm with optimal cost. This offline algorithm can know the entire sequence of reads and writes in advance, and learn the cost-optimal decision. For instance, it can check given a write, if there are more than $K$ consecutive reads that occur after it (before the next write). If so, it can replicate the record at the time of the write, instead of waiting until $K$ reads as in the online algorithm. 

For our online algorithm, the worst-case sequence of reads and writes is that every write is followed by exactly $K$ reads. This is the worst-case for our online algorithm because every data replica made by the algorithm is never read, in other words, the cost of replication is totally wasted without saving any cost (of follow-up reads). In this worst case, the cost of our algorithm is $K*C_{read\_off} + C_{update}$. 
In this case, the cost of the ideal offline algorithm is $C_{update}$. Thus, the competitiveness of our online algorithm is $1+K*\frac{C_{read\_off}}{C_{update}}$.

Plugging Equation~\ref{eqn:memoryless:parameter} in, we have a competitiveness is bounded by 
$1+\frac{C_{update}}{C_{read\_off}}*\frac{C_{read\_off}}{C_{update}}$, which is equal to $2$.
\end{proof}

\begin{theorem}
\label{thm:memorizing}
Memorizing Algorithm~\ref{alg:memorizing} is $\frac{4D+2}{K'}$-competitive.
\end{theorem}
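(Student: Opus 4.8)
The plan is to follow the same rent-or-buy framing used for the memoryless case (Theorem~\ref{thm:memoryless}), but to replace the single worst-case-segment argument with a phase-based amortized analysis, since Algorithm~\ref{alg:memorizing} carries $rCount$ and $wCount$ across runs rather than resetting. First I would reduce to a single record: because the algorithm maintains independent counters per key and Gas cost is additive over records, it suffices to bound the competitive ratio on one record's read/write subsequence. For that record, state $R$ makes writes expensive ($C_{write}$ per word) and reads cheap, while state $NR$ makes reads expensive ($C_{read\_off}$ per word) and writes cheap; the break-even reads-per-write ratio is exactly $K'=C_{write}/C_{read\_off}$, which is why the algorithm switches around that threshold. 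The offline optimum, defined as in the proof of Theorem~\ref{thm:memoryless}, knows the whole subsequence and may change replication state at the cost-optimal moments, paying only the one-time insert/evict.

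Next I would partition the operation subsequence into \emph{phases} delimited by the algorithm's state transitions. The transition guards pin down the counters at each switch: an $NR\to R$ switch fires when $rCount \ge wCount\cdot K' + D$, and an $R\to NR$ switch when $rCount \le wCount\cdot K' - D$; the resets ($rCount\leftarrow D$, $wCount\leftarrow D/K'$) bound how much counter mass is carried into the next phase. Within a phase where the algorithm sits in $NR$, its read Gas is $rCount\cdot C_{read\_off}$, whereas an optimum that had replicated would have paid $wCount\cdot K'\cdot C_{read\_off}$ on writes; the guard forces these to differ by at most the margin term $D\cdot C_{read\_off}$ plus the carried-over offset, and symmetrically for an $R$ phase. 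The goal of this step is to show that in each phase the algorithm's cost exceeds the per-phase optimum by at most an additive amount proportional to $D$, measured in units of $C_{read\_off}$ and hence bounded relative to the per-write threshold $K'$.

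To compose the phases I would use a potential function $\Phi$ measuring the signed distance of $(rCount,wCount)$ from the active switching threshold, so that the hysteresis margin $D$ and the counter resets appear as controlled changes in $\Phi$. Telescoping the per-phase inequalities over the whole trace then yields a total algorithm cost bounded by the optimum plus a waste term: the $4D$ in the numerator comes from charging the margin $D$ in each of the two transition directions and at both endpoints of a phase, and the additive $2$ from the one-time on-chain insert and eviction (costs $C_{insert}$ and a storage update) incurred per replicate/de-replicate, all normalized by $K'$. Collecting these gives the claimed $\frac{4D+2}{K'}$ bound.

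The main obstacle I anticipate is exactly the cross-phase memory that distinguishes this algorithm from the memoryless one: I must show that the adversary cannot exploit the carried-over counters (reduced to $D$ and $D/K'$ rather than to zero) to trigger an unbounded number of wasteful flips, each wasting up to $\Theta(D)$ operations. Making the potential argument tight therefore hinges on verifying that every flip strictly consumes counter mass that the guards require the workload itself to have supplied, so that the number and cost of flips are amortized against genuine read/write work rather than against the resets. A secondary technical point is handling the worst-case interleaving of reads and writes inside a phase and reconciling the $\ge/>$ boundary conditions in Algorithm~\ref{alg:memorizing} with the $D$-margin bookkeeping.
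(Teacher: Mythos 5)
Your route is genuinely different from the paper's, and it is worth being clear about what each does. The paper's proof is not an amortized upper-bound argument at all: it exhibits one adversarial sequence, built from blocks of $A_i$ reads followed by $B_i$ writes where each block is the \emph{minimum} size that triggers a state flip ($A_i = 2D+1$, $B_i = (2D+1)/K'$, obtained by combining the two transition guards of Algorithm~\ref{alg:memorizing}), argues informally that this sequence is the worst case (every flip is ``wrong'': a replica is created exactly when the reads stop and evicted exactly when the writes stop, so no decision ever pays off), and then computes the ratio directly: per block the algorithm pays $A_i C_{read\_off} + B_i C_{update}$ while the offline optimum pays $C_{update}$, giving $(2D+1)/K' + (2D+1)/K' = (4D+2)/K'$. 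Strictly speaking, this rigorously establishes only the lower bound, with the upper bound resting on the informal ``this is the worst case'' claim; your phase/potential analysis, if completed, would supply the rigorous upper bound over all sequences, which is the more ambitious direction.

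However, your plan has a gap precisely at the composition step. Per-phase bounds of the form ``algorithm cost $\le$ per-phase optimum $+\, O(D)\cdot C_{read\_off}$,'' telescoped over phases, yield $\mathrm{ALG} \le \mathrm{OPT} + (\#\text{phases})\cdot O\bigl((D + K')\,C_{read\_off}\bigr)$ (the $K'$ term being the replication insert at each flip). That is an additive guarantee, not a competitive ratio. To reach $(4D+2)/K'$ you additionally need a lower bound on OPT per flip cycle: each full $NR\to R\to NR$ cycle forces the workload to contain at least $2D+1$ reads while the algorithm is in $NR$ and at least $(2D+1)/K'$ writes while it is in $R$, and this forced alternation makes \emph{any} offline strategy pay at least about $\min(K',\, 2D+1)\cdot C_{read\_off}$ per cycle (either it replicates, paying $C_{update}=K' C_{read\_off}$, or it serves the reads off-chain). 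Your closing observation that ``flips consume counter mass supplied by the workload'' bounds the number of flips by workload volume, but workload volume alone costs OPT nothing --- reads are free to a replicated OPT and writes are free to a non-replicated OPT; only the alternation argument makes OPT pay, and it is exactly where the $K'$ in the denominator comes from. Two smaller points: both you and the paper implicitly need $2D+1 \ge K'$, otherwise OPT prefers never replicating within a cycle and the ratio caps at $2$; and your accounting of the constants does not match what actually comes out --- the $+2$ arises from the two ``$+1$'' margins of the strict transition inequalities, with the replication insert counted as one of the $B_i$ write units and eviction never charged, not from a separate insert-plus-evict term.
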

\begin{proof}
We use the same offline algorithm as in proving Theorem~\ref{thm:memoryless}. 
We consider the following sequence of reads/writes for analyzing the worst-case of our memorizing algorithm. The read-write sequence consists of a series of sub-sequences, where the $i$-th subsequence is of $A_i$ reads and $B_i$ writes. We will set $A_i$ and $B_i$ such that the algorithm will make ``wrong'' decisions about data replication: It will decide to replicate the data record when it sees $A_i$ reads, and then not to replicate after seeing the next $B_i$ writes. Because each replication decision is followed by writes, the replica is not being read. In other words, the cost of replication is paid without any cost benefit in serving reads by replica. Each no-replication decision is followed by reads, so the follow-up reads are served at the high cost without data replica. In summary, every decision made by the algorithm does not save the cost of serving the following operations, but still incurs the cost of data replication. In this sense, this sequence serves the worst case of our algorithm.

In the $i$-th sequence, when the algorithm sees $A_i$ reads, it satisfies the in-equation $(B_1+B_2+...B_i-1) * K' <= (A_1+A_2+...+A_i) - D$; When it sees $B_i$ writes, it satisfies the in-equation $(B_1+B_2+...+B_i) * K' > (A_1+A_2+...+A_i) + D$;
Combining the two in-equations, we conclude that $A_i > 2D$, $B_i >= A_i / K'$. Finally, the general formula for the $i$-th sequence is:
$A_i = D$ when $4i = 1$; $A_i = 2D+1$ when $i > 1$; $B_i = (2D+1) / K'$.

The cost of the $i$-th sequence in our algorithm is $A_i *C_{read\_off} + C_{update} + (B_i - 1) * C_{update}$, and the cost of the ideal offline algorithm is $C_{update}$; thus the competitiveness of the memorizing algorithm is $A_i * C_{read\_off}/C_{update} + B_i$, since $C_{read\_off}/C_{update}$ equals $1 / K'$, the competitiveness is $(4D+2) / K'$.
\end{proof}

\label{appendix:case:pricefeed}
}

\clearpage

\bibliographystyle{abbrv}
\bibliographystyle{ACM-Reference-Format}
{\scriptsize
\bibliography{bkc,bkc2,lsm,misc,crypto,ads,sgx,txtbk,distrkvs,yuzhetang}
}


\clearpage
\section{Detailed Data-Plane of GRuB}
\subsection{Static Data Structures}
\label{appendix:sec:securitystructures}

\begin{figure*}[!ht]
  \begin{center}
    \subfloat[GRuB's control plane (described in Section~\ref{sec:controlplane}) and data plane (described in Section~\ref{appendix:sec:dataplane})]{%
      \includegraphics[width=0.275\textwidth]{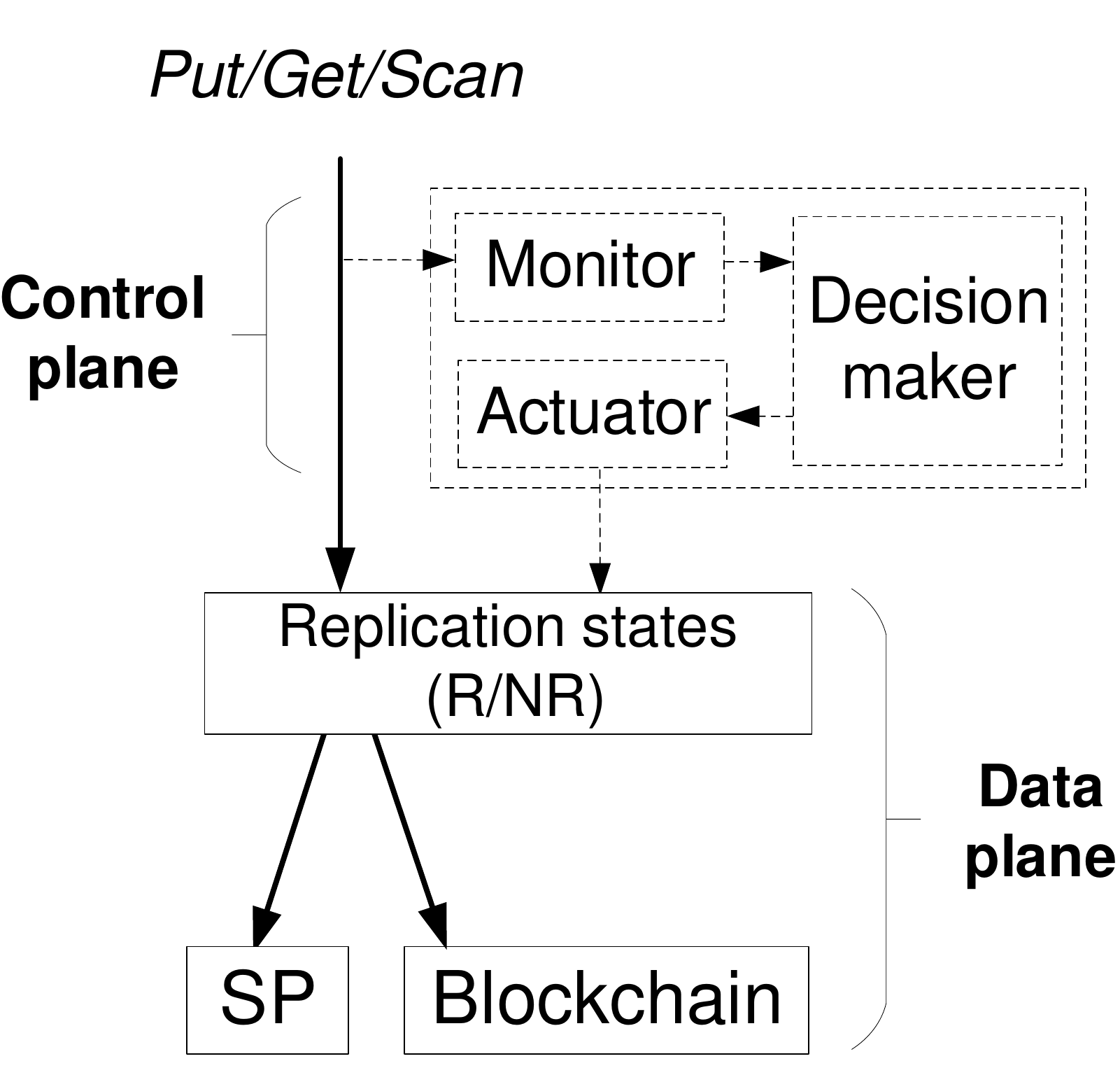}%
      \label{fig:overview}
    }
    \hspace{0.275cm}%
    \subfloat[Detailed view of GRuB system and protocols: Grey boxes are core components implemented in our GRuB prototype]{
      \includegraphics[width=0.275\textwidth]{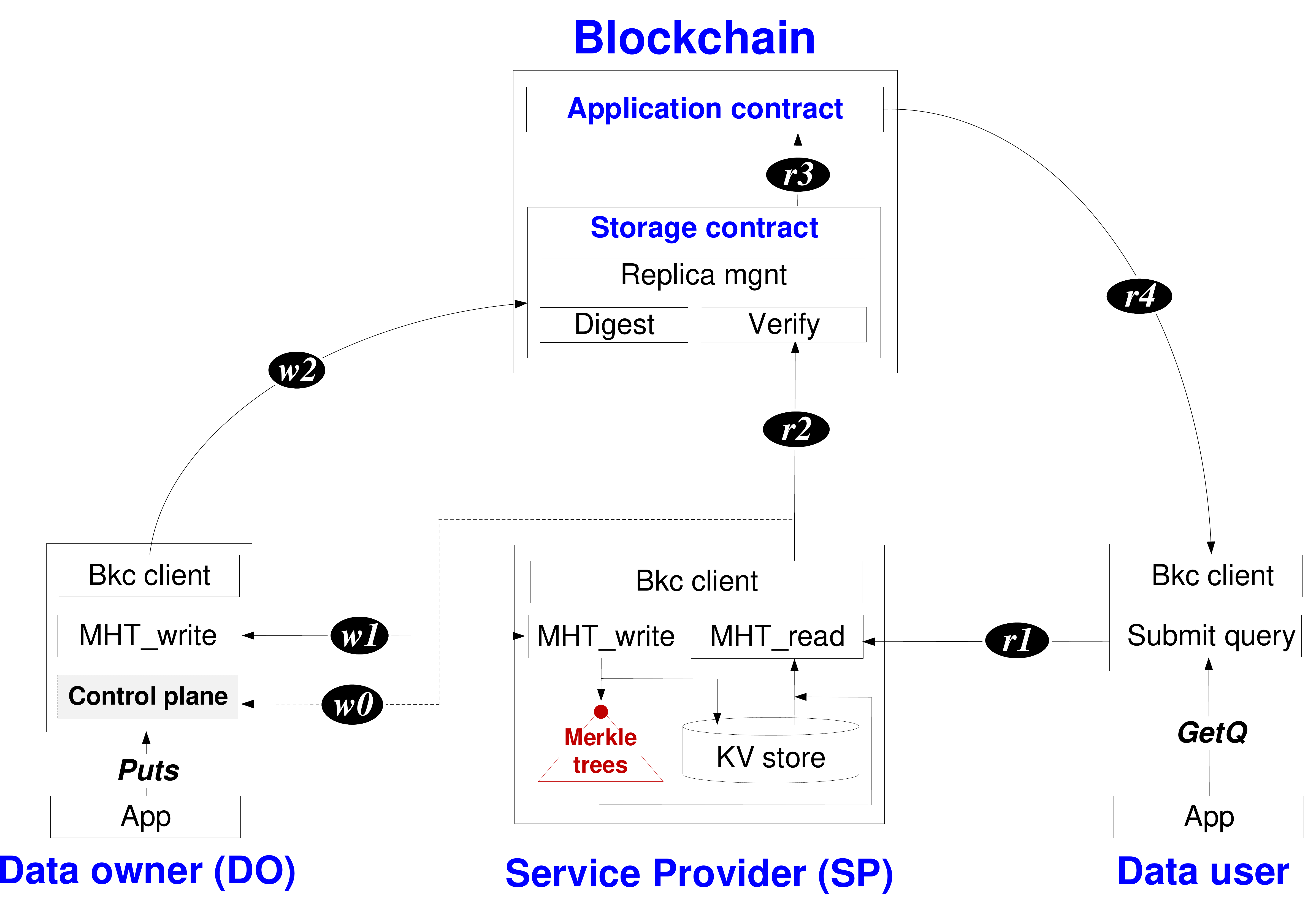}%
\label{fig:protocols}
    }%
  \end{center}
  \caption{GRuB system: data plane, control plane and security data structures}
\label{fig:complex}
\vspace{-0.10in}
\end{figure*}

\label{appendix:sec:staticstructures}
The state of data replication is stored in our system by augmenting the KV records. Each KV record, say $\langle{}k,v\rangle{}$, is associated with a replication state $s$, where the state can be either $NR$ (i.e., not replicated on the Blockchain) or $R$ (i.e., replicated on the Blockchain). 

Recall that the SP maintains a KV store. The KV store on the untrusted SP is protected via an authentication data structure (ADS), which allows for trusted data updates and reads, and prevents the untrusted SP from forging the dataset. For the sake of simplicity, we use a binary Merkle tree to illustrate our system. 
The Merkle tree is built on top of the KV dataset with replication states. Specifically, the data layout on which the Merkle tree is built is the following: KV records are first grouped by the replication states, and then they are sorted by data keys. This data layout allows the Merkle tree to serve the range query on $NR$ data records, which is required in the data-read protocol in GRuB (Section~\ref{appendix:sec:dataplane:reads}). For instance, the four KV records in Figure~\ref{fig:protocols:merkle} fall under two groups, the $NR$ records (i.e., $\langle{}k=w,NR,v=100\rangle{},\langle{}y,NR,200\rangle$) and the $R$ records (i.e., $\langle{}x,R,300\rangle{},\langle{}z,R,400\rangle$). In each group, KV data records are sorted by data keys, and a Merkle tree is built on top of the two groups of KV records, as in Figure~\ref{fig:protocols:merkle}. 
\subsection{Data-Plane Protocols}
\label{appendix:sec:dataplane}

\subsubsection{Data Writes}
\label{appendix:sec:dataplane:writes}

On the write path, DO continuously produces data writes to update the KV dataset. It needs to update both the primary copy stored on SP and the replica stored on the Blockchain. The data writes should be persisted, to one or two data copies, securely against the untrusted SP. To achieve these goals, we design the data-write protocol in GRuB to entail the following steps, which are also illustrated in Figure~\ref{fig:protocols}:

\begin{itemize}
\item[\ballnumber{w1}]
 Given a data write, the DO interacts with SP in an authentication data structure protocol to securely update both $R$ and $NR$ KV records stored on SP.  With a binary Merkle tree, it entails that SP sends a Merkle proof to DO, who updates the Merkle root hash (or digest) locally. Specifically, given a write on KV record $\langle{}k,v\rangle$, the Merkle proof is the Merkle-tree nodes surrounding the path from the leaf of this KV record to the root. An example will be shown in the next paragraph. The DO receiving the proof verifies it against its current root hash and then updates the root hash based on the new KV record $\langle{}k,v\rangle$. 

\item[\ballnumber{w2}] At the end of every epoch, DO batches data writes on replicated KV records. It sends the batched writes and the latest digest in a transaction to the Blockchain. The storage contract on the Blockchain exposes the following contract function:

The storage-management contract processes the write request to update the copy of the digest on the Blockchain and the replicated KV records. There is a delay in updating data replica on the Blockchain which affects the consistency of data reads.
\ifdefined\TTNOTTR 
\else 
\fi

\item[\ballnumber{w0}] 
In the end of every epoch, DO also collects the operation trace and runs a control framework (in Section~\ref{sec:controlplane}) to update data-replication states dynamically. Concretely, DO collects the trace of data writes locally, and the trace of data reads from SP (see Figure~\ref{fig:protocols}). It then feeds the traces as input to the control framework, which produces the output of updated replication states. 

The state transitions are incorporated into the transaction sent in \ballnumber{w2}. The storage-management contract processes state transition $NR\rightarrow{}R$ by inserting the data replica to the buffer in the contract. It processes state transition $R\rightarrow{}NR$ by invalidating an existing replica in the contract buffer. 
\end{itemize}

{\bf Examples}: 
Suppose DO produces one data update in an epoch, $\langle{}w,110\rangle{}$. The initial state of the system, illustrated in Figure~\ref{fig:protocols:merkle}, is that the record is not replicated, $\langle{}w,NR,110\rangle{}$. Assume in Step $\ballnumber{w0}$ the control framework does not update replication state. In Step $\ballnumber{w1}$, the Merkle proof that the DO retrieves from SP is $h_5, h_3$. After verification based on the proof, she then updates the KV record from $\langle{}w,100\rangle$ to $\langle{}w,110\rangle{}$. She also computes the new hash $h_4'=H(\langle{}w,110\rangle{})$ to replace the old $h_4$. The new root hash is calculated $h_1'=H(h_3\|h_2')$ where $h_2'=H(h_4'\|h_5)$. In Step $\ballnumber{w2}$, the transaction sent to the storage contract contains the new root hash ($h_1'$), and nothing else (due to no state transition or no update on replicated records). 

For another example, suppose DO produces one data update $\langle{}x,310\rangle{}$ against the same initial state in Figure~\ref{fig:protocols:merkle}. Here, the original record is replicated, $\langle{}x,R,300\rangle{}$.  Assume in Step $\ballnumber{w0}$ the control framework will update the replication state from $R$ to $NR$ (e.g., by a memoryless decision-making algorithm that is described in Section~\ref{sec:algo:memoryless}). The state transition triggers the KV record to be relocated, that is, leaving the group of $R$ records and joining the group of $NR$ records. In Step $\ballnumber{w1}$, the proof is the Merkle proof of its current location (i.e., $h_2,h_7$) and that of its new location. The new location is between records $\langle{}w,NR,100\rangle{}, \langle{}y,NR,200\rangle$ and the proof of the new location includes $h_3$ and these two records. DO verifies the integrity of the original record using the proof against its current digest. DO then relocates the KV record (due to new replication state) and produces the new root hash. This is done by marking the original KV record invalid (i.e., $h_6'=H(\langle{}x,R,300,invalid\rangle{})$, and by updating the Merkle tree with $h_3'=H(h_6'\|h_7),h_1'=H(h_2\|h_3')$) and by inserting the new KV record at the new location (i.e., $h_8=H(\langle{}x,NR,310\rangle{}),h_9=H(h_4\|h_8),h_2'=H(h_9\|h_5), h_1''=H(h_2'\|h_3')$). In Step $\ballnumber{w2}$, the transaction sent by DO includes the new root hash $h_1''$, the KV record with a state transition, namely $\langle{}x,NR,310\rangle{}$. The storage contract receiving the transaction will invalidate the replica of the original record, $\langle{}x,R,300\rangle{}$.

\subsubsection{Data Reads}
\label{appendix:sec:dataplane:reads}

In the read path, an end blockchain client (not shown in the figure) submits a query to execute application smart contract and read the blockchain state.
The data-read protocol for trusted query processing is described below. The steps are also illustrated in Figure~\ref{fig:protocols}.

\begin{itemize}
\item[\ballnumber{r1}]
The application smart contract, which is data user (DU), issues a \texttt{gGet} request to the storage management smart contract which finds the matching data records (in the case of replication, $R$). If the data is not found, the storage smart contract emit an event to \texttt{request} data from off-chain SP. 

\item[\ballnumber{r2}]
The SP watches the blockchain event log and upon finding a \texttt{request}, it starts to evaluate the query over $NR$ KV records. It retrieves $NR$ KV records that match the query. SP also prepares the query proof from the Merkle tree. The proof consists of the Merkle authentication paths surrounding the $NR$ records. 
\item[\ballnumber{r3}] 
The SP sends the $NR$ records matching the query and the query $q$ itself to the Blockchain. The storage contract receiving the transaction will verify the integrity of $NR$ records and then triggers the query-processing contract to process the query by accessing the retrieved $NR$ records and the $R$ records replicated on the Blockchain. 
\end{itemize}

{\bf Example}: 
Suppose the system initial state is illustrated in Figure~\ref{fig:protocols:merkle} and a data user (DU) smart contract sends a range query $q=[x,z]$ in \ballnumber{r1}. In Step $\ballnumber{r2}$, SP evaluates the range predicate and produces the following $NR$ records that match the query: $\langle{}y,NR,200 \rangle{}$. The Merkle proof is $h_7,\langle{}w,NR,100\rangle{},\langle{}x,R,300\rangle{}$. In Step $\ballnumber{r3}$, the storage-management contract verifies the integrity and completeness of $NR$ records using the proof against the digest stored on the Blockchain.

\section{Additional Evaluation Results}

\subsection{Microbenchmarks}
\begin{figure}[!ht]
  \centering
        \includegraphics[width=0.3\textwidth]{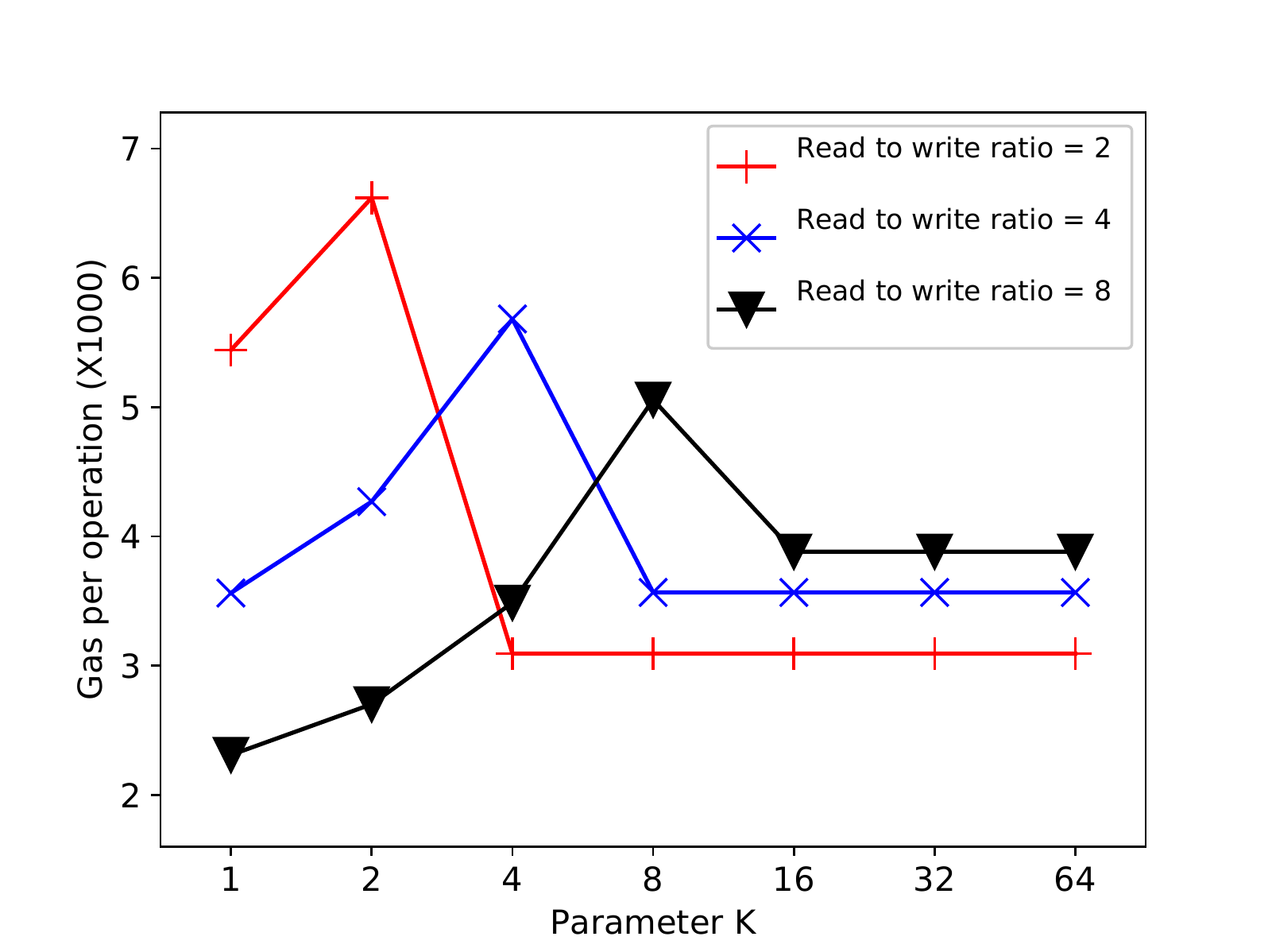}
  \caption{GRuB's Gas with varying parameter $K$}
        \label{fig:mi:k}
\end{figure}

{\bf Varying parameter $K$}:
We additionally measure the Gas of memoryless GRuB with a varying value of $K$. Recall that $K$ states the threshold number of reads to flip the algorithm's decision to replication $R$. In the experiment, we drive workloads of varying read-to-write ratios and report the Gas per operation under different values of $K$. In the experiment result in Figure~\ref{fig:mi:k}, given a fixed workload (say read-to-write ratio being $2$), the Gas first increases with $K$, then decreases and finally stays at a constant value. The highest Gas represents the worst case that the Gas paid for data replication does not result in any Gas savings for future data reads. Before Gas reaches the peak value, increasing $K$ results in increased Gas (due to more off-chain reads and more transactions). After the peak Gas, increasing $K$ results in decreased Gas (due to that off-chain reads are capped in the workload). With different workloads, the value of $K$ under the peak Gas increases along with the read-to-write ratios.

\begin{figure}[!ht]
  \begin{center}
    \subfloat[With varying record size]{%
        \includegraphics[width=0.215\textwidth]{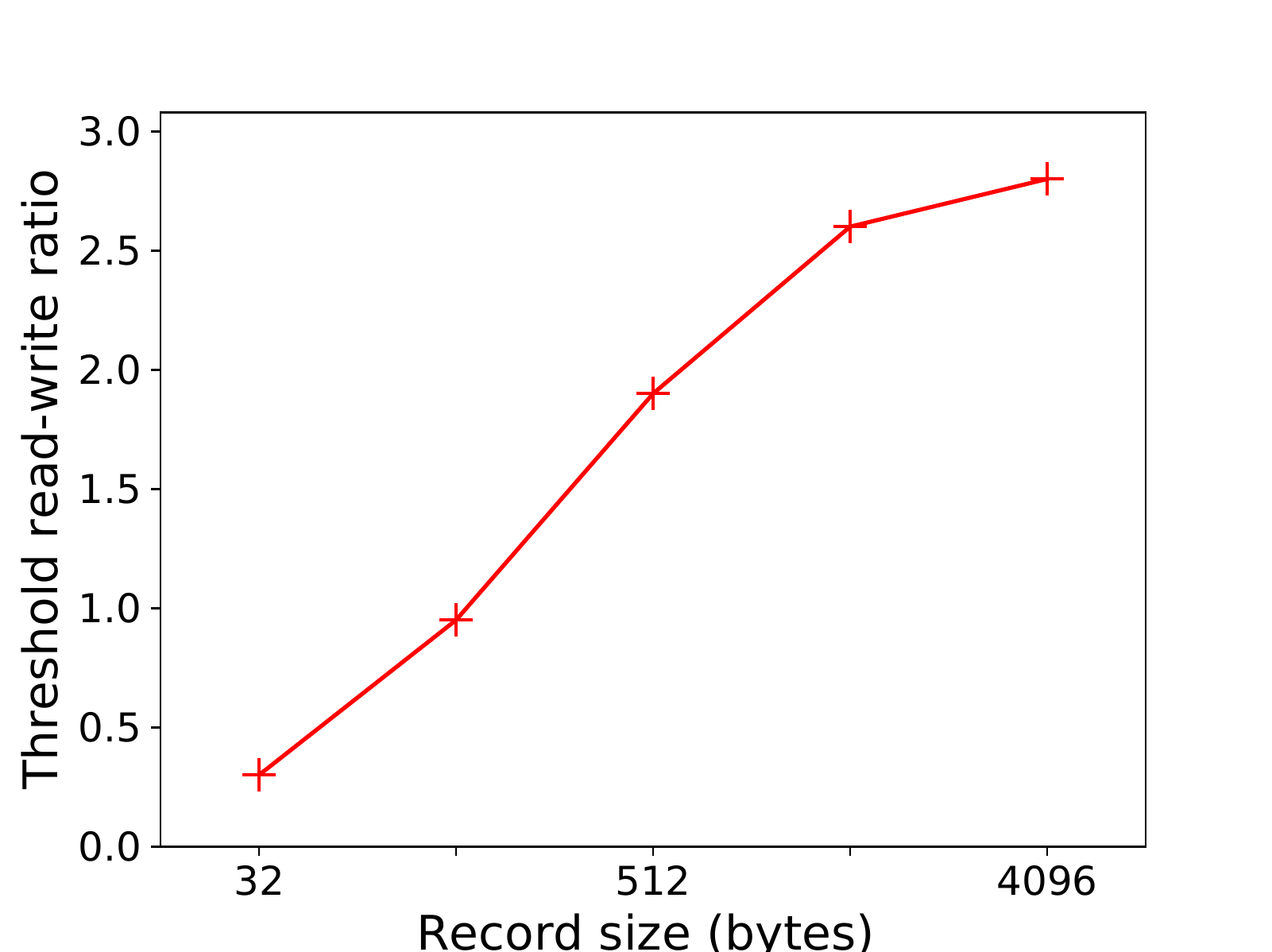}
        \label{fig:mi:cross-point-recordsize}%
    }%
    \hspace{0.2cm}
    \subfloat[with varying data size]{%
        \includegraphics[width=0.215\textwidth]{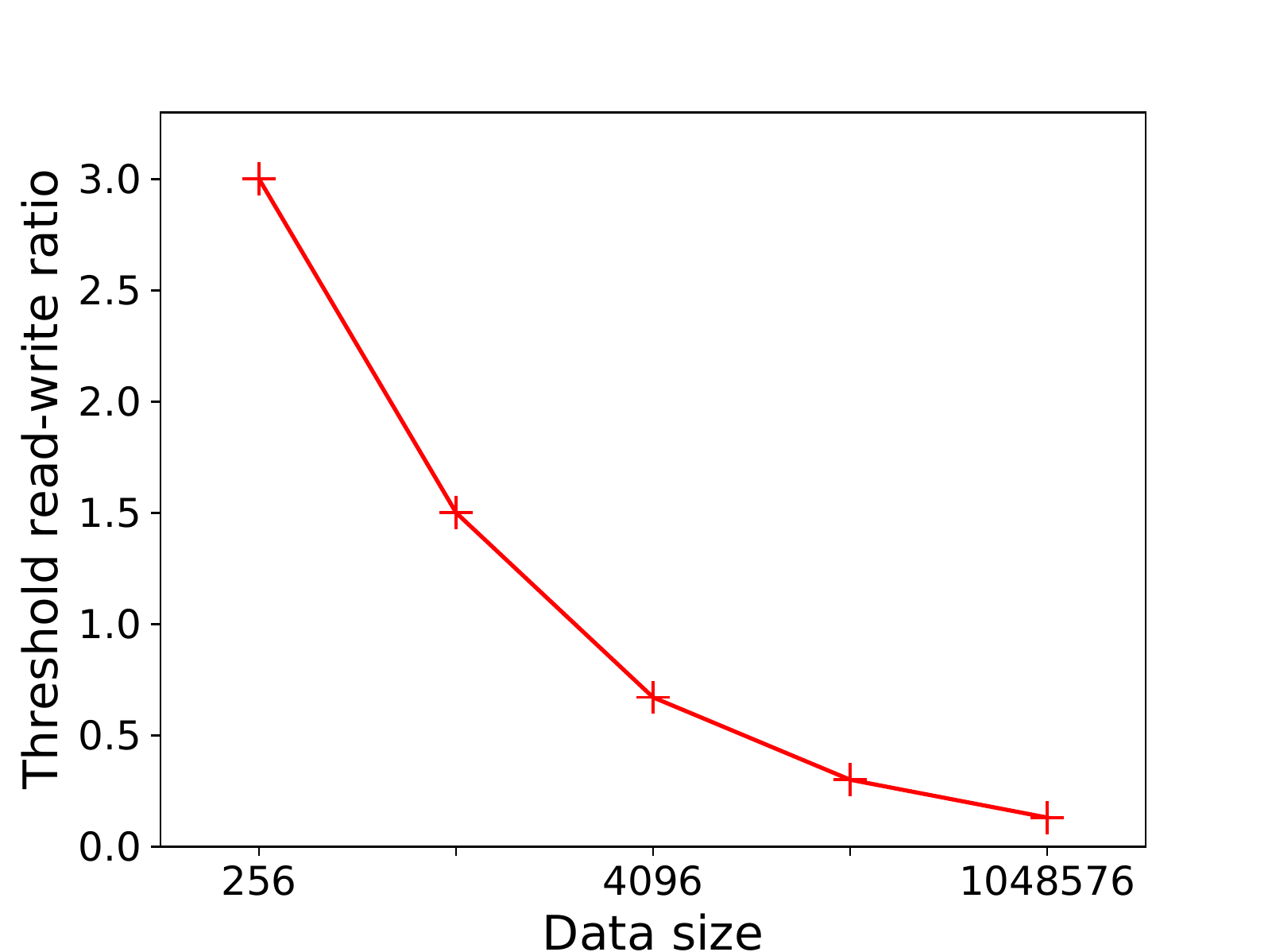}
        \label{fig:mi:cross-point-datasize}%
    }
  \end{center}
  \caption{GRuB's Threshold read-write ratio}
\end{figure}

{\bf Varying threshold read-write ratio}:
The threshold read-write ratio triggers the switch of replication decision and is the case where BL1 and BL2 incur the same Gas.
It affects the range of workloads that dynamic replication schemes can win. We measure the threshold read-write ratio under varying record size and data size. In Figure~\ref{fig:mi:cross-point-recordsize}, as the record size grows, the threshold ratio significantly increases. This is because of the higher unit cost for storage write than that for transactions in the Gas model. In Figure~\ref{fig:mi:cross-point-datasize}, as the data size increases, the proof size grows, which decreases the threshold ratio. This is because the increased proof per record in BL2 needs to be amortized by more writes in BL1. 

\subsection{Macrobenchmarks with YCSB}

\begin{figure}[!ht]
  \begin{center}
    \subfloat[Workload A, E]{%
        \includegraphics[width=0.25\textwidth]{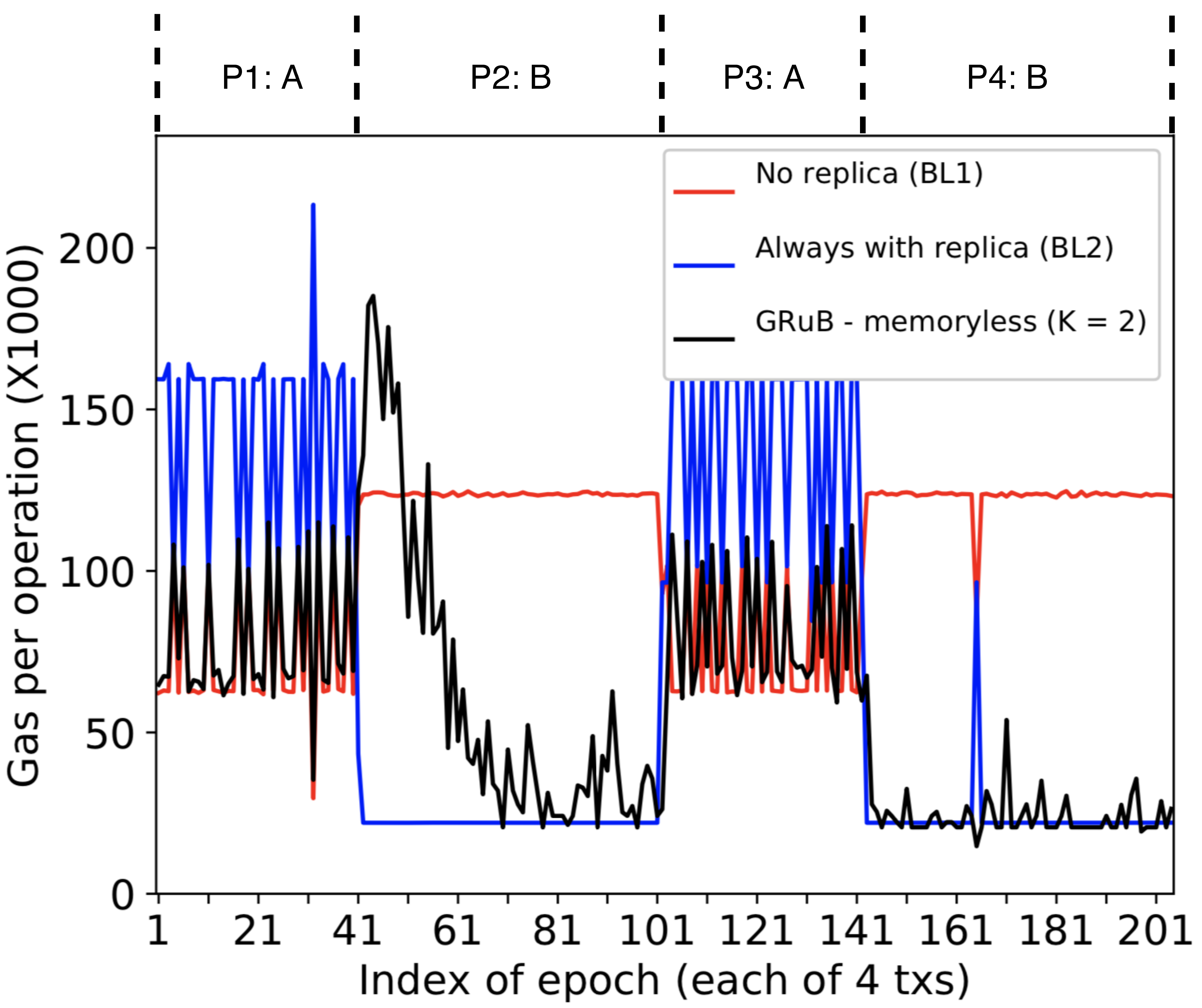}
        \label{fig:ae}
    }%
    \subfloat[Workload A, F]{%
        \includegraphics[width=0.25\textwidth]{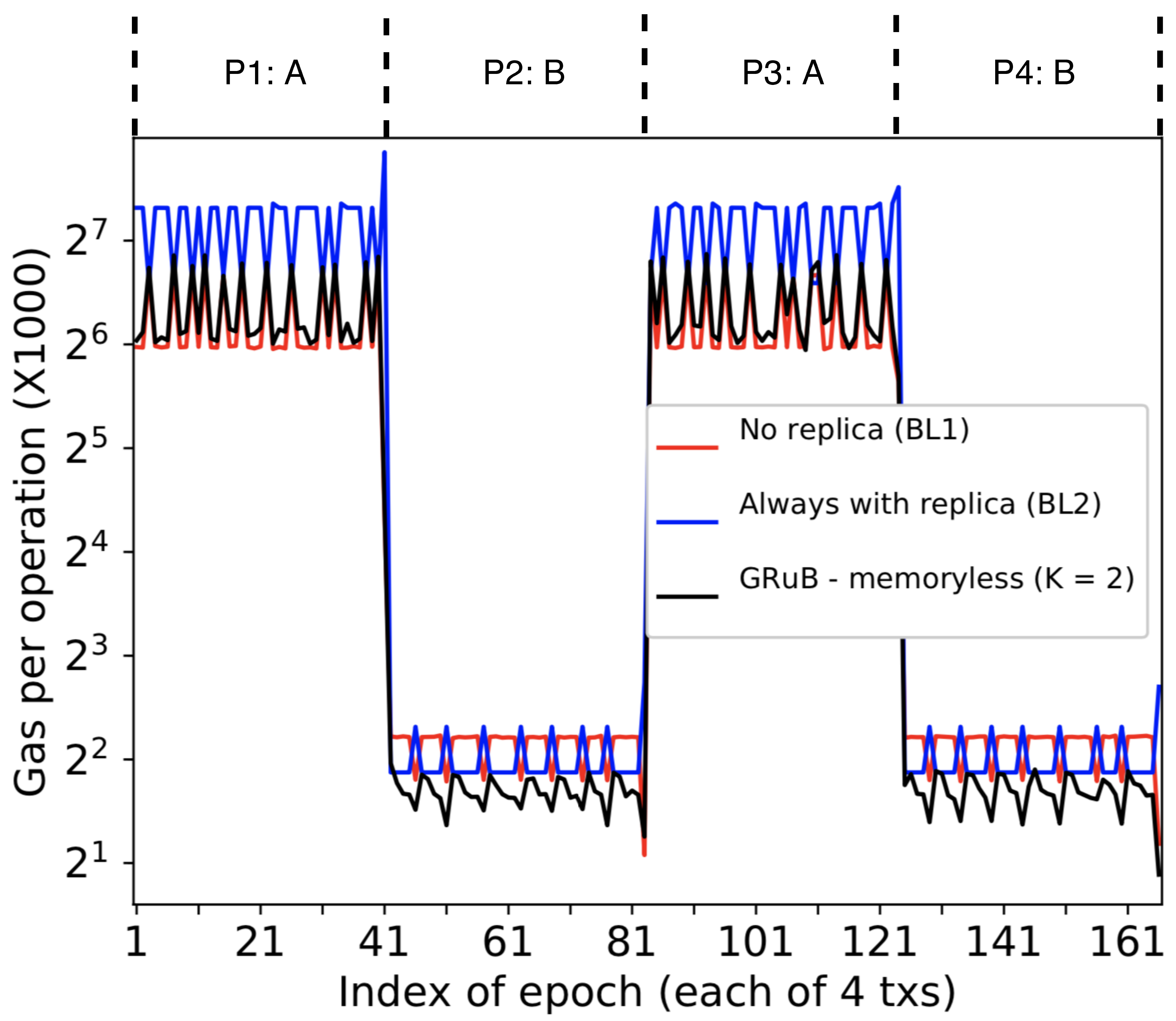}
        \label{fig:af}
    }%
  \end{center}
  \caption{GRuB under mixed YCSB workloads}
        \label{appendix:fig:macro:mixedycsb}
\end{figure}

{\bf Additional YCSB workloads}:
Figure~\ref{fig:ae} shows the Gas per operation under Workload A and E. The results are similar except that the initial cost to replicate KV records is even more significant (notice the Gas spike of GRuB in P2). This can be attributed to that fewer data keys are used in this experiment, which makes a KV record be read multiple times in Phase P2 and P4 and triggers more data replication. The aggregate Gas saving by GRuB reaches $25\%$ against BL1 and $74\%$ against BL2. 

Figure~\ref{fig:af} shows the Gas per operation under Workload A and F. In this experiment, we used small KV records of $32$-byte long. It can be seen that the Gas per operation changes dramatically from Workload A to Workload F. Overall, GRuB saves $54\%$ of the Gas than BL1 and $10\%$ of the Gas than BL2.

\begin{figure}[!ht]
  \begin{center}
     \includegraphics[width=0.35\textwidth]{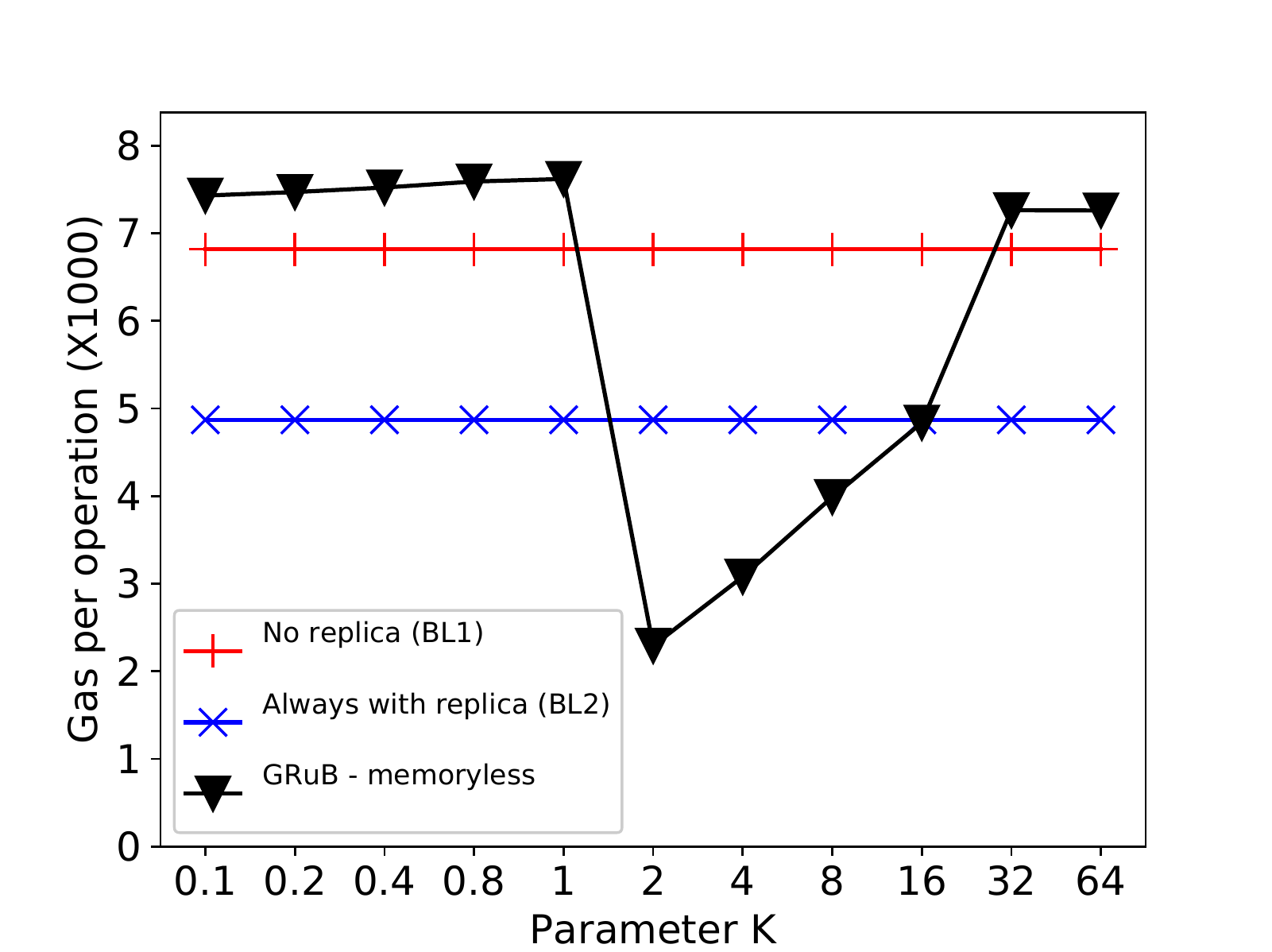}
  \end{center}
  \caption{GRuB's Gas under YCSB with varying K}
  \label{fig:ma:k2}%
\end{figure}

{\bf Varying $K$ under YCSB}: Figure~\ref{fig:ma:k2} illustrates the GRuB's Gas under YCSB workloads, with $K$ varying between $0.1$ to $64$. Compared with the two baselines, GRuB's Gas is sensitive to $K$. When growing $K$, the Gas first decreases, reaches the lowest value (with $K=2$) and then increases. The U shape of the curve can be explained by the following: $K=2$ best matches the tested YCSB workload (in terms of its read-write ratio and its change over time) so that the decision made by GRuB predicts the future workload.

\subsection{Workload-Adaptive $K$}

\begin{figure}[!ht]
  \begin{center}
     \includegraphics[width=0.45\textwidth]{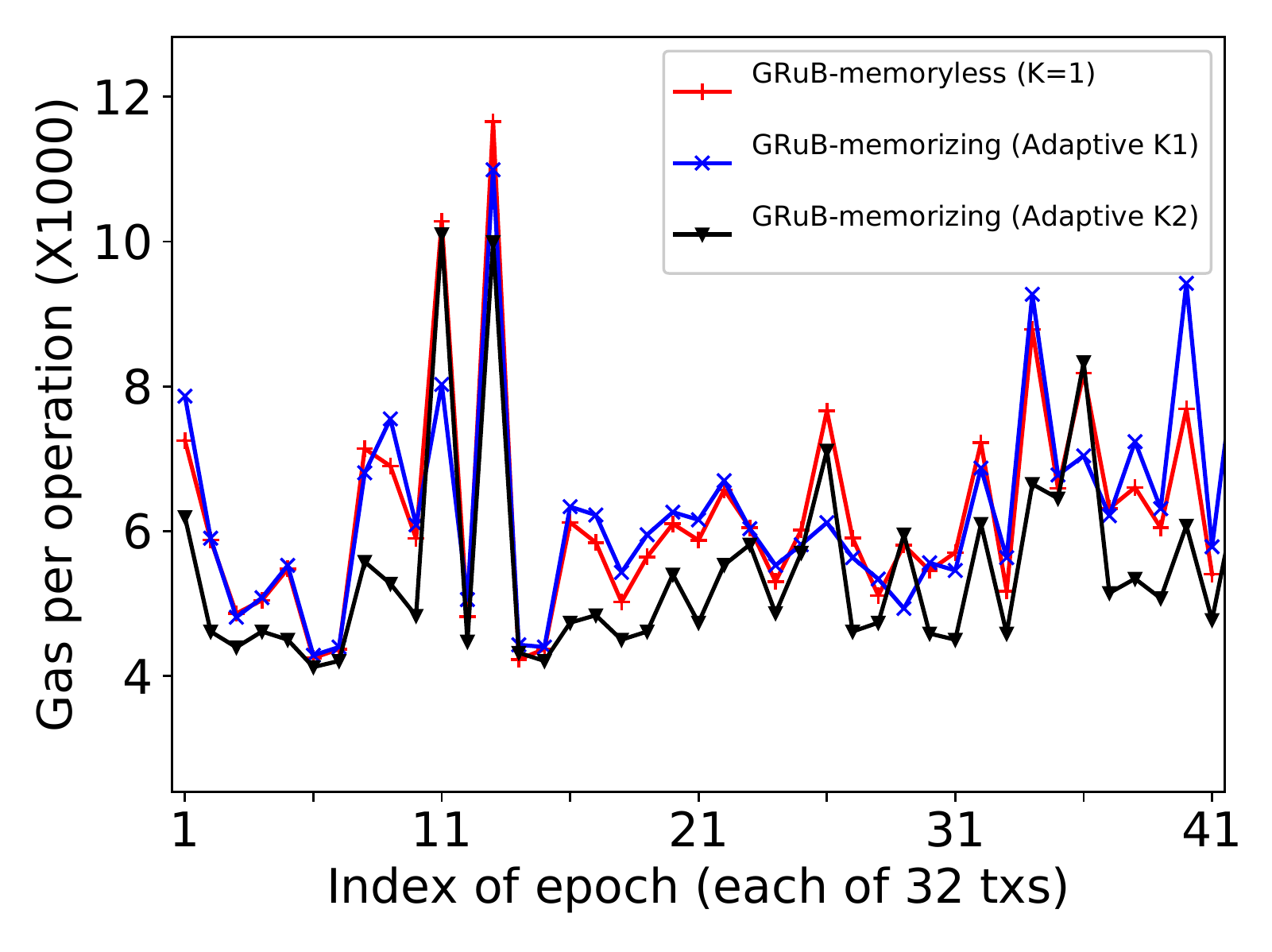}
  \end{center}
  \caption{GRuB with two adaptive-$K$ policies under ethPriceOracle}
  \label{fig:ma:adaptiveK}%
\end{figure}

\begin{table}[!htbp] 
\caption{Aggregated Gas under ethPriceOracle}
\label{tab:adaptiveK}\centering{\small
\begin{tabularx}{0.375\textwidth}{ |X|X| }
  \hline
Approach & Aggregated Gas (X$10^6$) \\ \hline
Memoryless (K=1) & $50.16$ \\ \hline
Memorizing (Adaptive K1) & $50.61 (+0.8\%)$ \\ \hline
Memorizing (Adaptive K2) & $43.74 (-12.8\%)$ \\ \hline
\end{tabularx}
}
\end{table}

{\bf Heuristic for adaptive $K$}: So far, Parameter $K$ in GRuB is set statically and does not change at runtime. Changing $K$ dynamically and adaptively to the current workload may further improve GRuB's cost saving. Here, we present a simple heuristic to adjust $K$ dynamically based on the observation of workload history. In this heuristic, encountering a write, GRuB predicts $K$ to be the average number of reads per write in the selected past trace. As an example setting, we consider the past three writes. If the predicted $K$ is larger than the value of Equation~\ref{eqn:memoryless:parameter}, GRuB decides to replicate the write (i.e., $R$). Otherwise, GRuB decides against replicating the write (i.e., $NR$). We name this policy by Adaptive K1. K1 is based on the intuition that the future will repeat the past. For comparison, we also design a ``dual'' policy, named adaptive K2, whose decision is opposite to K1. That is, when the predicted $K$ is smaller than the value of Equation~\ref{eqn:memoryless:parameter}, GRuB decides against replication ($NR$); otherwise, it decides to replicate ($R$). Thus, K2 reflects that the future does not repeat the past.

{\bf Experiments}: We conduct the experiments using the ethPriceOracle trace (recall Section~\ref{sec:motivateapps} and Section~\ref{sec:cases}) and report the Gas per operation in Figure~\ref{fig:ma:adaptiveK} and Table~\ref{tab:adaptiveK}. The result shows that Adaptive K1 incurs $+0.8\%$ more Gas than the baseline of static K. Adaptive K2 saves about $12.8\%$ Gas than the static K baseline.

{\bf The lesson} we learn here is that the assumption that the future repeats the past does not necessarily hold and is workload specific. In general, the idea of adaptive parameters (e.g., $K$) has the potential of saving the Gas cost further, but to result in actual cost saving, it requires lots of heuristics and workload-specific tuning. Using machine learning techniques to automatically and adaptively find an optimal K is an open research problem and can be addressed in the future work.

\section{Build BtcRelay Benchmarks}

{\bf Building benchmarks: Methodology}
To build the benchmark, we follow the method below: For a Bitcoin-pegged token, we first locate the mint/burn function which has an argument for the Bitcoin depositor/withdraw transaction (Condition M1). Through this, we can bind a token mint/burn operation to a Bitcoin transaction, and further to a Bitcoin block. As a mint/burn operation with on-chain BtcRelay entails reading six Bitcoin blocks, the trace of mint/burn transactions can be converted to a history of Bitcoin-block reads on Ethereum. 
Then, we establish an overall block-read history by combining the reads from different tokens. This block-read history is joined with Bitcoin's native block-write history (i.e., the sequence of Bitcoin blocks produced over time) to establish the block read-write workload. 

We apply the above method to eight Bitcoin-pegged tokens known from etherscan.io~\cite{me:bitcoinpegged}. We found four tokens meet the condition M1, that is, tBTC, imBTC, wBTC and hBTC. Particularly, tBTC has an on-chain BtcRelay and has M1-compatible functions, namely \texttt{provideBTCFundingProof}~\cite{me:tbtc:deposit}/\texttt{provideRedemptionProof}, to receive Bitcoin transactions. The other three actually run BtcRelay off-chain but still expose M1-compatible functions and are amenable for building the benchmarks (e.g., \texttt{addMintRequest}~\cite{me:wbtc:addmintrequest} having an argument storing Bitcoin transaction hash in wBTC). 

Then we collect the mint calls and burn calls from the Ethereum ETL service on Google BigQuery~\cite{me:ethtrace:bigquery}. From there, we combine the four tokens' read traces and join it with the Bitcoin block write sequence~\cite{me:blockchain:info}. 

\begin{figure}
\centering
\subfloat[Number of reads after a write (K means 1,000)]{%
  \includegraphics[width=0.25\textwidth]{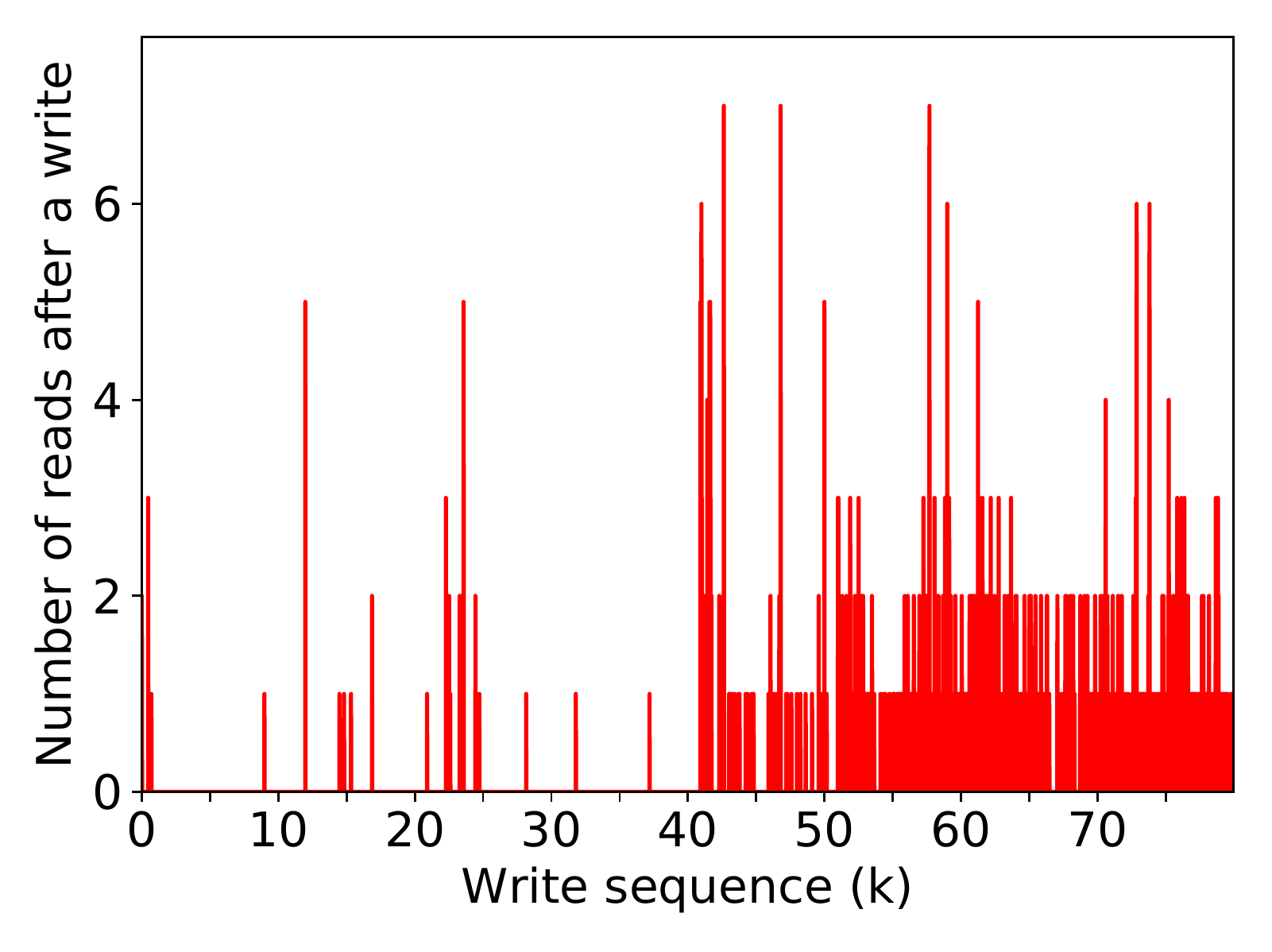}%
\label{fig:workload:btcrelay:1}
}%
\subfloat[Distribution of read-write delay]{%
  \includegraphics[width=0.25\textwidth]{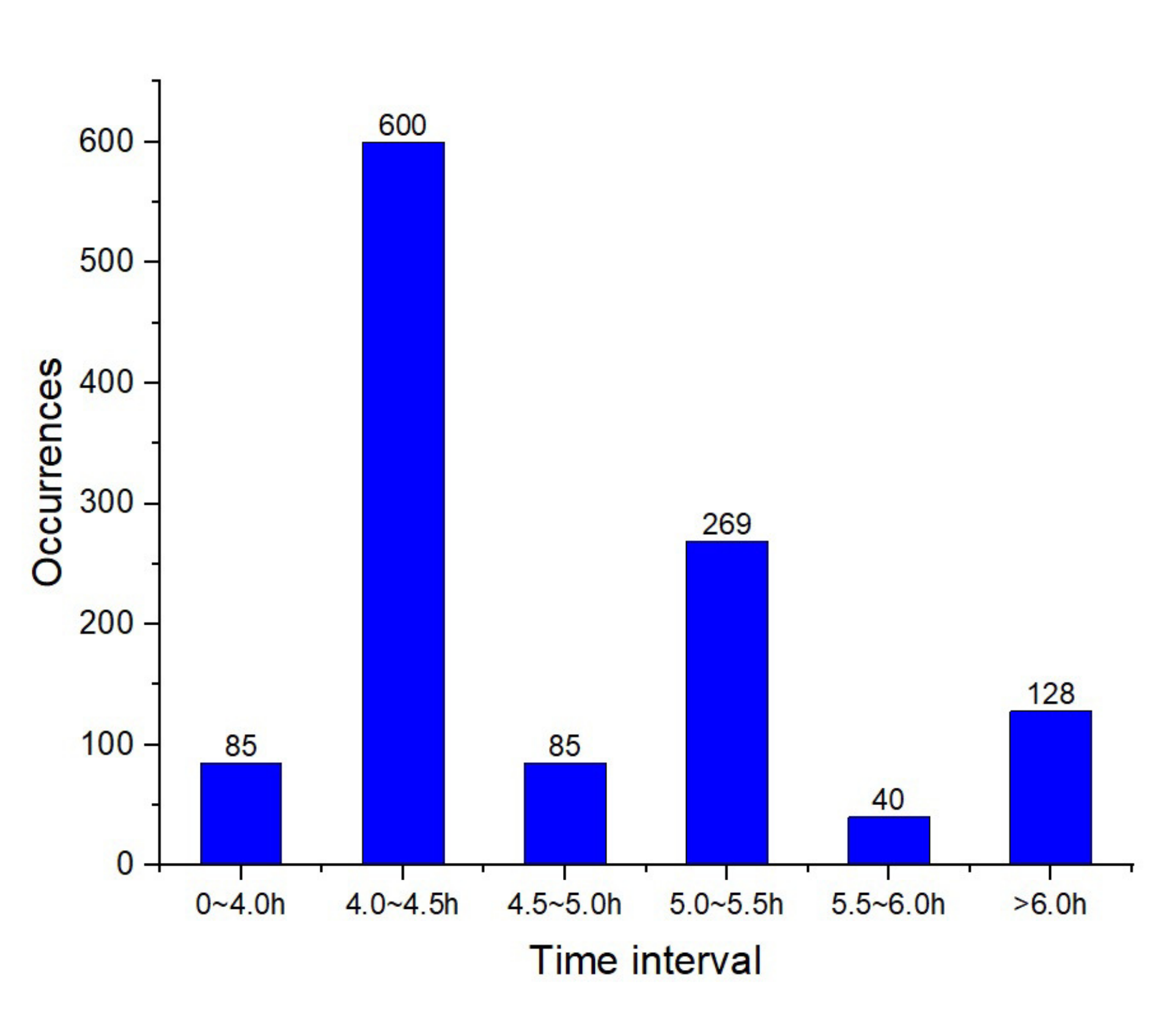}
\label{fig:workload:btcrelay:2}
}%
\caption{The workloads of BtcRelay in existing Bitcoin-pegged ERC20 tokens}
\label{fig:workload:btcrelay}
\end{figure} 

\begin{table}[!htbp] 
\caption{Distribution of writes by the number of reads followed in the BtcRelay trace (\#r represents the number of reads per write)}
\label{tab:btcrelay:distribution}\centering{\small
\begin{tabularx}{0.4\textwidth}{ |X|c|c|c|c|c| }
  \hline
\#r & Percentage & \#r & Percentage & \#r & Percentage\\ \hline
0 & $93.7\%$  & 3 & $0.15\%$ & 6 & $0.02\%$\\ \hline
1 & $5.30\%$  & 4 & $0.05\%$ & 7 & $0.01\%$\\ \hline
2 & $0.77\%$  & 5 & $0.04\%$ & & \\ \hline
\end{tabularx}
}
\end{table}

{\bf Workload analysis}: The workload is presented in Figure~\ref{fig:workload:btcrelay:1}, where we can see the number of reads per write varies between 0 and 7 in the case of BtcRelay. We also measure the ``temporal locality'' of the reads/writes to the same Bitcoin block. In Figure~\ref{fig:workload:btcrelay:2}, most reads of a block occurs 4 hours after the block is produced (written). 

\section{Protocol Consistency}
\label{appdx:properties}
\label{appdx:freshness} 

We first present necessary preliminaries and definitions before analyzing the consistency of the GRuB protocol.

\subsection{Preliminaries and Definitions}

{\bf
Blockchain \& GRuB model}: In a vanilla blockchain, it takes $Pt$ time units to propagate a transaction to all nodes in the blockchain network. It takes an average of $B$ time units to produce a block. Only after $F$ blocks are produced, a transaction is considered finalized in the blockchain network. For instance, 
in Ethereum, $F$ is $250$ and $B$ is $10\sim{}19$ seconds~\cite{wood2014ethereum}.

Consider two concurrent transactions, that is, one transaction is sent after the submission of the other but before the finalization of the other. The ordering of the two transactions is decided non-deterministically by the miners through the underlying consensus protocol of the blockchain~\cite{DBLP:conf/issta/KolluriNSHS19}. Nevertheless, the consensus protocol guarantees the ordering is the same across different nodes eventually after at least one transaction is finalized.

In GRuB, an epoch $E$ is the time interval in which the DO waits and batches data updates in a transaction. 

{\bf
Query-freshness definition}:
Suppose a \texttt{gGet(k)} issued by a DU smart contract at local time $t$ on blockchain node $Ni$ returns a set of KV records $qs$. Query result $qs$ is fresh, w.r.t. delay $d$, if all KV records matching key $k$ and updated on data owner DO before $t-d$ are included in $qs$. Here, it assumes a global clock synchronized across the DO and any blockchain nodes $Ni$. 
Note that query freshness also implies query completeness here.

\subsection{Consistency Analysis}

We focus on analyzing the consistency between a \texttt{gGet(k)} and a \texttt{gPut(k,v)} operation, including read-after-write consistency, operation ordering, et al. 
We consider two general cases: a) The sequential case where \texttt{gGet} is issued sufficiently long after a matching \texttt{gPut}. 
The exact delay $d_0$ between \texttt{gGet} and \texttt{gPut} is $E+P_t+B\cdot{}F$, which will be described later. b) The concurrent case where \texttt{gGet} is issued concurrently (i.e., within delay $d_0$) with a matching \texttt{gPut}. 

{\bf For the concurrent case}, GRuB inherent the non-deterministic but ``eventual'' consistency of the underlying blockchain. That is, whether \texttt{gGet} observes \texttt{gPut}, and more generally, the ordering between \texttt{gGet} and \texttt{gPut}, are non-deterministic but eventually consistency across all blockchain nodes after the finalization of related transactions. Formally, 

\begin{theorem}[Non-deterministic ordering of concurrent gPut/gGet]
Suppose at time $t_1$ the DO submits a \texttt{gPut(k,v)} and at time $t_2$ a blockchain node $Ni$ executes \texttt{gGet(k)}. After $t_2+P_t+B\cdot{}F$, assume the execution of \texttt{gGet(k)} is finalized on the blockchain.

Particularly, when the record \texttt{gGet(k)} accesses is not replicated ($NR$), time $t_2$ refers to when the internal call of \texttt{gGet(k)} is being entered and returned by the blockchain node (the synchronous execution finishes instantly). When the record \texttt{gGet(k)} accesses is not replicated ($NR$), \texttt{gGet(k)} is executed asynchronously and is called back by a \texttt{deliver} transaction. In this case $t_2$ refers to when the \texttt{deliver} transaction is executed on node $Ni$.

If $t_1<t_2<t_1+E+P_t+B\cdot{}F$, \texttt{gPut(k,v)} is said to occur concurrently with \texttt{gGet(k)}. With GRuB, the ordering between concurrent \texttt{gPut(k,v)} and \texttt{gGet(k)} is non-deterministic and is the same across all blockchain nodes after $t_2+P_t+B\cdot{}F$.
\end{theorem}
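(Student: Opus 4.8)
The plan is to reduce the GRuB-level concurrency between \texttt{gPut(k,v)} and \texttt{gGet(k)} to ordinary concurrency of the two underlying blockchain transactions, and then invoke the consensus property recorded in the preliminaries: two concurrent transactions are ordered non-deterministically by the miners, yet that ordering becomes identical across all nodes once at least one of them is finalized. Everything reduces to (i) pinning down which transactions carry the effects of \texttt{gPut} and \texttt{gGet}, (ii) checking that the hypothesis $t_1<t_2<t_1+E+P_t+B\cdot F$ is exactly the window in which those transactions are concurrent, and (iii) reading off the stabilization time from when \texttt{gGet}'s transaction finalizes.

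First I would identify the transactions. A \texttt{gPut(k,v)} submitted at $t_1$ is not sent immediately: the DO batches it within the current epoch and issues the \texttt{update()} transaction, call it $tx_P$, by the end of that epoch, i.e. no later than $t_1+E$; after propagation ($P_t$) and finality ($B\cdot F$), $tx_P$ is finalized no later than $t_1+E+P_t+B\cdot F$. For \texttt{gGet(k)} the two sub-cases must be handled together but collapse to the same quantity $t_2$. In the replicated ($R$) case the read is served synchronously on chain, so the relevant transaction $tx_G$ is the one executing the enclosing \texttt{gGet}/callback on $Ni$, which runs at $t_2$; in the non-replicated ($NR$) case \texttt{gGet} emits a \texttt{request} event and is served asynchronously by the SP's \texttt{deliver} transaction, so $tx_G$ is that \texttt{deliver} transaction, executed on $Ni$ at $t_2$. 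In both sub-cases $t_2$ is the moment $tx_G$ takes effect on $Ni$, and by the theorem's own timing assumption $tx_G$ is finalized at $t_2+P_t+B\cdot F$.

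Next I would show the hypothesis delimits exactly the concurrency window. The upper bound $t_2<t_1+E+P_t+B\cdot F$ places the execution of $tx_G$ strictly before the finalization of $tx_P$, so $tx_G$ is submitted after $tx_P$ but before $tx_P$ is finalized; the lower bound $t_1<t_2$ rules out the case where \texttt{gGet} precedes the very submission of \texttt{gPut}. Hence $tx_P$ and $tx_G$ meet the definition of concurrent transactions, and the assumed consensus property gives the non-determinism claim directly: the relative order of $tx_P$ and $tx_G$ — and therefore whether \texttt{gGet(k)} observes the value written by \texttt{gPut(k,v)} — is fixed by the miners, not by GRuB. For the eventual-agreement half, I invoke the second clause of the same property: the order stabilizes across all nodes once one of the two transactions is finalized. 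Since $tx_G$ finalizes at $t_2+P_t+B\cdot F$, after that instant every node agrees on a single ordering, which is precisely the claimed bound.

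The main obstacle I anticipate is the timing bookkeeping that makes the two sub-cases merge into one concurrency argument. One must argue that the extra request-to-deliver round trip in the $NR$ case does not move the finalization timestamp of $tx_G$ away from $t_2+P_t+B\cdot F$ (so the asynchronous path inherits the same window as the synchronous one), and that the epoch-batching delay of \texttt{gPut} is exactly what contributes the additive $E$ term bounding the window from above. Once those two identifications are nailed down, the statement follows immediately from the underlying blockchain's consensus guarantee, with no further blockchain-internal reasoning required.
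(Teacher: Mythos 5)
Your proposal is correct and follows essentially the same route as the paper's own (informal) justification: reduce the ordering of \texttt{gPut(k,v)} and \texttt{gGet(k)} to the ordering of two underlying transactions --- the batched \texttt{update()} transaction versus, in the $R$ case, the transaction triggering the DU's internal call and, in the $NR$ case, the \texttt{deliver} transaction --- and then invoke the blockchain's property that concurrent transactions are ordered non-deterministically but identically across nodes after finalization. Your additional bookkeeping (showing the window $t_1<t_2<t_1+E+P_t+B\cdot F$ arises from the epoch-batching delay $E$ plus propagation and finality, and that stabilization follows from \texttt{gGet}'s transaction finalizing at $t_2+P_t+B\cdot F$) only makes explicit what the paper leaves implicit.
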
 

We present the justification informally. The non-deterministic ordering in GRuB directly inherits from that of the underlying blockchain. That is, when the record accessed by \texttt{gGet} is replicated ($R$), the ordering between \texttt{gGet} and \texttt{gPut} is reduced to the ordering between the transaction that sends the batch including \texttt{gPut} and the transaction that triggers the DU smart contract to internal-call \texttt{gGet}, which is non-deterministic and eventually consistent due to the blockchain.

When the record accessed by \texttt{gGet} is not replicated ($NR$), the ordering between \texttt{gGet} and \texttt{gPut} is reduced to the ordering between the transaction that sends the batch including \texttt{gPut} and the transaction that calls back the storage smart contract to \texttt{deliver} the read value. The latter ordering between two transactions is non-deterministic and eventually consistent due to the blockchain.

{\bf For the sequential case}, we present the following theorem:

\begin{theorem}[Epoch-bounded query freshness]
GRuB guarantees the \texttt{gGet} query freshness w.r.t. delay $E+Pt+F\cdot{}B$, where the parameters are epoch $E$, block time $B$, propagation delay $Pt$ and the number of blocks needed for finality $F$.
\end{theorem}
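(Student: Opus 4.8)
The plan is to trace the full lifecycle of a single data update at the DO and show that, once the stated delay $d = E + P_t + F\cdot B$ has elapsed, every \texttt{gGet} observes it. Fix an update $\langle k,v\rangle$ produced at the DO at local time $t_{update}$ and a \texttt{gGet(k)} issued at time $t$ with $t_{update} < t - d$; the goal is to show that $v$ (or a strictly fresher value of key $k$) appears in the returned set $qs$, which is exactly the freshness definition instantiated at delay $d$.

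First I would upper-bound the time by which this update becomes finalized on chain, by summing the three latencies in the blockchain/GRuB model. Epoch batching delays the update by at most $E$, since the DO emits the enclosing \texttt{gPuts}/\texttt{update} transaction only at the end of the current epoch; transaction propagation adds $P_t$; and finality requires $F$ blocks at mean time $B$, contributing $F\cdot B$. Hence the transaction carrying the update --- and in particular the refreshed on-chain digest (root hash) committing to it --- is finalized by $t_{update} + E + P_t + F\cdot B = t_{update} + d < t$. So the update is finalized strictly before the query is issued.

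Next I would split on the record's replication state, mirroring the read path of Section~\ref{sec:dataplane}. In the replicated ($R$) case the argument is immediate: the on-chain \texttt{KVReplicas} entry is written by the same finalized \texttt{update} transaction, so when \texttt{gGet(k)} executes at time $t$ it reads a replica already reflecting the update. In the non-replicated ($NR$) case the record is served by the SP through a \texttt{deliver} callback, and here I would invoke the ADS freshness guarantee established in Section~\ref{sec:dataplane}: the on-chain \texttt{rootHash} at verification time already commits to the finalized update, and the \texttt{verify} check inside \texttt{deliver} rejects any record inconsistent with this digest. Consequently the SP cannot replay a pre-$t_{update}$ value without failing verification, so the record it successfully delivers must reflect the update; combining both cases yields freshness w.r.t. $d$.

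The main obstacle is the $NR$ case, specifically pinning down the synchronization between the on-chain digest and the notion of a ``finalized update.'' I would need to argue carefully that (i) the digest the DO publishes each epoch commits to every update batched in that epoch, so finalization of the digest transaction is tantamount to finalization of the update itself, and (ii) the ADS freshness property --- stated against an adversarial SP that may forge, replay, omit, or fork --- is strong enough to forbid serving any stale value once the committing digest is finalized on chain. The $R$ case and the latency accounting are then routine by comparison.
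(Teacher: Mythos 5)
Your overall decomposition matches the paper's own proof: the same latency accounting ($E$ for epoch batching, $P_t$ for propagation, $F\cdot B$ for finality, giving finalization of the update transaction before the query), followed by the same case split on the record's replication state, with the $R$ case immediate and the $NR$ case resting on Merkle verification inside \texttt{deliver} against the DO-authenticated on-chain digest. The $R$ case and the accounting are fine as you have them.

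The gap is in the $NR$ case, and it is exactly the point the paper's proof spends its effort on. The freshness definition is an \emph{inclusion} requirement: every record of key $k$ updated before $t-d$ must appear in $qs$. Your argument establishes only that a replayed stale value cannot pass \texttt{verify} --- ``the record it successfully delivers must reflect the update'' --- but it does not rule out the SP answering the \texttt{request} by claiming that no record matching $k$ exists, i.e., omission backed by a forged non-membership/completeness proof. If that were possible, \texttt{deliver} would verify an empty (or incomplete) result and freshness would fail even though no stale value was ever served. The paper closes precisely this hole: the KV records on the SP are kept in a key-sorted layout whose ordering is authenticated by the DO, so a non-membership proof for a key that does exist cannot be forged, and omission is caught by the same Merkle verification (pure non-response being separately excluded by the availability assumption in the trust model). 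Your proposal lists ``omit'' among the adversary's capabilities but then reduces your obstacle (ii) to ``forbid serving any stale value,'' which is the weaker property; to match the stated definition you need the non-membership-forgery argument, not just the anti-replay one.
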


\begin{proof}
Consider a \texttt{gGet(k)} query issued at time $t$ by a DU smart contract. The query result includes a KV record $q$. Assume the data update corresponding to $q$ is produced by the DO at time $t'$. Then, the \texttt{gPuts} call that includes $q$ must be issued by the DO at time no later than $t'+E$. \texttt{gPuts} is propagated to all blockchain nodes at the time $t'+E+Pt$. Thus, the time the data update gets finalized in blockchain is  $t'+E+Pt+F\cdot{}B$. If $t\leq{}t'+E+Pt+F\cdot{}B$, the \texttt{gGet} will return record $q$, despite which of the two cases occurs: 1) If $q$ is replicated on blockchain, the \texttt{gGet} directly accesses the on-chain storage and $q$. 

2) If $q$ is not replicated on the blockchain, the \texttt{gGet} retrieves record $q$ from SP via \texttt{request} and \texttt{deliver}. At this time, $q$ is stored on SP's KV store. If the SP is honest, she can and will include $q$ in the \texttt{deliver} call to the calling smart contract. If the SP is dishonest, she may omit the record $q$. If this occurs, GRuB's Merkle trees on SP can ensure that the verification can not pass, as it is hard to forge a non-membership proof on the record that does exist. (Note that the sorted data layout of the KV store on SP is authenticated by the DO).
\end{proof}

\end{document}